\documentclass[
reprint,
bibnotes,
superscriptaddress,
notitlepage, 
amsfonts,
twocolumn,
maxcitenames=2,
maxbibnames=6,
longbibliography, 
]{revtex4-1}



\usepackage{qcircuit}
\usepackage[dvipdfmx]{graphicx}
\usepackage{amsmath,amssymb,amsthm,mathrsfs,amsfonts,dsfont, mathtools}
\usepackage{subfigure, epsfig}
\usepackage{braket}
\usepackage{bm}
\usepackage{enumerate}
\usepackage{physics}
\usepackage{multirow}
\usepackage{booktabs}

\usepackage{hyperref}
\hypersetup{colorlinks=true,linkcolor=blue,citecolor=blue,urlcolor=blue}

\newtheorem{thm}{Theorem}
\newtheorem{lemma}{Lemma}

\newtheorem{proposition}{Proposition}
\usepackage{algorithm}
\usepackage{algpseudocode}

\newcommand{\mathJ}{\mathcal{J}}

\newcommand{\mathU}{\mathcal{U}}

\algnewcommand\algorithmicinput{\textbf{Input:}}
\algnewcommand\algorithmicoutput{\textbf{Output:}}
\algnewcommand\Input{\item[\algorithmicinput]}%
\algnewcommand\Output{\item[\algorithmicoutput]}%

\newcommand{\black}[1]{\textcolor{black}{#1}}

\newcommand{\ddist}{d_{\diamond}}


\newcommand{\idop}{\mathbb{I}}
\newcommand{\rr}{\mathbb{R}}
\newcommand{\supp}[1]{\text{supp}\left(#1\right)}


\newcommand{\diamondnorm}[1]{\left\|#1\right\|_{\diamond}}
\newcommand{\lpnorm}[2]{\left\|#2\right\|_{#1}}

\renewcommand{\selectlanguage}[1]{}

\begin{document}

\title{Error Crafting in Mixed Quantum Gate Synthesis}


\author{Nobuyuki Yoshioka}
\email{nyoshioka@ap.t.u-tokyo.ac.jp}
\affiliation{Department of Applied Physics, University of Tokyo, 7-3-1 Hongo, Bunkyo-ku, Tokyo 113-8656, Japan}
\affiliation{Theoretical Quantum Physics Laboratory, RIKEN Cluster for Pioneering Research (CPR), Wako-shi, Saitama 351-0198, Japan}
\affiliation{JST, PRESTO, 4-1-8 Honcho, Kawaguchi, Saitama, 332-0012, Japan}
\affiliation{International Center for Elementary Particle Physics, University of Tokyo, 7-3-1 Hongo, Bunkyo-ku, Tokyo 113-0034, Japan}

\author{Seiseki Akibue}
\affiliation{NTT Communication Science Laboratories, NTT Corporation, Atsugi 243-0198, Japan}

\author{Hayata Morisaki}
\affiliation{Graduate School of Engineering Science, Osaka University
1-3 Machikaneyama, Toyonaka, Osaka 560-8531, Japan}

\author{Kento Tsubouchi}
\affiliation{Department of Applied Physics, University of Tokyo, 7-3-1 Hongo, Bunkyo-ku, Tokyo 113-8656, Japan}

\author{Yasunari Suzuki}
\affiliation{NTT Computer and Data Science Laboratories, NTT Corporation, Musashino 180-8585, Japan}


\begin{abstract} 
In fault-tolerant quantum computing, errors in unitary gate synthesis is comparable with noise inherent in the gates themselves.
While mixed synthesis can suppress such coherent errors quadratically, there is no clear understanding on its remnant error, which hinders us from designing a holistic and practical error countermeasure.
In this work, we propose that the classical characterizability of synthesis error can be exploited; remnant errors can be crafted to satisfy desirable properties.
We prove that we can craft the remnant error of arbitrary single-qubit unitaries to be Pauli and depolarizing errors, while the conventional twirling cannot be applied in general.
For Pauli rotation gates, in particular, the crafting enables us to suppress the remnant error up to {\it cubic} order, which results in synthesis with a T-count of $\log_2(1/\varepsilon)$ up to accuracy of $\varepsilon=10^{-9}$.
Our work opens a novel avenue in quantum circuit design and architecture that orchestrates error countermeasures.
\end{abstract}

\maketitle

\let\oldaddcontentsline\addcontentsline
\renewcommand{\addcontentsline}[3]{}

\section*{Introduction}
It is widely accepted that quantum computers with error correction~\cite{shor1995scheme, gottesman1997stabilizer} can be potentially useful in various tasks such as factoring~\cite{shor1999polynomial}, quantum many-body simulation~\cite{abrams1999quantum}, quantum chemistry~\cite{aspuru2005simulated}, and machine learning~\cite{biamonte2017quantum}.
With the tremendous advancements in quantum technology, the earliest experimental demonstrations of quantum error correction are realized with various codes ~\cite{sundaresan2023demonstrating, egan2021faulttolerant, ryan-anderson2021realizationa, bluvstein2024logical, livingston2022experimental, satzinger2021realizing,bluvstein2022quantum, zhao2022realization, acharya2023suppressing}.
While such progress towards fault-tolerant quantum computing (FTQC) is promising,  quantum computers in the coming generation, the {\it early} fault-tolerant quantum computers, would still be subject to noise at the logical level, and hence the error countermeasures developed for \black{noisy intermediate-scale quantum (NISQ) devices}, \black{in particular the quantum error mitigation (QEM) techniques}~\cite{temme2017error, li2017efficient, endo2018practical, koczor_exponential_2021, huggins_virtual_2021, yoshioka2022generalized, vandenBerg2022model, strikis2021learning}, will remain essential to exemplify practical quantum advantage~\cite{piveteau2021error, suzuki2022quantum, yoshioka2024hunting, babbush_encoding_2018, lee_evenmore_2021, sakamoto2023end, beverland2022assessing}.
Meanwhile, there is a stark difference between (early) FTQC and \black{NISQ} regimes that is characterized by the Eastin-Knill theorem~\cite{eastin2009restrictions}; \black{currently known FTQC schemes are based on universal gate sets that consist of discrete gates.}
This indicates that the gate synthesis yields coherent error whose effect is comparable to incoherent errors triggered by noise in the hardware.

In contrast to incoherent errors which inevitably necessitate entropy reduction via measurement and feedback~\cite{aharonov1996limitations, tsubouchiUniversalCostBound2023, takagi2023universal, quek2022exponentially}, there is no corresponding fundamental limitation for coherent errors.
In fact, several previous works have found that algorithmic errors can be dealt with more efficiently than naive expectation~\cite{hastings_turning_2017, campbell_shorter_2017, wallman2016noise, Kliuchnikov2023shorterquantum, akibue2023probabilistic, akibue2024probabilistic}. 
It was pointed out by Refs.~\cite{hastings_turning_2017, campbell_shorter_2017} that, \black{when one has access to a Pauli rotation exposed to over and under rotations, we can simply take a classical mixture of two to achieve quadratic suppression of the error.} 
Such findings have invoked the proposal of a mixed synthesis scheme that approximately halves the consumption of magic gates under single-qubit unitary synthesis~\cite{Kliuchnikov2023shorterquantum, akibue2024probabilistic}.
It must be noted that, these existing works merely focus on the accuracy of the mixed synthesis, while one must take care of the compatibility between other error countermeasures to further suppress the errors. \black{This is prominent when one desires to \black{obtain unbiased estimator} via the QEM, which requires a deep knowledge of the error channel.}
To our knowledge, however, there is no firm understanding on the remnant error channel nor attempt to control the profile of them.

In this work, we make a first step to fill this gap by introducing a notion of {\it error crafting}, i.e., perform the mixed synthesis that results in a remnant error channel with a desired property, for single-qubit unitaries synthesized under the framework of Clifford+T formalism.
We rigorously prove that one can design a synthesis protocol that ensures the remnant error is constrained to Pauli error, which is crucial to perform efficient error mitigation techniques~\cite{temme2017error, li2017efficient, suzuki2022quantum}.
\black{We further provide numerical evidence that, in the case of Pauli rotation gates, one can reliably manipulate the remnant error to be biased. This allows us to perform error detection at the logical level which achieves quadratically small sampling overhead compared to the existing lower bound of sample complexity that does not rely on mid-circuit measurement~\cite{tsubouchiUniversalCostBound2023}. We furthermore extend our framework to perform error crafting using CPTP channels obtained from error correction at the logical layer, and show that the remnant error can be suppressed up to cubic order.
As a result, we can implement Pauli rotation gates of accuracy $\varepsilon$ with T-count of $\log_2(1/\varepsilon)$ up to $\varepsilon=10^{-9}$, which is expected to be sufficient for various medium-scale FTQC beyond state-of-the-art classical computation~\cite{babbush_encoding_2018, lee_evenmore_2021, yoshioka2024hunting, beverland2022assessing}.}

We remark that the operation of twirling shares the spirit, since it also transforms a noise channel into another channel with better-known properties via randomized gate compilation.
\black{However, there are two severe limitations in twirling: first, it agnostically smears out the characteristics of the channel, and second, \black{it cannot ensure the noise maps to be stochastic when one considers non-Clifford operations beyond the third level of Clifford hierarchy (see Table~\ref{tab:applicability} and Supplementary Note~\ref{sec:twirling-limitation} for details).  
We note that existing works on randomized compiling report suppression of the noise~\cite{santos2024pseudo, odake2024robust, wallman2016noise}, while we are not aware of any work that guarantees the structure of the noise channel.}
This is in sharp contrast to our work, since our objective is to fully utilize the information on the error profile of general unitary synthesis that can be characterized completely and efficiently via classical computation.}


\begin{table}[t]
    \centering
    \begin{tabular}{c|c|c|c}
         & Pauli twirl & Clifford twirl & Crafting  \\
         \hline
         T gate& $\checkmark$ & Partially & $\checkmark$ \\
         Z rotation & Partially & Partially & $\checkmark$\\
         Arbitrary & $\times$ & $\times$ & $\checkmark$
    \end{tabular}
    \caption{\black{Applicability of noise shaping techniques for unitary synthesis error in single-qubit non-Clifford gates. 
    The row ``T gate" represents the applicability for all single-qubit gates from the third level of the Clifford hierarchy. Of course, the synthesis error is not present if one includes the T gate in the implementable operations, since there is no need for crafting.
    The row ``Z rotation" represents the applicability for Pauli rotation gates. Here, ``Partially" means that we can employ a symmetric Clifford group that commutes with the target unitary to twirl the noise within each irreducible representation but not the entire space~\cite{mitsuhashi2023clifford, tsubouchi2024symmetric}. For instance, over- and under-rotation errors cannot be twirled; one must rely on crafting.}}
    \label{tab:applicability}
\end{table}

\section*{Results}
\subsection*{Unitary, mixed, and error-crafted synthesis}
We aim to implement a target single-qubit unitary $\hat{U}$ with its channel representation given by $\hat{\mathcal{U}}$. We hereafter distinguish the unitaries before and after synthesis by the presence and absence of a hat.
Due to the celebrated Eastin-Knill's theorem~\cite{eastin2009restrictions}, there is no fault-tolerant implementation of continuous gate sets, and therefore we cannot exactly implement the target in general. 
\black{We alternatively aim to approximate $\hat{U}$ by decomposing it to a sequence of implementable operations.
Such a task is called {\it gate synthesis}.}

\black{One of the most common approaches for gate synthesis is the unitary synthesis, namely to decompose the target unitary into a sequence of implementable unitary gate sets.  
For instance, unitary synthesis into the Clifford+T gate set is known to be possible with classical computation time of $O({\rm polylog}(1/\epsilon))$ for arbitrary single-qubit Pauli rotation assuming a factoring oracle~\cite{ross2016optimal} and $O({\rm poly}(1/\epsilon))$ for a general SU(2) gate~\cite{fowler2011constructing, bocharov2012resource}.
\black{Another measure for the efficiency of synthesis is} the number of T gates, or the T-count, since T gates require cumbersome procedures such as magic state distillation and gate teleportation. 
\black{In this regard,} the existing methods \black{nearly} saturates the information-theoretic lower bound regarding the T-count of $3 \log_2(1/\epsilon)$ except for rare exceptions that require $4 \log_2(1/\epsilon)$\black{~\cite{kliuchnikov2013fast, ross2016optimal, fowler2011constructing, bocharov2012resource}}.
}

\black{
We may also consider mixed synthesis, which takes a classical mixture of several different outputs from the unitary synthesis algorithm. 
Concretely, we generate a set of slightly shifted unitaries $\{U_j\}$ which are $\epsilon$-close approximations of the target unitary $\hat{U}$ in terms of diamond distance as $\ddist(\hat{\mathcal{U}}, \mathcal{U}_j) \leq \epsilon$ (see Supplementary Note~\ref{sec:formalism_prob_synth} for the definition of diamond distance).}
It is known that the mixed synthesis allows quadratic suppression of the error, when we appropriately generate a set of synthesized  unitaries $\{\mathU_j\}$ and choose the probability distribution over them:

\begin{lemma}(informal~\cite{akibue2024probabilistic})
Let $\hat{\mathcal{U}}$ be a target unitary channel with $\{\mathcal{U}_j\}$ constituting its $\epsilon$-net in terms of the diamond distance. Then, one can find the optimal weight $\{p_j\}$ such that the diamond distance is bounded as
\begin{eqnarray}\label{eq:quad_suppression}
\ddist\bigg(\hat{\mathcal{U}}, \sum_j p_j \mathcal{U}_j \bigg) \leq \epsilon^2,
\end{eqnarray}
by solving the following minimization problem via semi-definite programming:
\begin{eqnarray}
    p = \underset{p} {\operatorname{argmin}} ~\ddist
    \bigg(\hat{\mathcal{U}}, \sum_j p_j \mathcal{U}_j \bigg).\label{eq:minimization}
\end{eqnarray}
\end{lemma}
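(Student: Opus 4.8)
\emph{Strategy.} Rotate the target to the identity channel, isolate the coherent $O(\epsilon)$ part of each residual error as a quantity \emph{linear} in the generator of the corresponding residual rotation, and show that a suitably weighted probabilistic mixture annihilates that linear part, so that what survives is genuinely second order and, in fact, of Pauli/depolarizing form. Concretely, by unitary invariance of the diamond norm put $\mathcal{V}_j:=\hat{\mathcal{U}}^{-1}\circ\mathcal{U}_j$, so that $\ddist(\hat{\mathcal{U}},\sum_j p_j\mathcal{U}_j)=\ddist(\mathcal{I},\sum_j p_j\mathcal{V}_j)$ and $\ddist(\mathcal{I},\mathcal{V}_j)\le\epsilon$ for every $j$. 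For a single qubit this forces $V_j=e^{iH_j}$ with $H_j$ Hermitian and, up to an irrelevant global phase, traceless; its eigenvalues $\pm\theta_j/2$ satisfy $\ddist(\mathcal{I},\mathcal{V}_j)=2|\sin(\theta_j/2)|\le\epsilon$, hence $|\theta_j|\le 2\arcsin(\epsilon/2)=\epsilon+O(\epsilon^3)$. Regarding each channel as a superoperator, $\mathcal{V}_j=\exp(i\,\mathrm{ad}_{H_j})$ with $\mathrm{ad}_{H_j}(X)=[H_j,X]$, so that, with $K:=\sum_j p_j H_j$,
\begin{equation}
\sum_j p_j\mathcal{V}_j=\mathcal{I}+i\,\mathrm{ad}_{K}-\tfrac12\sum_j p_j\,\mathrm{ad}_{H_j}^2+O(\epsilon^3).
\end{equation}

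The decisive step is to choose the weights so that $K=0$, which removes the entire $O(\epsilon)$ contribution. This is feasible precisely when $0\in\mathrm{conv}\{H_j\}$, which holds whenever the net is rich enough that the residual generators surround the origin of $\mathfrak{su}(2)$ --- a property one can always ensure, since unitary synthesis can place $\epsilon$-approximants of $\hat{U}$ in every direction around it. For such a $p$ the residue equals $-\tfrac12\sum_j p_j\,\mathrm{ad}_{H_j}^2+O(\epsilon^3)$; on the traceless sector $\mathrm{ad}_{H_j}^2$ is $\theta_j^2$ times the orthogonal projector onto the plane perpendicular to the rotation axis of $V_j$, so, up to the $O(\epsilon^3)$ tail, the residual channel is a convex combination of generalized dephasing channels --- precisely the Pauli/depolarizing profile promised in the abstract --- whose Bloch matrix differs from the identity by at most $\tfrac12\sum_j p_j\theta_j^2\le\tfrac12\epsilon^2+O(\epsilon^4)$ in operator norm. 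Since the diamond distance of a Pauli channel from $\mathcal{I}$ is at most $\tfrac32$ times the operator norm of its Bloch-matrix deviation, one gets $\ddist(\mathcal{I},\sum_j p_j\mathcal{V}_j)\le\tfrac34\epsilon^2+O(\epsilon^3)\le\epsilon^2$ for $\epsilon$ below a fixed constant. As this $p$ is feasible for \eqref{eq:minimization}, its optimal value is a fortiori at most $\epsilon^2$, which is \eqref{eq:quad_suppression}.

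Finally, that \eqref{eq:minimization} is a semidefinite program follows from Watrous's SDP characterization of the diamond norm: $\ddist(\mathcal{A},\mathcal{B})=\lVert\mathcal{A}-\mathcal{B}\rVert_{\diamond}$ is the value of an SDP whose data is the Choi operator of $\mathcal{A}-\mathcal{B}$, and the Choi operator of $\sum_j p_j\mathcal{U}_j$ is affine in $p$ while $p$ ranges over the probability simplex, which is linear-matrix-inequality representable; hence the joint minimization over $p$ together with the auxiliary variables of that SDP is again an SDP. I expect the main obstacle to be the existence of a feasible $p$ that kills the first-order term, i.e.\ establishing $0\in\mathrm{conv}\{H_j\}$: this is exactly where the ``richness'' of the $\epsilon$-net enters, and it is the conceptual heart of the quadratic gain. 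A secondary subtlety is that $\ddist(\cdot,\cdot)$ is not differentiable, so the improvement from $\epsilon$ to $\epsilon^2$ cannot be extracted from a naive Taylor expansion of the distance itself; one must expand the channel --- a smooth object --- and only afterwards bound the diamond norm of the residue, as done above.
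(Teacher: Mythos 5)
The paper does not actually prove this lemma; it is cited as an informal statement from Ref.~\cite{akibue2024probabilistic}, and the closest material in the present paper is Proposition~\ref{prop:SDP} (which establishes the SDP formulation, matching your final paragraph) together with the magic-basis machinery in Supplementary Note~\ref{sec:formalism_prob_synth} used in the proofs of Theorems~\ref{thm2_new} and~\ref{thm1_new}. So your proof must be judged on its own merits, and compared against that magic-basis route rather than against a proof of Lemma~1 in this text.

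Your central idea --- reduce to the identity via unitary invariance, isolate the linear-in-$H_j$ term, and pick $p$ so that $\sum_j p_j H_j = 0$ --- is exactly the right conceptual mechanism, and your remark that the smooth object to Taylor-expand is the channel, not the (non-smooth) diamond distance, is well placed. However, two things keep this from being a clean proof of the stated bound. First, the exact-formula route is both simpler and tighter than the perturbative one, and the paper's own formalism hands it to you for free: write the Choi state of each $\mathcal{V}_j = \hat{\mathcal{U}}^{-1}\circ\mathcal{U}_j$ in the magic basis as $\tfrac12[\mathcal{J}(\mathcal{V}_j)]_{MB} = r^{(j)}r^{(j)T}$ with $r^{(j)} = (\sqrt{1-\delta_j^2},\,\delta_j\hat v^{(j)})$, $\delta_j = \ddist(\mathcal{I},\mathcal{V}_j)\le\epsilon$. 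For single-qubit mixed-unitary channels $\tfrac12\lVert\cdot\rVert_\diamond$ equals $\tfrac12\lVert\cdot\rVert_1$ on Choi matrices, so once the weights are chosen to annihilate the off-block $\sum_j p_j\sqrt{1-\delta_j^2}\,\delta_j\hat v^{(j)}$ (precisely your first-order cancellation), the residual $\sum_j p_j r^{(j)}r^{(j)T} - e_1e_1^T$ is traceless block-diagonal and its trace norm is exactly $2\sum_j p_j\delta_j^2$, giving $\ddist = \sum_j p_j\delta_j^2 \le \epsilon^2$ with no $O(\epsilon^3)$ tail and no smallness assumption on $\epsilon$. Second, as written your constants do not hold up to scrutiny: for a single-qubit rotation $V=e^{i\theta\hat n\cdot\sigma/2}$ the diamond distance to the identity is $\lvert\sin(\theta/2)\rvert$, not $2\lvert\sin(\theta/2)\rvert$; with the corrected normalization the perturbative chain $|\theta_j|\le 2\arcsin\epsilon$ gives $\theta_j^2/4 = \arcsin^2\epsilon > \epsilon^2$, so the Taylor route only yields $\ddist\le\epsilon^2 + O(\epsilon^4)$ rather than a bona fide $\le\epsilon^2$. (Your stated $\tfrac34\epsilon^2$ is recovered only because the two normalization errors happen to cancel.) Finally, the assertion that the second-order residue is "precisely the Pauli/depolarizing profile" overreaches what Lemma~1 claims; that is what Theorem~\ref{thm1_informal} establishes, and it needs the careful feasibility argument of Supplementary Note~\ref{sec:guarantee_crafting}, not a Taylor expansion. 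In short: your feasibility observation ($0\in\mathrm{conv}\{H_j\}$, guaranteed by the $\epsilon$-net hypothesis) is the right hinge, but replace the perturbative estimate by the exact Choi/magic-basis computation to get the clean inequality.
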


\black{As an extension of existing synthesis techniques, here we propose {\it error-crafted synthesis}, or {\it error crafting}. In short, this can be understood as constraining the remnant error of the synthesis, depending on the subsequent error countermeasure (such as QEM or error detection) one wishes to perform. We formally define the remnant error channel as
\begin{eqnarray}
    \mathcal{E}_{\rm rem}(\cdot) = \left(\sum_j p_j \Lambda_j\right) \circ \hat{\mathcal{U}}^{-1}(\cdot) = \sum_{ab} \chi_{ab} P_a \cdot P_b,
\end{eqnarray}
where $P_a \in \{{\rm I}, {\rm X}, {\rm Y}, {\rm Z}\}$ and $\{\chi_{ab}\}$ gives the chi representation of the remnant error. We also allow the implementable channel to be a non-unitary CPTP map as $\{\Lambda_j\},$ which we find to be useful when we incorporate measurement and feedback as discussed later.
The aim of the crafted synthesis is to control $\{\chi_{ab}\}$ by optimizing the probability distribution $\{p_j\}$ for appropriately generated $\{\Lambda_j\}$.}

\black{
Let us remark that we require the synthesis protocol to be distinct from the QEM stage, in a sense that we do not allow synthesis to increase the sampling overhead, the multiplication factor in the number of circuit execution.
Also, we require that the gate complexity is nearly homogeneous even if we include feedback operations, in order to prevent a situation where parallel application of multiple synthesized channels is stalled due to a fallback scheme~\cite{bocharov2015efficient, Kliuchnikov2023shorterquantum}. 
}

\begin{figure}[tbp]
    \centering    \includegraphics[width=0.85\linewidth]{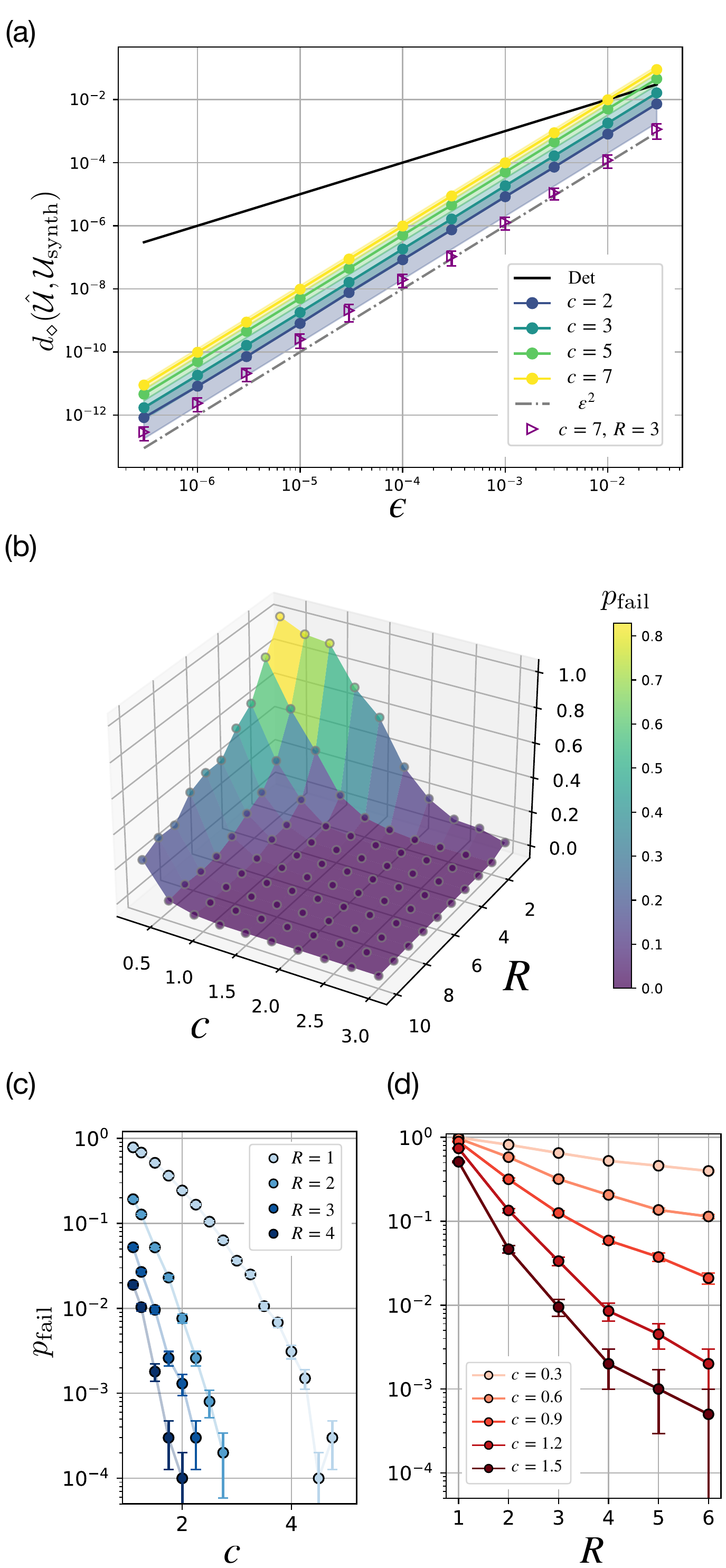}
    \caption{{\bf Numerical analysis on error-crafted synthesis for the Pauli constraint.}
    (a) Diamond distance $\ddist(\hat{\mathU}, \mathU_{\rm synth})$ between the target single-qubit Haar random unitary $\hat{\mathU}$ and the synthesized channel $\mathU_{\rm synth} \coloneqq\sum_j p_j \mathU_j$. The black real line indicates the results from unitary synthesis by the Ross-Selinger algorithm, while the filled dots are results from mixed synthesis with shift factors of $c=2, 3, 5, 7.$ When we increase the number of synthesized unitaries by a factor of $R=3$~\black{(see main text for detailed definition)} for $c=7$, shown by unfilled triangles, the remnant synthesis error is suppressed below the bound in Eq.~\eqref{eq:proberrorPauli_main} and approaches the value of $\epsilon^2$, which matches the upper bound for the non-constrained case as in Eq.~\eqref{eq:quad_suppression}. 
    The plots are averaged over 200 \black{random instances of target unitaries}.
    (b) Surface plot of failure rate $p_{\rm fail}$ of mixed synthesis under the Pauli constraint, averaged over 200 \black{instances} for $\epsilon=10^{-4}.$ 
    Scaling with $c$ and $R$ are shown in (c), (d), where the number of \black{instances} is 10000 and 2000, respectively.
    }
    \label{fig:dnorm_quadratic}
\end{figure}

\subsection*{Crafting remnant error to be Pauli channel}
While the quadratic suppression~\eqref{eq:quad_suppression} by mixed synthesis is already beneficial for practical quantum computing, the performance of many error counteracting protocols can be enhanced when we craft the remnant error $\mathcal{E}_{\rm rem}$.
For instance, let us consider the probabilistic error cancellation (PEC), which is one of the most well-known error mitigation techniques that estimate expectation values of physical observables by the quasi-probabilistic implementation of error channel inversion~\cite{temme2017error, vandenberg2023probabilistic}.
One can show that it suffices to take Clifford operations and Pauli initialization to constitute a universal set to invert arbitrary single-qubit error channels~\cite{endo2018practical, takagi2021optimal}, while the implementation is significantly simplified when $\mathcal{E}_{\rm rem}$ is a Pauli channel.
This is because the PEC can be performed merely via updates on Pauli frames, which is purely done by classical processing of stabilizer measurements~\cite{suzuki2022quantum}.
As another example, when we consider the rescaling technique under white-noise approximation~\cite{arute2019quantum, dalzell2024random}, which is a provably cost-optimal way of error mitigation, \black{we can drastically improve the accuracy of the approximation when} the errors are free from coherent components (see Supplementary Note~\ref{sec:wn_coherent} for details).

In this regard, we make a further step and ask the following question: {\it can we guarantee both quadratic error suppression and success of crafting the remnant error to be Pauli channel?}
\black{This is equivalent to solving the following optimization problem:
\begin{align}
\label{eq:Paulierror_main}
{\rm minimize}~ 
d_{\diamond}\left(\hat{\mathcal{U}}, \sum_j p_j \mathcal{U}_j\right)~
{\rm s.t.}~\mathcal{E}_{\rm rem}(\cdot)=\sum_{a}\chi_{aa} P_a\cdot P_a, 
\end{align}
where $\chi_{aa}\geq0$ is required for all $a$.}
\black{We emphasize that this cannot be achieved by naively applying twirling techniques, since randomization for coherent errors following non-Clifford unitary demands non-Clifford operations, which are not error-free in FTQC \black{(see Supplementary Note~\ref{sec:twirling-limitation})}.}
Our strategy is to prepare shift unitaries $\{\mathcal{V}_j^{c\epsilon}\}_j$ that can be used to generate a set of $c\epsilon$-close unitaries such that the feasibility of the minimization problem~\eqref{eq:Paulierror_main} is guaranteed, even when unitary synthesis induces error of $\epsilon$.
By explicitly constructing such a set of shift unitaries, we rigorously answer to the above-raised question as the following:
\begin{thm}\label{thm1_informal}
There exist shift factor $c$, a positive number $\epsilon_0$, and a set of shift unitaries $\{\mathcal{V}_j^{c\epsilon}\}_{j=1}^7$
 such that,
for any given single-qubit unitary $\hat{\mathcal{U}}$ and for any $\epsilon\in(0,\epsilon_0]$, if we synthesize the shifted target unitaries to obtain $\{\mathcal{U}_j\} = \{\mathcal{A}(\mathcal{V}_j^{c\epsilon} \circ \hat{\mathU})\} $ so that \black{$\ddist(\mathcal{V}_j^{c\epsilon}\circ \hat{\mathcal{U}}, \mathcal{U}_j)\leq \epsilon$ }is satisfied under the synthesis algorithm $\mathcal{A}$,
then there exists a mixture of $\{\mathcal{U}_j\}$ that induces a Pauli channel as the effective synthesis error, with the accuracy being
\begin{equation}
\label{eq:proberrorPauli_main}
 (c-1)^2\epsilon^2 \leq 
 \ddist\bigg(\hat{\mathcal{U}}, \sum_j p_j \mathcal{U}_j \bigg)
 \leq (c+1)^2 \epsilon^2.
\end{equation}
\end{thm}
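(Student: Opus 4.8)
The plan is to strip the target unitary out of the problem, reduce to a real-linear feasibility question for the remnant channel, build the seven shift unitaries from a spherical $2$-design, and then control the synthesis error by a perturbation argument.

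\emph{Reduction.} Because $\ddist$ is invariant under composition with unitary channels, $\ddist(\mathcal{V}_j^{c\epsilon}\circ\hat{\mathcal{U}},\mathcal{U}_j)=\ddist(\mathcal{V}_j^{c\epsilon},\mathcal{W}_j)$ with $\mathcal{W}_j:=\mathcal{U}_j\circ\hat{\mathcal{U}}^{-1}$, and $\mathcal{E}_{\rm rem}=\sum_jp_j\mathcal{W}_j$, so $\hat{\mathcal{U}}$ disappears entirely. It therefore suffices to show: for any unitary channels $\{\mathcal{W}_j\}_{j=1}^{7}$ with $\ddist(\mathcal{W}_j,\mathcal{V}_j^{c\epsilon})\le\epsilon$ there are probabilities $\{p_j\}$ making $\sum_jp_j\mathcal{W}_j$ a Pauli channel whose $\ddist$ from $\mathcal{I}$ lies in $[(c-1)^2\epsilon^2,(c+1)^2\epsilon^2]$. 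Write $\mathcal{W}_j=\mathrm{Ad}_{W_j}$ with $W_j=\cos\tfrac{|\vec w_j|}{2}\,I-i\sin\tfrac{|\vec w_j|}{2}\,\hat w_j\!\cdot\!\vec\sigma$; the $\epsilon$-closeness gives $\vec w_j=\vec v_j+\vec\delta_j$, where $\vec v_j$ generates $\mathcal{V}_j^{c\epsilon}$ (of magnitude $\asymp c\epsilon$) and $|\vec\delta_j|\lesssim\epsilon$.

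\emph{The structural fact and the choice of shifts.} The chi matrix of $\mathcal{W}_j$ is the rank-one $\chi^{(j)}=|c^{(j)}\rangle\langle c^{(j)}|$ with $c^{(j)}=\big(\cos\tfrac{|\vec w_j|}{2},-i\sin\tfrac{|\vec w_j|}{2}\hat w_j\big)$; since each $\mathcal{W}_j$ is near the identity, $c^{(j)}_0$ is real while $c^{(j)}_{1,2,3}$ are purely imaginary, so $\chi^{(j)}_{0k}\in i\mathbb{R}$ and $\chi^{(j)}_{kl}\in\mathbb{R}$ for $k,l\in\{1,2,3\}$. Hence the off-diagonal block of $\chi=\sum_jp_j\chi^{(j)}$ has only six real degrees of freedom — $\mathrm{Im}\,\chi_{0k}=\tfrac12(\sum_jp_j\vec w_j)_k+O(\epsilon^3)$ and $\chi_{kl}=\tfrac14(\sum_jp_j\vec w_j\vec w_j^{\top})_{kl}+O(\epsilon^4)$ — and ``$\mathcal{E}_{\rm rem}$ is Pauli'' is exactly the vanishing of these six numbers, each of which is \emph{linear} in $p$. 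Together with $\sum_jp_j=1$ this is a real-linear $7\times7$ system $B(\{\mathcal{W}_j\})p=e_7$, which is why seven shifts is the natural count. I would take $\{\mathcal{V}_j^{c\epsilon}\}_{j=1}^7$ to be rotations by the common angle $\theta$ with $\ddist(\mathcal{V}_j^{c\epsilon},\mathcal{I})=c\epsilon$ about seven unit vectors $\{\hat n_j\}$ forming a \emph{generic positive-weight spherical $2$-design}: then $\sum_jp^\star_j\hat n_j=0$ and $\sum_jp^\star_j\hat n_j\hat n_j^{\top}\propto\idop$ for strictly positive design weights $p^\star$ (so the nominal $\mathcal{E}_{\rm rem}$ is a depolarizing channel, hence Pauli), while genericity makes the rescaled coefficient matrix $S:=\big[(\hat n_j)_k\,;\,(\hat n_j)_k(\hat n_j)_l\,;\,1\big]$ invertible and well-conditioned, and the nominal $B^{(0)}$ factors as $DS$ with $D=\mathrm{diag}(\tfrac{c\epsilon}{2}\mathbf 1_3,\tfrac{c^2\epsilon^2}{4}\mathbf 1_3,1)$ up to higher order.

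\emph{Robust feasibility.} For the perturbed data $B=B^{(0)}+\Delta B$, the $\mathrm{Im}\,\chi_{0k}$ rows move by $O(\epsilon)$ and the $\chi_{kl}$ rows by $O(c\epsilon^2)$; but the $2$-design identities enforce the cancellations $\sum_jp^\star_j\vec w_j=\sum_jp^\star_j\vec\delta_j$ and $\mathrm{offdiag}\big(\sum_jp^\star_j\vec w_j\vec w_j^{\top}\big)=O(c\epsilon^2)$, so $D^{-1}\Delta B\,p^\star=O(1/c)$ and the unique solution obeys $p=p^\star-S^{-1}D^{-1}\Delta B\,p^\star+O(\epsilon/c)=p^\star+O(1/c)$. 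Fixing $c$ large enough (in terms only of $\min_jp^\star_j$ and the conditioning of $S$) and then $\epsilon_0$ small enough keeps $p$ strictly inside the simplex; moreover $\chi_{00}=\sum_jp_j\cos^2\tfrac{|\vec w_j|}{2}\approx1$ and $\chi_{kk}=\sum_jp_j\sin^2\tfrac{|\vec w_j|}{2}(\hat w_j)_k^2\ge0$, so $\mathcal{E}_{\rm rem}$ is a legitimate Pauli channel.

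\emph{Accuracy, and the main obstacle.} For a Pauli channel a direct computation gives $\ddist(\mathcal{I},\mathcal{E}_{\rm rem})=\sum_{k\ge1}\chi_{kk}=\sum_jp_j\sin^2\tfrac{|\vec w_j|}{2}=\sum_jp_j\,\ddist(\mathcal{I},\mathcal{W}_j)^2$ (using $\ddist(\mathcal{I},\mathcal{W}_j)=\sin\tfrac{|\vec w_j|}{2}$ and $\sum_k(\hat w_j)^2_k=1$); since $\ddist(\mathcal{I},\mathcal{W}_j)\in[(c-1)\epsilon,(c+1)\epsilon]$ by the triangle inequality and $\sum_jp_j=1$, the bound~\eqref{eq:proberrorPauli_main} follows, and $\ddist(\hat{\mathcal{U}},\sum_jp_j\mathcal{U}_j)=\ddist(\mathcal{I},\mathcal{E}_{\rm rem})$ closes the argument. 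The genuine difficulty is the construction plus the quantitative perturbation bound: one must exhibit seven axes and weights for which the nominal system is feasible with $p^\star$ strictly interior \emph{and} the rescaled matrix $S$ well-conditioned, and then show that the synthesis error displaces $p$ by only $O(1/c)$ rather than $O(1)$ — this hinges entirely on the $2$-design cancellations and on carefully tracking the two distinct $\epsilon$-orders (order $\epsilon$ for the coherent/first-moment constraints, order $\epsilon^2$ for the off-diagonal second-moment constraints) through the estimate, and it is also where the admissible values of $c$ and $\epsilon_0$ get pinned down.
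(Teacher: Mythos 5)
Your proposal is essentially the paper's proof reorganized in different language: the paper works in the magic basis where $\mathcal{U}_j\circ\hat{\mathcal{U}}^{-1}$ is a real rank-one $4\times 4$ matrix $r^{(j)}r^{(j)T}$, so the Pauli constraint is exactly the vanishing of its six off-diagonal entries; it picks seven explicit axes $\hat{\vec v}^{(j)}$ (not literally a 2-design, which would be overkill for the Pauli constraint alone), applies the same diagonal rescaling $D$ separating the $c\epsilon$ and $(c\epsilon)^2$ scales, verifies by hand that the nominal solution $(D\hat A_6)^{-1}D\hat m^{(7)}$ is strictly negative, and closes the robustness step with a uniform-continuity / $\lim$-$\max$ interchange argument rather than your conditioning bound. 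The content and structure (six linear constraints, seven shifts, two-scale rescaling, strict-interior nominal feasibility surviving an $O(1/c)$ perturbation) coincide, with only the parametrization and the specific continuity machinery differing.
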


\black{Remarkably, this theorem states that} we can craft the error without assuming anything but the accuracy for the synthesis algorithm $\mathcal{A}$.
It is also worth noting that the minimization problem~\eqref{eq:Paulierror_main} can be solved by the linear programming.

To prove Theorem~\ref{thm1_informal}, we explicitly construct the shift unitaries $\{\mathcal{V}_{j}^{c\epsilon}\}_j$, and show that the feasibility of the solution is robust even under additional synthesis error of $\epsilon.$ Once such a solution is ensured, the upper bound of Eq.~\eqref{eq:proberrorPauli_main} follows directly from the formulation via the minimization over diamond distance and the fact that $\{\mathcal{U}_j\}_{j=1}^7$ is $(c+1)\epsilon$-close to $\hat{\mathU}$ with respect to the diamond distance. Also, the lower bound follows from the fact that the diamond distance between the compiled shifted unitaries $\{\mathcal{U}_j\}$ and the target unitary $\hat{\mathU}$ are at least $(c-1)\epsilon$. 
We guide readers to Supplementary Note~\ref{sec:guarantee_crafting} for details of the proof.

Since the theorem only states the existence of finite $c$, we rely on numerical simulation to verify the practical usage with finite $c$ values.
For this purpose, we synthesize a single-qubit Haar random unitary gate by first performing the axial decomposition into three Pauli rotations, each of which passed to the synthesis algorithm proposed by Ross and Selinger~\cite{ross2016optimal}. 
As shown in Fig.~\ref{fig:dnorm_quadratic}(a), the mixed synthesis indeed yields error-bounded solutions that achieve quadratic suppression compared to the unitary synthesis.
Meanwhile, we find that the Pauli constraint is violated when we take small shift factors $c$. This is not contradicting Theorem~\ref{thm1_informal}, since the theorem only states the existence of $c$ which always yield successful results. 
It must be noted that, as shown in Fig.~\ref{fig:dnorm_quadratic}(b) and (c), the occurrence of such a violation is suppressed exponentially by increasing the value of $c$. We observe that it suffices to take $c\sim3.5, 4.5$ to yield feasible solution with failure probability of $10^{-2}, 10^{-3}$, respectively, and naturally deduce that $c\sim 5.5, 6.5$ is required to reach $10^{-4}, 10^{-5}.$
We remark that, while we have employed the axial decomposition of the SU(2) gates for the results shown here, we observe similar suppression when we utilize a direct synthesis algorithm~\cite{morisaki2024}.

\black{Since increasing the shift factor $c$ increases the prefactor of the remnant error by $c^2$, we find it beneficial to introduce another practical strategy that enhances the feasibility of the solution. }
Here we provide a simple yet powerful technique; by generating the shift unitaries for $R$ different shift factors $c/R, 2c/R, ..., c$ and solving the problem~\eqref{eq:Paulierror_main} with $R$-fold larger number of basis set, we can both suppress the failure of error-crafted synthesis (see Fig.~\ref{fig:dnorm_quadratic}(b) and (d)) and the remnant error.
As shown in Fig.~\ref{fig:dnorm_quadratic}(a), when we take $c=7$ and $R=3$,  we find that the diamond distance is suppressed as $\epsilon^2$ for a broad range of the synthesis accuracy $\epsilon$, which is a significant improvement from the case of $R=1$ with $\sim c^2 \epsilon^2$. 
Note that this is extremely beneficial when we wish to eliminate such remnant errors using QEM methods. 
To be concrete, the sampling overhead is given as $(1+4p)^L$ under error rate of $p$ with $L$ gates when one employs the PEC method, and hence the reduction of the remnant error from $p_{\rm rem}\sim c^2\epsilon^2$ to $ p_{\rm rem}\sim \epsilon^2$ results in exponential difference with $L$.
For instance, when we implement $\{1, 2, 3\}\times10^8$ rotation Pauli gates with  $\epsilon=10^{-5}$, the sampling overhead is reduced by factors of 6.8, 47, and 317, respectively.

\begin{figure}[bh]
    \centering    \includegraphics[width=0.85\linewidth]{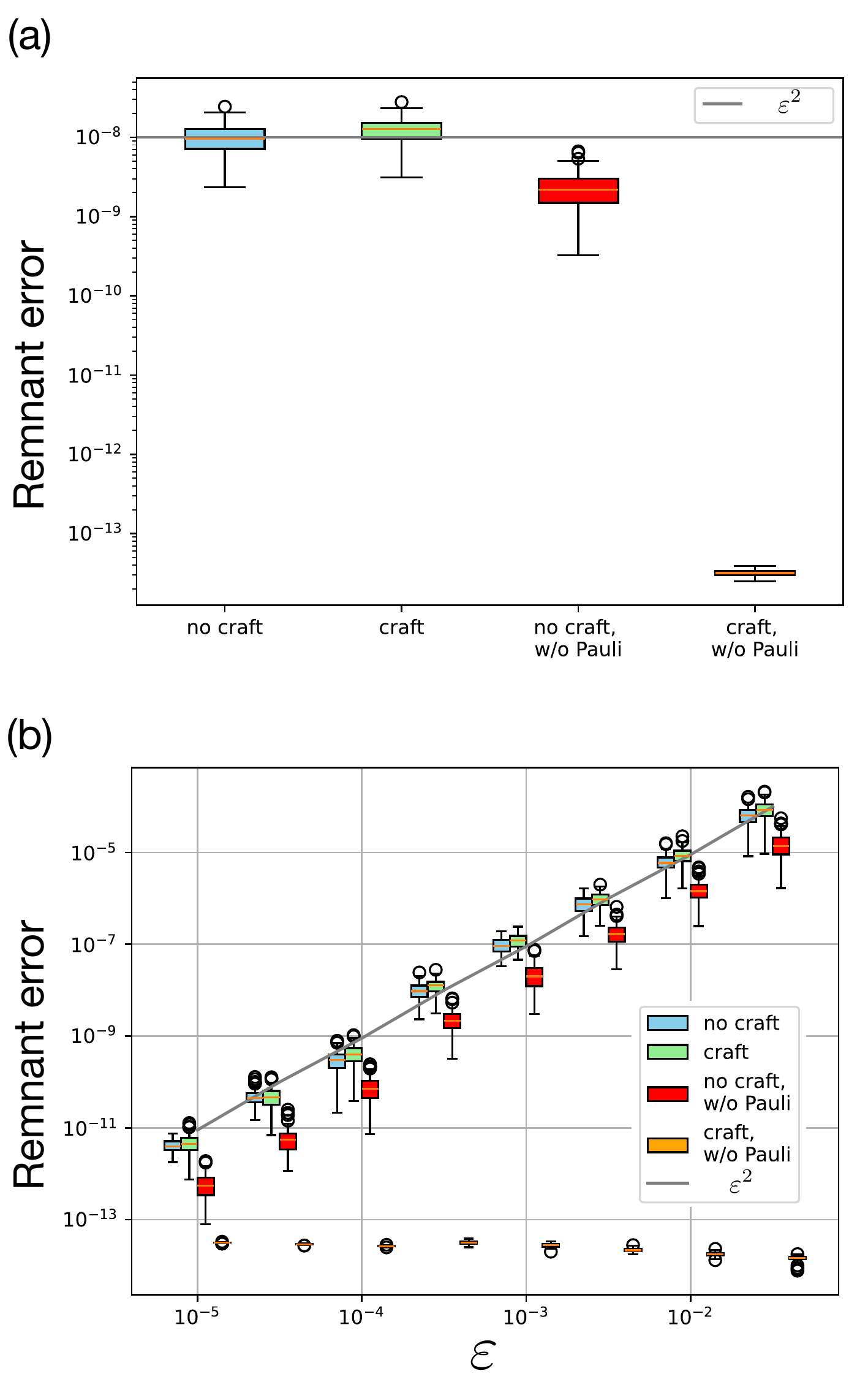}
    \caption{\black{{\bf Comparison of the magnitude of remnant errors $d_{\diamond}(\mathcal{E_{\rm rem}}, \mathcal{I})$.} (a) Box plots are provided for the magnitude of remnant errors of 100 random instances of Z rotation gate, with unitary synthesis done for $\epsilon=10^{-4}$. The left two box plots show $d_{\diamond}(\mathcal{E}_{\rm rem}, \mathcal{I})$ with and without crafting for Pauli error, and the right two box plots indicate the magnitude of the non-Pauli contribution. (b) Scaling of the remnant errors with $\epsilon.$
    In both panels, shift unitaries are generated with $c=3, R=5.$
    }
    }
    \label{fig:crafted-vs-noncrafted-main}
\end{figure}

\black{We remark that crafting is actually essential to guarantee the remnant channel to be a Pauli channel. In Fig.~\ref{fig:crafted-vs-noncrafted-main}, we compare the remnant errors in crafted and uncrafted mixed synthesis.
We find that, as expected, the magnitude of the remnant error is slightly smaller when we do not craft it. However, we can see there is a drastic difference in the non-Pauli component. Here, we calculate the magnitude of the error that persists even after we eliminate the Pauli component as $\min_{\mathcal{R}} d_{\diamond}(\mathcal{R}\circ \mathcal{E}_{\rm rem}, \mathcal{I})$, where $\mathcal{R}$ is minimized over the set of trace-preserving maps that inverts Pauli error. We can see that the non-Pauli component in uncrafted synthesis remains as $O(\epsilon^2)$, while in the crafted synthesis the magnitude is suppressed below $10^{-13}$. We attribute this residual to the relaxation of equality constraint into inequality, which was performed to enhance the stability of the numerical solver (see Supplementary Note~\ref{sec:numerics_detail}). While we show the result for Z rotation here, we have confirmed that similar results can be obtained for general SU(2) gates as well.
}

\black{As another practical remark,} note that the number of bases with nonzero probability ($\sum_{p_j\neq 0} 1$) is always upper-bounded by 10. Hence, the actual memory footprint of the transpiled circuit is comparable to ordinary mixed synthesis algorithms.

\subsection*{Crafting remnant error to be detectable}
If one chooses to employ the PEC method as the QEM method for FTQC, it suffices to guarantee only the feasibility under the Pauli constraint.
\black{Indeed, the PEC method is powerful in the sense that one may perform unbiased estimation when the error channel is precisely known.
However, we find that the remnant errors can be counteracted much more effectively by utilizing the fact that they can be characterized completely by classical computation.
For instance, we may wish to craft local depolarizing errors, since
many limits and performance of quantum algorithms, as well as QEM methods, are analyzed under the assumption of the local depolarizing noise~\cite{aharonov1996limitations, koczor_exponential_2021, koczor2021dominant, quek2022exponentially, cai2023quantum}.
We may alternatively wish to craft the remnant errors to be some biased Pauli noise. 
We find that this is extremely beneficial for the case of the Pauli rotation gates. Namely, we can perform error detection or correction in the {\it logical} layer, which reduces the sampling overhead to the extent that QEM methods designed for general channels cannot achieve.
}

First, let us discuss the crafting of the depolarizing channel.
Following a similar strategy as in Theorem~\ref{thm1_informal}, we can mathematically prove that there exists a set of shift unitaries $\{\mathcal{V}^{c\epsilon}_{j}\}$ such that we can achieve quadratic suppression of error under the depolarizing constraint, with the sole difference being that the number of shift unitaries is 9 instead of 7. The formal statement and the proof are provided in Supplementary Note~\ref{sec:guarantee_crafting}. Our numerical simulation also shows similar favorable properties as seen in the case of Pauli constraint (See Supplementary Note~\ref{sec:numerics_detail}). 

Next, let us consider crafting the remnant error to be biased. 
In particular, we focus on the Pauli Z rotation gate and aim to bias the error to be XY error, since it can be detected using a single ancilla qubit with less sampling overhead than the lower bound for QEM methods that do not rely on mid-circuit measurement~\cite{tsubouchiUniversalCostBound2023} (concrete circuit structure provided in Fig.~\ref{fig:detection_circuit}).
While we have not found a guaranteed set of shift unitaries, \black{we find that the nine shift unitaries for the depolarizing noise are of practical use, if we allow violation of the constraint  as $p_z \leq 0.01 p$ where $p\coloneqq p_x+p_y+p_z$
 (see Supplementary Notes~\ref{sec:numerics_detail} for details).}

\black{The success of error crafting is highlighted in Fig.~\ref{fig:error_distribution}.
As can be seen from Fig.~\ref{fig:error_distribution}(a), while the rate of Pauli errors $(\frac{p_x}{p},\frac{p_y}{p},\frac{p_z}{p})$ is distributed quite homogeneously if we do not impose any constraint on the error profile, we find that the constraints can be reflected with high accuracy. 
We verify that $p_x=p_y=p_z$ is satisfied up to the order of machine precision for the depolarizing constraint,
while for the XY-biased noise, we find that we cannot completely eliminate the Z error; there is a small residual component of $p_z=O(\epsilon^3)$ \black{even if we allow the XY-biased noise to consist of non-unital components. We also attempted to craft X-biased noise, while it did not succeed in most instances.}
}

\begin{figure}[t]
    \centering
    \includegraphics[width=0.98\linewidth]{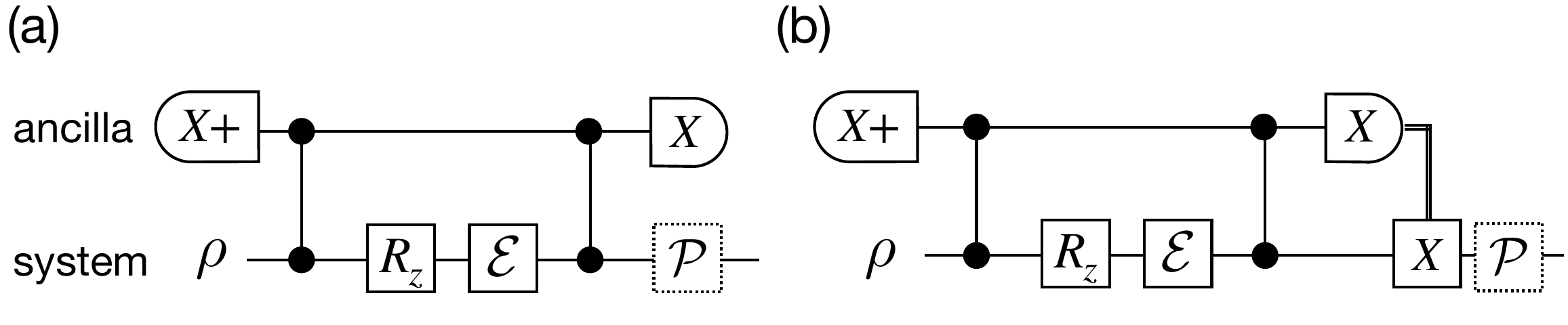}
    \caption{Circuit structure for (a) error detection and (b) logical error correction for Pauli Z rotation gate $R_z$ with remnant error of $\mathcal{E}_{\rm rem}$. We may further combine with the PEC method, for which the Pauli operation $\mathcal{P}$ with dotted square is executed virtually via updating the Pauli frame.
    Note that the feedback operation in (b) is also done via the Pauli frame.
    }
    \label{fig:detection_circuit}
\end{figure}

\begin{figure}[tbp]
    \centering
    \includegraphics[width=0.85\linewidth]{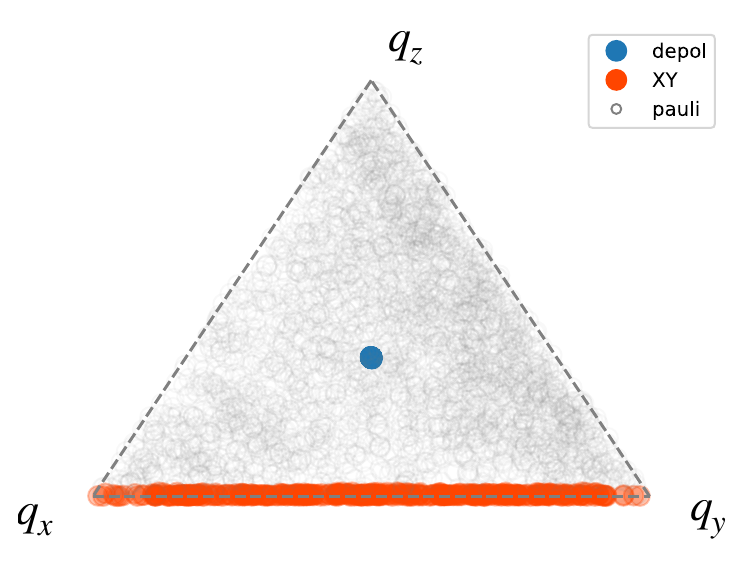}
    \caption{{\bf Numerical analysis on error crafting for remnant errors.} Here we display the distribution of error rate ratio $(q_x, q_y, q_z) \coloneqq (\frac{p_x}{p}, \frac{p_y}{p}, \frac{p_z}{p})$ \black{on a plane $q_x+q_y+q_z=1$, where $p_a \coloneqq \chi_{aa}~(a\in\{x, y, z\})$ is the rate of each Pauli channel in $\mathcal{E}_{\rm rem}$ and $p\coloneqq p_x+p_y+p_z=O(\epsilon^2)$ corresponds to the total error rate. 
    Note that here we show results whose violation of the constraints are below threshold values (see Supplementary Notes~\ref{sec:numerics_detail} for details).
    } 
    The unitary synthesis is done for $\epsilon=10^{-4}$ with shift factor $c=2.0, 5.0$ for depolarizing and XY constraints, respectively. 
    }
    \label{fig:error_distribution}
\end{figure}

\subsection*{Error crafting using CPTP maps}
\black{While we have considered the error crafting using synthesized unitaries $\{U_j\}$, we can also apply the framework to CPTP maps.
As a demonstration, here we consider the target unitary $\hat{U}$ to be a Pauli Z rotation, and constitute the set of implementable CPTP maps $\{\Lambda_j\}$ as the channels of synthesized unitaries accompanied with entangled measurement and feedback as shown in Fig.~\ref{fig:detection_circuit}(b).
\black{We have numerically found}
that, the remnant error can be suppressed {\it cubically} small as $\epsilon^3$, as opposed to the quadratic suppression realized in the error crafting for unitaries. 
\black{Since the optimal unitary synthesis yield T-count of $3\log_2(1/\epsilon) +O(1)$,}
 such an improvement allows us to synthesize a Pauli rotation gate with a T-count of $\log_2(1/\varepsilon)+O(1)$ on average. \black{Compare this with the scaling of $1.5 \log_2(1/\varepsilon)$ realized by mixed synthesis, which achieve quadratic improvement in $\varepsilon$ (see  Fig.~5).}
We guide the readers to the Supplementary Note~\ref{sec:craft-CPTP},~\ref{sec:shift-unitary-gen} for a detailed discussion on the search for appropriate $\Lambda_j$.
}

\begin{table*}[t]
\begin{tabular}{c||c|c|cc||c|c||c}
\toprule
\multirow{2}{*}{Gate} & \multirow{2}{*}{Synthesis} & \multirow{2}{*}{Crafted?} & \multicolumn{2}{c||}{Remnant error}                & \multirow{2}{*}{QEM method} & \multirow{2}{*}{Sampling overhead}  & \multirow{2}{*}{Comment} \\ \cline{4-5}
                      &           &                 & \multicolumn{1}{c|}{magnitude}    & type          &                      &                                                              \\ \hline
SU(2)                 & Unitary    &                  & \multicolumn{1}{c|}{$\epsilon$}   & coherent       & PEC                  & $1+4\epsilon$                      
& \\
SU(2)                 & Unitary      &               & \multicolumn{1}{c|}{$\epsilon$}   & coherent       & Rescaling            & $1+2\epsilon^2$                      & Slow WN convergence                             \\ \hline
SU(2)                 & \black{Mixed Unitary} &                     & \multicolumn{1}{c|}{$\epsilon^2$} & \black{not Pauli} & PEC                  & $1+4 \epsilon^2$    & \black{Pauli frame unavailable}                \\
SU(2)                 & \black{Mixed Unitary} & \checkmark                    & \multicolumn{1}{c|}{$\epsilon^2$} & Pauli/depol & PEC                  & $1+4 \epsilon^2$                    \\
SU(2)                 & \black{Mixed Unitary}& \checkmark                     & \multicolumn{1}{c|}{$\epsilon^2$} & Pauli/depol & Rescaling            & $1+2 \epsilon^2$                                      \\ \hline \hline
$R_z$                 & Unitary &                    & \multicolumn{1}{c|}{$\epsilon$}   & coherent       & PEC                  & $1+4\epsilon$    & \black{Pauli frame unavailable}                   \\
$R_z$                 & Unitary   &                  & \multicolumn{1}{c|}{$\epsilon$}   & coherent       & Rescaling            & $1+2\epsilon^2$                      & Slow WN convergence                              \\ \hline
$R_z$                 & \black{Mixed Unitary}   &  & \multicolumn{1}{c|}{$\epsilon^2$} & \black{not Pauli} & PEC             & $1+4\epsilon^2$     & \black{Pauli frame unavailable}\\
$R_z$                 & \black{Mixed Unitary}   & \checkmark           & \multicolumn{1}{c|}{$\epsilon^2$} & Pauli/depol/XY & PEC             & $1+4\epsilon^2$     \\
$R_z$                 & \black{Mixed Unitary} & \checkmark              & \multicolumn{1}{c|}{$\epsilon^2$} & Pauli/depol/XY & ED + PEC             & $1+(q_x+q_y + 4q_z)\epsilon^2$      & $q_z\sim \epsilon$ in numerics \\
$R_z$                 & \black{Mixed Unitary} & \checkmark              & \multicolumn{1}{c|}{$\epsilon^2$} & Pauli/depol/XY & ED + Resc.           & $1+(q_x+q_y + 2q_z)\epsilon^2$      & $q_z\sim \epsilon$ in numerics\\ \hline
$R_z$                 & \black{Mixed CPTP}& \checkmark                    & \multicolumn{1}{c|}{$\epsilon^3$} & Pauli Z       & PEC                  & $1+ 4\epsilon^3$  & \black{Single ancilla}                  \\
$R_z$                 & \black{Mixed CPTP}& \checkmark                    & \multicolumn{1}{c|}{$\epsilon^3$} & Pauli Z       & Rescaling                & $1+ 2\epsilon^3$   & \black{Single ancilla}                \\
\bottomrule
\end{tabular}
\caption{{\bf Sampling overhead of error countermeasures under synthesis error of $\epsilon$.} The remnant error indicates the diamond distance $d(\mathcal{E}_{\rm rem}, \mathcal{I})$ where $\mathcal{I}$ is the identity channel. We assume that unitary gate synthesis is done with an accuracy of $\epsilon$, and compare the performance when we further perform mixed or error-crafted synthesis. 
The sampling overhead $\gamma$ is indicated for a single gate, and the total overhead is multiplicative as $\prod_l \gamma_l$ when overhead is $\gamma_l$ for the $l$-th gate. The rescaling technique assumes that the underlying quantum circuit is sufficiently deep so that the white noise (WN) approximation holds~\cite{dalzell2024random, morvan2024phase, tsubouchi2024symmetric}.
We assume that the white noise approximation holds when the rescaling technique is applied.
The cubic suppression in the last two rows is verified numerically up to $\epsilon^3=10^{-9}$ as shown in Fig.~\ref{fig:accuracy_tradeoff}.
\black{Note that applying PEC for crafted and uncrafted synthesis differs significantly. We cannot rely on the Pauli frame in the latter, and an actual single-qubit operation is required.}
}
\label{tab:error}
\end{table*}

\begin{figure}[tbp]
    \centering
    \includegraphics[width=0.98\linewidth]{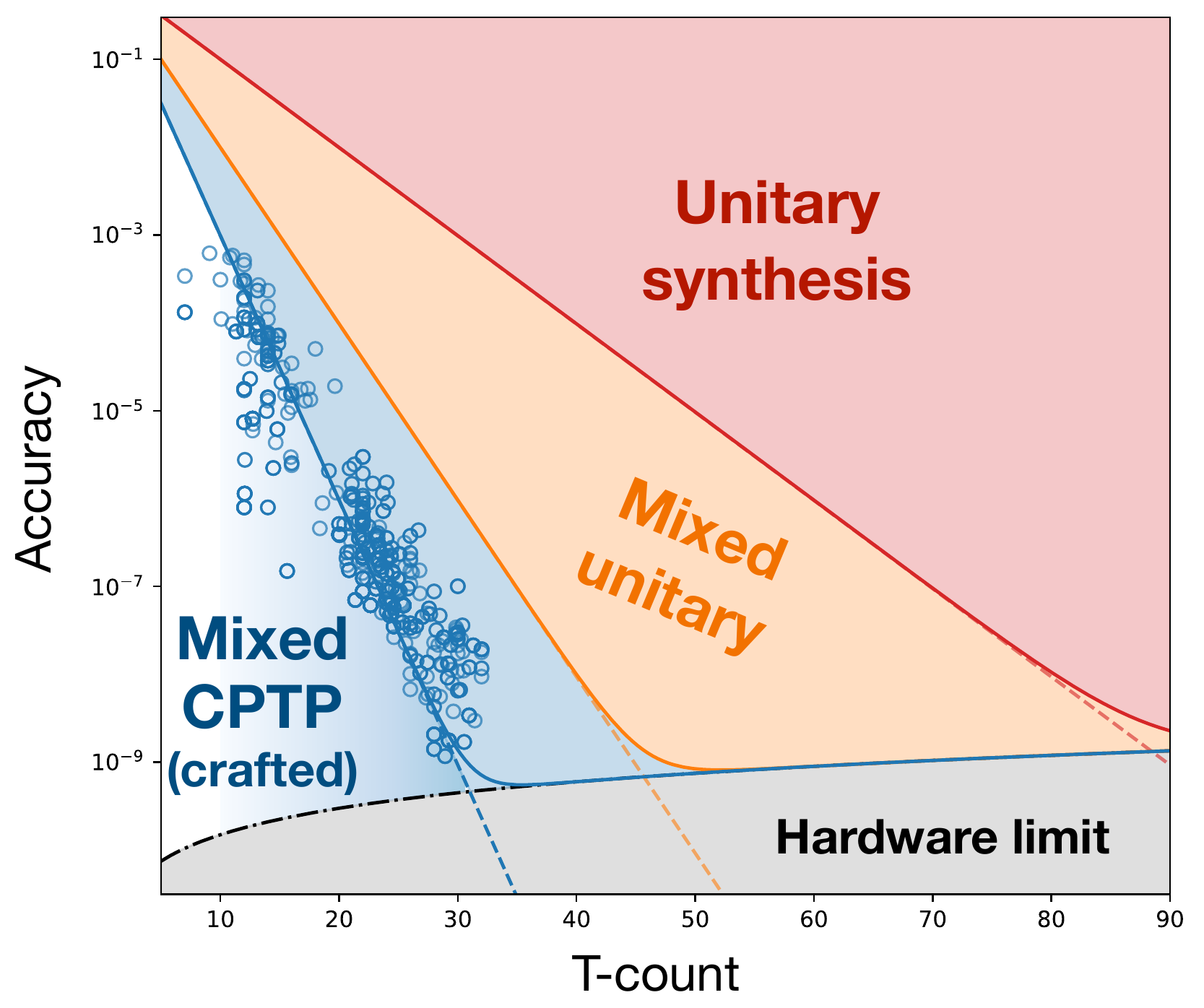}
    \caption{{\bf Trade-off between T-count and synthesis accuracy for Pauli rotation gates.} The red, yellow, and blue regions indicate the achievable accuracy without any sampling overhead when we employ unitary, mixed, and error-crafted synthesis.
    The blue \black{circles} indicate the error \black{solely from crafted synthesis for} $R_z(\theta)$ with $\theta\in \{\frac{m\pi}{32}\}_{m\neq 0~{\rm mod 4}}$ under various  synthesis accuracy. \black{Note that total error combined with hardware noise is bounded by the chained black line, which indicates the contribution from noise in T gates. Here we choose error rate of $10^{-11}$ per T gate.}
    \black{The dotted red, yellow, and green lines are given by the empirical T-count of $3 \log_2 (1/\varepsilon)$, $1.5 \log_2 (1/\varepsilon)$, and $\log_2 (1/\varepsilon)$, respectively,}
    \black{and the real lines indicate the total error.}
    }
    \label{fig:accuracy_tradeoff}
\end{figure}

\black{
Even if the Z error cannot be eliminated completely, 
such an error crafting drastically alters the sampling overhead. 
By performing the error detection operation as shown in Fig.~\ref{fig:detection_circuit}(a), we can mitigate Pauli X and Y errors with sampling overhead of $\gamma_{\rm XY}=1+p_x+p_y,$ which is even below the lower bound of QEM sampling overhead that does not rely on mid-circuit measurement~\cite{tsubouchiUniversalCostBound2023}. While the Z error is undetectable and hence we shall rely on other QEM methods such as PEC or rescaling with overhead of $\gamma_Z=1+\alpha p_z$, the total sampling overhead is $\gamma_{\rm tot}=\gamma_{\rm XY}\gamma_{Z}\sim1+p_x+p_y+\alpha p_z \sim 1+\epsilon^2$. When we perform QEM for $L$ gates, the total sampling overhead $\gamma_{\rm tot}^L$ is quartically small compared to that of PEC.
}

\black{An intuitive understanding of the reduction of the T-count can be obtained from a geometric interpretation. Let us first briefly describe the standard volume argument~\cite{harrow2002efficient} for SU(2) synthesis to yield scaling of $3 \log_2(1/\varepsilon)+O(1)$, and then proceed to explain the improved scaling of $\log_2(1/\varepsilon)+O(1)$. We can interpret the problem of designing a synthesis algorithm, i.e., ensuring any  $d$-dimensional unitary to be approximated by accuracy of $\varepsilon$, as packing $\epsilon$-ball into the manifold of SU($d$). The volume of $\varepsilon$-ball in $(d^2-1)$-dimensional space is $O(\varepsilon^{d^2-1}),$ in particular $V=O(\varepsilon^3)$ for SU(2). Hence, in order to cover the entire space, we need at least $O(1/V)=O(1/\varepsilon^3)$ unique unitaries. Note that any Clifford+T gate can be expressed uniquely by the canonical form proposed by  Matsumoto and Amano~\cite{matsumoto2008representation}. Specifically, the number of unique Clifford+T operator with T-count of $t$ is $192\cdot(3\cdot 2^t - 2)$. We therefore deduce that the lower bound of T-count scales as $3\log_2(1/\varepsilon) + O(1)$. It has been numerically verified that this bound is indeed nearly saturated by a direct search algorithm \cite{morisaki2024}, which reflects the fact that the grid points corresponding to Clifford+T unitary are distributed homogeneously within the manifold so that the gate set quickly mimics the Haar unitary ensemble~\cite{haferkamp2023efficient}.}

\black{
When we perform the mixed CPTP synthesis, since we can correct coherent error of X rotation via the feedback operation, we may allow it to be arbitrarily large. This effectively projects out one of the dimensions of the manifold to be covered. Correspondingly, the volume of the ball is regarded as $V=O(\varepsilon^2)$. From the volume argument, this leads to the T-count of $2 \log_2(1/\varepsilon) + O(1)$ for the synthesis of unitaries to be included in the mixture. By combining with the fact that appropriate mixed synthesis allows quadratic suppression of error and hence reduces the T-count by a factor of two, we arrive at the scaling of $\log_2(1/\varepsilon)+O(1)$.
}

\black{
Let us briefly comment on the classical precomputation time of crafted synthesis.
Our results in Fig.~\ref{fig:accuracy_tradeoff} exploits the direct search algorithm for SU(2) gate synthesis whose computational complexity scales as $O(1/\epsilon)$ on average~\cite{morisaki2024}. 
Considering that currently known unitary synthesis algorithms for $R_z$ gates based on number-theoretic approach~\cite{kliuchnikov2015practical, ross2016optimal} or repeat-until-success circuits~\cite{bocharov2015efficient, Kliuchnikov2023shorterquantum} yield nearly optimal gate sequence with \black{computational complexity} of $O({\rm polylog}(1/\epsilon))$, the complexity of direct search algorithms seems to be a significant overhead.
Meanwhile, we observe that the actual computation time per synthesis is observed to be milliseconds for the accuracy regime we have investigated.
Considering that such computations can be embarrassingly parallelized, and also that number of rotation gates for early FTQC circuits is of $10^4$ for quantum dynamics simulation~\cite{campbell2021early,beverland2022assessing} and $10^{6}$ for ground state energy simulation~\cite{kivlichan2020improved}, we envision that classical computation time is not bottleneck for such applications.
}

Last but not least, while we have exclusively discussed the error and sampling overhead only for the synthesis error, it must be kept in mind that, in reality, there is also incoherent noise that indirectly limits the synthesis accuracy.
As can be seen from Fig.~\ref{fig:accuracy_tradeoff},
one shall aim for the sweet spot where the total error is minimized.
While we have displayed the case when we have designed the FTQC to contribute the majority of logical error to magic state distillation, we may flexibly change the scheme to introduce errors in Clifford operations as well.

\section*{Discussion}
In this work, we have proposed the scheme of error-crafted synthesis of single-qubit unitaries under the framework of Clifford+T formalism.
We have proven that we can guarantee the quadratic suppression of the unitary synthesis error even when we constrain the remnant error to be Pauli and depolarizing channel.
Furthermore, for the case of Pauli rotation gates, we have numerically shown that error crafting can yield biased Pauli noise such that sampling overhead to mitigate the errors can be suppressed quadratically compared to the existing techniques.
We have also found that error crafting for CPTP maps that incorporate measurement and feedback operations achieves cubic suppression of the error up to accuracy of $10^{-9}.$

Numerous future directions are envisioned.
First, it is essential to seek whether error crafting is possible for synthesis of multiqubit unitaries. 
Combined with the fact that the error can potentially be suppressed exponentially smaller with the qubit count for multiqubit case~\cite{akibue2024probabilistic}, it is crucial to extend the framework proposed in this work to perform practical quantum computation.
Second, it is intriguing in terms of resource theoretic argument to assess the probabilistic transformability between channels.
While this has been studied for state transformation in the context of magic state or entanglement distillation~\cite{regula2022probabilistic}, the argument on quantum channels is left as an interesting open question.
Third, it is an intriguing open question whether there is a lower bound on the computational complexity to perform error-crafted synthesis. 

\section*{Data availability}
Data and codes for the work are available via GitHub repository~\cite{crafting-github}.

\section*{Acknowledgement}
We thank fruitful discussions with Suguru Endo, Ali Javadi-Abhari, Yuuki Tokunaga, and Kunal Sharma.
This work was supported by 
JST CREST Grant Number JPMJCR23I4,  
JST Moonshot R\&D Grant Number JPMJMS2061,
MEXT Q-LEAP Grant No. JPMXS0120319794, and No. JPMXS0118068682.
N.Y. wishes to thank JST PRESTO No. JPMJPR2119, 
JST ERATO Grant Number  JPMJER2302,  
JST Grant Number JPMJPF2221, 
and the support from IBM Quantum. 
S.A. is supported by JST, PRESTO Grant no.JPMJPR2111.
This work is supported by MEXT Quantum Leap Flagship Program (MEXT Q-LEAP) Grant No. JP- MXS0118067394 and JPMXS0120319794, JST COI- NEXT Grant No. JPMJPF2014.
\black{K.T. is supported by World-leading Innovative Graduate Study Program for Materials Research, Information, and Technology (MERIT-WINGS) of the University of Tokyo and JSPS KAKENHI Grant No. JP24KJ0857.}

\section*{Author contributions}
N.Y. conceived the idea. N.Y., S.A., and K.T. performed theoretical analysis. N.Y. and H.M. developed the numerical codes. All authors contributed to scientific discussion and writing of the paper.

\section*{Competing Interests}
Y.S. and S.A. own stock/options in Nippon Telegraph and Telephone. Y.S. owns stock/options in QunaSys Inc.

\bibliography{bib.bib}

\let\addcontentsline\oldaddcontentsline

\clearpage
\onecolumngrid
\begin{center}
	\Large
	\textbf{Supplementary Notes for: Error Crafting in Mixed Quantum Gate Synthesis}
\end{center}

\setcounter{section}{0}
\setcounter{equation}{0}
\setcounter{figure}{0}
\setcounter{table}{0}
\setcounter{thm}{0}
\renewcommand{\thesection}{S\arabic{section}}
\renewcommand{\theequation}{S\arabic{equation}}
\renewcommand{\thefigure}{S\arabic{figure}}
\renewcommand{\thetable}{S\arabic{table}}
\renewcommand{\thethm}{S\arabic{thm}}
\setcounter{thm}{0}
\newtheorem{thmS}{Theorem S\ignorespaces}

\addtocontents{toc}{\protect\setcounter{tocdepth}{0}}

{
\hypersetup{linkcolor=blue}
\tableofcontents
}
\vspace{1cm}

\section{Limitation in applying twirling for non-Clifford operation}\label{sec:twirling-limitation}
\black{In the main text, we have explained using Table 1 that neither Pauli or Clifford twirling is not available for non-Clifford gates in general. Here we provide details on this argument.
}

\black{
Let $\mathcal{U}$ denote a non-Clifford single-qubit gate and $\mathcal{N}$ denote a noise channel associated to it.
When we wish to realize the implementation of the target unitary with twirled noise, its channel representation can be written as
\begin{eqnarray}\label{eq:twirl-ideal}
    \mathbb{E}_{\mathcal{D}} [\mathcal{D}^\dagger \circ \mathcal{N}\circ \mathcal{D}] \circ \mathcal{U},
\end{eqnarray}
where $\mathcal{D}$ is averaged over some single-qubit unitary subgroup such as the Pauli or Clifford group.
However, a gate operation is always accompanied by the noise as $\mathcal{N}\circ\mathcal{U}$, and we cannot insert $\mathcal{D}$ in between.
}

\black{
One of the options is to modify the expression as 
\begin{eqnarray}\label{eq:twirl-mod}
    \mathbb{E}_{D}  [\mathcal{D}^\dagger \circ \mathcal{N} \circ \mathcal{U} \circ (\mathcal{U}^\dagger \circ \mathcal{D} \circ \mathcal{U})],
\end{eqnarray}
which requires us to perform the following in sequence: 
\begin{enumerate}
    \item[(1)] Operate conjugated unitary $UDU^\dagger$ {\it without noise}.
    \item[(2)] Operate noisy target unitary.
    \item[(3)] Operate $D$ without noise. 
\end{enumerate}   
While the second and third steps are commonly considered to be achievable (coined as ``hard" and ``easy" operations in literature~\cite{wallman2016noise}), the problem is in the first step, since $U D U^\dagger $ is a non-Clifford operation in general. 
For instance, if we consider $U= e^{i \theta Z}$ and $D = X$, we must implement $UDU^\dagger = e^{i\theta Z} X e^{-i \theta Z} = e^{2 i \theta Z} X.$ Since we assume that rotation gates are noisy, this cannot be implemented without injecting additional untwirled noise. This is however completely contrary to what we desire to do.
}

\black{An exception is when the target gate belongs to the third level of the Clifford hierarchy. 
In this case, $U D U^\dagger$ is a Clifford operation if $D$ is a Pauli operator, and hence we can perform Pauli twirling by using Clifford gates. However, the Clifford twirling is not available;
if we take the Hadamard gate, we can show that
\begin{eqnarray}
    (T H T^\dagger) \cdot Z \cdot (T H T^\dagger)^\dagger = T H Z H T^\dagger = T X T^\dagger = SX.
\end{eqnarray}
This indicates that $THT^\dagger$ does not map a Pauli operator into a Pauli operator, and hence not a Clifford gate.
}

\black{
One might consider another style of twirling by implementing $D^\dagger U D$ instead of $U D U^\dagger$. This direction is based on the fact that Eq.~\eqref{eq:twirl-ideal} can also be formally rewritten as 
\begin{eqnarray}
    \mathbb{E}_{\mathcal{D}} [\mathcal{D}^\dagger \circ \mathcal{N} \circ (\mathcal{D} \circ \mathcal{U} \circ \mathcal{D}^\dagger) \circ \mathcal{D}].
\end{eqnarray}
The requirement for this formulation is (1) the noise channel of $D^\dagger U D$ is all identically $\mathcal{N}$ irrespective of $\mathcal{D}$~\cite{wallman2016noise}, and (2) the noise in $\mathcal{D}$ is negligible.
When the target is $U=e^{i \theta P}$ with $P\in\{X, Y, Z\}$ and $\theta$ being the rotation angle, the requirement is that noise channel for Pauli rotation gates with angle $\theta$ are identical to each other.
However, this is unlikely in general, since the noise channel in unitary synthesis is unitary; to ensure the noise channel to be identical, we must accurately perform the unitary synthesis to control the synthesis error, which is beyond the actual task of synthesizing a target unitary up to accuracy of $\epsilon$.
}


\section{Formalism of mixed synthesis under diamond distance}\label{sec:formalism_prob_synth}

\black{In the main text, we have discussed the mixed synthesis that minimizes the diamond distance $\ddist(\Phi_1, \Phi_2)$ between two unitary channels $\Phi_1$ and $\Phi_2$.}
\black{
Here, we provide the formal definition of the diamond distance, which is related with the diamond norm as $\ddist(\Phi_1, \Phi_2):=\frac{1}{2}\|\Phi_1 - \Phi_2\|_{\diamond}$ where  $\Phi_1, \Phi_2$ are linear maps from $\mathcal{L}(\mathcal{H}_1)$ to $ \mathcal{L}(\mathcal{H}_2)$ ($(\mathcal{L}(\mathcal{H}_i))$ : the entire set of linear map on $\mathcal{H}_i$) for an $d$-dimensional Hilbert space $\mathcal{H}(=\mathcal{H}_1 = \mathcal{H}_2)$:}
\begin{eqnarray}
    \|\Phi_1 - \Phi_2 \|_{\diamond} := \max_{\rho \in \mathbb{D}(\mathcal{H}_1\otimes \mathcal{H}_2)} \left\| ((\Phi_1 - \Phi_2) \otimes {\rm Id}_d)(\rho) \right\|_1.
\end{eqnarray}
Here, $\|X \|_1 = {\rm Tr}[\sqrt{X^\dag X}]$ denotes the trace norm,  ${\rm Id}_d$ is an identity channel acting on $d$-dimensional Hilbert space, and $\mathbb{D}(\mathcal{H}_1 \otimes \mathcal{H}_2)$ is the set of physical density matrices on $\mathcal{H}_1 \otimes \mathcal{H}_2$. 
It is known that the diamond norm \black{$\|\Phi\|_{\diamond}$} can be efficiently calculated via semidefinite programming (SDP) for general linear maps as~\cite{watrous2009semidefinite, watrous2012simpler}
\begin{eqnarray}
\begin{tabular}{rlcrl}
\multicolumn{2}{c}{\underline{{\rm Primal problem}}} &\ \ \ \ \ \ \ \ \ \ \ \ \ \  
 &\multicolumn{2}{c}{\underline{{\rm Dual problem}}}\\
{\rm maximize:}&$\frac{1}{2}\langle \mathcal{J}(\Phi), X\rangle + \frac{1}{2} \langle\mathcal{J}(\Phi)^\dag, X^\dag \rangle$&&{\rm minimize:}& $\frac{1}{2} \|{\rm Tr}_{\mathcal{H}_2} Y_0\|_{\infty} + \frac{1}{2} \|{\rm Tr}_{\mathcal{H}_2} Y_1\|_\infty$ \\
{\rm subject to:}& $\begin{pmatrix}
        \rho_0 \otimes {\rm  Id}_d & X \\
        X^\dag & \rho_1 \otimes {\rm Id}_d
    \end{pmatrix} \geq 0$,&&
    {\rm subject~to~}: &$\begin{pmatrix}
        Y_0 & - \mathcal{J}(\Phi) \\
        -\mathcal{J}(\Phi)^\dagger & Y_1
    \end{pmatrix} \geq 0$.\\
&$\rho_0, \rho_1\in\mathbb{D}(\mathcal{H}_1)$, $X \in \mathcal{L}(\mathcal{H}_1 \otimes \mathcal{H}_2)$. &&& $Y_0, Y_1 \in \mathcal{L}(\mathcal{H}_1) \otimes \mathcal{L}(\mathcal{H}_2)$, $Y_0=Y_0^\dag, Y_1=Y_1^\dag$.
\end{tabular}     
\end{eqnarray}
where $\langle P, Q\rangle = {\rm Tr}[P^\dag Q]$ is the inner product of operators $P$ and $Q$,  $Y_0, Y_1 \in \mathcal{L}(\mathcal{H}_1) \otimes \mathcal{L}(\mathcal{H}_2)$ are Hermitian matrices, $\|\cdot\|_{\infty}$ is the spectral norm, and $\mathcal{J}(\Phi)$ is the Choi operator of a map $\Phi$ defined as $\mathcal{J}(\Phi) := ({\rm Id}_d \otimes \Phi)[|\Omega\rangle \langle \Omega|]$
for the maximally entangled state $|\Omega \rangle = \sum_ i |i\rangle_{\mathcal{H}_1}|i\rangle_{\mathcal{H}_2}$.
\black{By further assuming that $\Phi_1$ and $\Phi_2$ are completely positive and trace preserving (CPTP) maps, the calculation can be shown to be simplified as follows.}
\begin{eqnarray}
\begin{tabular}{rlcrl}
\multicolumn{2}{c}{\underline{{\rm Primal problem}}} &\ \ \ \ \ \ \ \ \ \ \ \ \ \  
 &\multicolumn{2}{c}{\underline{{\rm Dual problem}}}\\
{\rm maximize:}&${\rm Tr}[\mathcal{J}(\Phi_1 - \Phi_2) T]$&&{\rm minimize:}& $r \in \mathbb{R}$ \\
{\rm subject to:}& $0 \leq T \leq \rho \otimes {\rm Id}_d$,&&
    {\rm subject~to~}:& $S\geq0, S\geq \mathcal{J}(\Phi_1 - \Phi_2),$\\
& $\rho\in\mathbb{D}(\mathcal{H}_1).$ &&& $r {\rm Id}_{d} \geq {\rm Tr}_{{\mathcal{H}_2}}[S]$.
\end{tabular}     
\end{eqnarray}

Now let us consider  mixed synthesis of a target CPTP map $\hat{\Upsilon}$ using a  set of CPTP implementable maps $\{\Upsilon_j\}$ under the following minimization problem:
\begin{eqnarray}\label{eq:prob_synthesis}
\begin{tabular}{c}
    {\rm minimize:}~$\frac{1}{2} \left \| \hat{\Upsilon} - \sum_j  p_j \Upsilon_j \right\|_{\diamond}$\\
    {\rm subject~to:}~$p_j \geq 0, \sum_j p_j = 1$.
\end{tabular}    
\end{eqnarray}
While it seems complicated since the diamond norm is already defined by minimization, it can be shown that we can solve Eq.~\eqref{eq:prob_synthesis} via SDP:
\begin{proposition}
\label{prop:SDP}
Let $\hat{\Upsilon}$ and $\{\Upsilon_j\}$ be a target CPTP map and a finite set of implementable CPTP maps from $\mathcal{L}(\mathcal{H}_1)$ to $\mathcal{L}(\mathcal{H}_2)$, respectively. Then, the distance $\min_{p}\frac{1}{2}\diamondnorm{\hat{\Upsilon}-\sum_{j} p_j \Upsilon_j}$ and the optimal probability distribution $\{p_j\}$ can be computed with the following SDP:
\begin{equation}
\label{eq:SDP}
\begin{tabular}{rlcrl}
\multicolumn{2}{c}{\underline{{\rm Primal problem}}} &\ \ \ \ \ \ \ \ \ \ \ \ \ \  
 &\multicolumn{2}{c}{\underline{{\rm Dual problem}}}\\
{\rm maximize:}&${\rm Tr}[\mathcal{J}(\hat{\Upsilon})T]-t$&&{\rm minimize:}&$w\in\mathbb{R}$ \\
{\rm subject to:}& $0\leq T\leq\rho\otimes {\rm Id}_{\mathcal{H}_2}$,&&
{\rm subject to:}& $S\geq0,  S\geq \mathcal{J}\left(\hat{\Upsilon} - \sum_j p_j \Upsilon_j\right)$,\\
&$\rho\in\mathcal{D}(\mathcal{H}_1)$, &&&$w{\rm Id}_{\mathcal{H}_1}\geq{\rm Tr}_{\mathcal{H}_2}[S]$,\\
&$\forall j,{\rm Tr}[\mathcal{J}(\Upsilon_j)T]\leq t$.&&&$\forall p_j\geq0$, $\sum_j p_j\leq 1$.
\end{tabular} 
\end{equation}
Note that the strong duality holds in this SDP, i.e., the optimum primal and dual values are equal.
\end{proposition}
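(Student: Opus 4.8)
The plan is to obtain the stated pair of SDPs by writing the quantity of interest as a minimax and then exchanging the two optimizations. \textbf{Step 1 (minimax form).} Substituting the SDP characterization of the diamond distance between two CPTP maps given above, and using linearity of the Choi map, $\mathcal{J}(\hat{\Upsilon}-\sum_j p_j\Upsilon_j)=\mathcal{J}(\hat{\Upsilon})-\sum_j p_j\mathcal{J}(\Upsilon_j)$, one has
\begin{equation}
\min_{p}\tfrac{1}{2}\diamondnorm{\hat{\Upsilon}-\sum_j p_j\Upsilon_j}=\min_{p\in\Delta}\max_{T\in\mathcal{T}}f(p,T),
\end{equation}
where $\Delta=\{p:p_j\geq0,\ \sum_j p_j=1\}$ is the probability simplex, $\mathcal{T}=\{T\geq0:\exists\,\rho\in\mathbb{D}(\mathcal{H}_1),\ T\leq\rho\otimes{\rm Id}_{\mathcal{H}_2}\}$, and $f(p,T)={\rm Tr}[\mathcal{J}(\hat{\Upsilon})T]-\sum_j p_j{\rm Tr}[\mathcal{J}(\Upsilon_j)T]$. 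Both $\Delta$ and $\mathcal{T}$ are convex and compact ($\mathcal{T}$ is closed and bounded, since $T\leq\rho\otimes{\rm Id}\leq{\rm Id}$), and $f$ is affine in each argument separately, so Sion's minimax theorem applies and yields $\min_{p}\max_{T}f=\max_{T}\min_{p}f$.

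\textbf{Step 2 (primal SDP).} For fixed $T\in\mathcal{T}$, the inner minimum over the simplex is attained at a vertex, hence $\min_{p\in\Delta}f(p,T)={\rm Tr}[\mathcal{J}(\hat{\Upsilon})T]-\max_j{\rm Tr}[\mathcal{J}(\Upsilon_j)T]$, and this maximum is automatically nonnegative because $\mathcal{J}(\Upsilon_j)\geq0$ and $T\geq0$. Applying the epigraph identity $\max_j a_j=\min\{t\geq0:\ t\geq a_j\ \forall j\}$ and re-expanding $\mathcal{T}$ in terms of $\rho$, the outer maximization over $T$ becomes precisely the primal SDP of the Proposition. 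By Step 1 its optimal value equals $\min_p\tfrac{1}{2}\diamondnorm{\hat{\Upsilon}-\sum_j p_j\Upsilon_j}$, which is the first assertion.

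\textbf{Step 3 (dual and strong duality).} I would then dualize the primal SDP in the standard way: attach a positive-semidefinite multiplier $S$ to the constraint $\rho\otimes{\rm Id}_{\mathcal{H}_2}-T\geq0$ (a second PSD multiplier for $T\geq0$ is eliminated by stationarity in $T$), scalars $p_j\geq0$ to the constraints ${\rm Tr}[\mathcal{J}(\Upsilon_j)T]\leq t$, a scalar $\lambda\geq0$ to the implicit $t\geq0$, and a scalar $w$ to ${\rm Tr}\,\rho=1$. Requiring the coefficients of the primal variables $T$, $t$, $\rho$ to vanish produces $S\geq\mathcal{J}(\hat{\Upsilon}-\sum_j p_j\Upsilon_j)$, $S\geq0$, $\sum_j p_j+\lambda=1$ (hence $\sum_j p_j\leq1$), and $w\,{\rm Id}_{\mathcal{H}_1}\geq{\rm Tr}_{\mathcal{H}_2}[S]$, with objective $\min w$; this is the dual SDP of the Proposition, and weak duality can also be verified directly from these relations. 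Strong duality and attainment of the dual optimum then follow from Slater's condition: taking $\rho={\rm Id}_{\mathcal{H}_1}/d$, $T={\rm Id}_{\mathcal{H}_1\otimes\mathcal{H}_2}/(2d)$, and any $t>1/2$ gives a strictly feasible primal point (with $\dim\mathcal{H}_1=d$ and ${\rm Tr}\,\mathcal{J}(\Upsilon_j)=d$). Since $\{p_j\}$ are among the dual variables, an optimal distribution is obtained directly from an optimal dual solution.

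The step I expect to be the main obstacle is making the minimax exchange fully rigorous — verifying compactness of $\mathcal{T}$ and the hypotheses of Sion's theorem — together with the Lagrangian bookkeeping in Step 3 that reproduces the stated dual exactly, in particular the harmless extra constraint $t\geq0$ that converts $\sum_j p_j=1$ into $\sum_j p_j\leq1$. The remaining ingredients (linearity of $\mathcal{J}$, optimality at simplex vertices, the epigraph reformulation, and Slater's condition for SDPs) are routine.
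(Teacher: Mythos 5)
The paper states Proposition~\ref{prop:SDP} without proof (it is essentially imported from prior work on probabilistic synthesis and Watrous's SDP for the diamond norm), so there is no in-text argument to compare against; I am therefore evaluating your derivation on its own terms. Your route — substitute the SDP form of the diamond distance for CPTP maps, swap $\min_p$ and $\max_T$ by Sion's theorem (both $\Delta$ and $\mathcal{T}$ convex compact, $f$ bilinear), collapse $\min_p$ over the simplex to $-\max_j$, apply the epigraph reformulation to get the stated primal, then dualize and invoke Slater — is the natural one and it is correct. The weak-duality chain you allude to does close: $w\geq{\rm Tr}[\rho\,{\rm Tr}_{\mathcal{H}_2}S]={\rm Tr}[S(\rho\otimes{\rm Id})]\geq{\rm Tr}[ST]\geq{\rm Tr}[\mathcal{J}(\hat{\Upsilon})T]-\sum_j p_j{\rm Tr}[\mathcal{J}(\Upsilon_j)T]\geq{\rm Tr}[\mathcal{J}(\hat{\Upsilon})T]-t$, using $t\geq 0$ and $\sum_j p_j\leq 1$ in the last step. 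The Slater point $\rho={\rm Id}/d$, $T={\rm Id}/(2d)$, $t>1/2$ is strictly feasible, giving strong duality and dual attainment, which is exactly what is needed since $\{p_j\}$ lives in the dual.

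Two small remarks. First, your treatment of the $\sum_j p_j\leq1$ versus $\sum_j p_j=1$ discrepancy via the ``implicit $t\geq0$'' multiplier is valid (adding a redundant primal constraint changes neither primal nor optimal dual value), but a cleaner way to see why the relaxed form is harmless is monotonicity: since each $\mathcal{J}(\Upsilon_j)\succeq 0$, increasing any $p_j$ only decreases $\mathcal{J}(\hat{\Upsilon}-\sum_j p_j\Upsilon_j)$ in the Loewner order, so any dual-feasible $(S,p,w)$ with $\sum_j p_j<1$ remains feasible after rescaling $p$ up to the simplex; hence the SDP with $\sum_j p_j\leq1$ always admits an optimum that is a genuine distribution, which is what the proposition's last claim requires and which your write-up does not explicitly verify. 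Second, the phrase ``a second PSD multiplier for $T\geq0$ is eliminated by stationarity in $T$'' glosses over the actual bookkeeping: the two multipliers $S,S'\succeq0$ satisfy $S=S'+\mathcal{J}(\hat{\Upsilon}-\sum_j p_j\Upsilon_j)$ from the $T$-stationarity, from which $S\succeq0$ and $S\succeq\mathcal{J}(\hat{\Upsilon}-\sum_j p_j\Upsilon_j)$ both follow; worth spelling out. Neither of these affects correctness.
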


\subsection{Simplification of mixed synthesis in single-qubit case}\label{app:sdp_simple}
Although SDP is efficient in a sense that there exists a polynomial-time algorithm with respect to the problem matrix size, in practice the runtime grows quite rapidly, and hence it is desirable to simplify the minimization problem.
For the single-qubit case, the calculation of diamond distance boils down to that of trace distance between Choi matrices.
While one can show \black{$\frac{1}{d}\|\Phi_1 - \Phi_2\|_{\diamond} \leq \frac{1}{d}\|\mathcal{J}(\Phi_1) - \mathcal{J}(\Phi_2)\|_{1} \leq \|\Phi_1 - \Phi_2\|_{\diamond}$} for general $d$-dimensional completely positive maps $\Phi_1$ and $\Phi_2$, for the single-qubit \black{mixed unitary channel} case it is known to saturate the lower bound as $\|\Phi_1 - \Phi_2\|_{\diamond} = \|\mathcal{J}(\Phi_1) - \mathcal{J}(\Phi_2)\|_1/2.$
Therefore, the remnant error of mixed synthesis, $\min_{p}\frac{1}{2}\diamondnorm{\hat{\Upsilon}-\sum_{j}p_j\Upsilon_j}$, can now be computed by the following simpler semidefinite programming problem:
\begin{equation}
\label{eq:1QSDP}
\begin{tabular}{rlcrl}
\multicolumn{2}{c}{\underline{{\rm Primal problem}}} &\ \ \ \ \ \ \ \ \ \ \ \ \ \  
 &\multicolumn{2}{c}{\underline{{\rm Dual problem}}}\\
{\rm maximize:}&$\frac{1}{2}{\rm Tr}[M \mathJ(\hat{\Upsilon})]-t$&&{\rm minimize:}&${\rm Tr}[Y]$ \\
{\rm subject to:}& $0\leq M\leq\idop$,&&
{\rm subject to:}& $Y\geq0, Y\geq \frac{1}{2}\mathJ(\hat{\Upsilon})-\frac{1}{2}\sum_{j}p_j \mathJ(\Upsilon_j)$,\\
&$\forall j,\frac{1}{2}{\rm Tr}[M\mathJ(\Upsilon_j)]\leq t$.&&&$\forall j,p_j \geq0$,$\sum_{j} p_j\leq 1$.
\end{tabular} 
\end{equation}

We further find that it is convenient to introduce the notion of magic basis, an orthonormal basis for the Choi representation of single-qubit channels: 
\begin{eqnarray}
\label{eq:MESbasis}
\ket{\Psi_1}=\frac{1}{\sqrt{2}}(\ket{00}+\ket{11}),\ \ \ &&\ket{\Psi_2}=\frac{i}{\sqrt{2}}(\ket{00}-\ket{11}),\nonumber\\
\ket{\Psi_3}=\frac{i}{\sqrt{2}}(\ket{01}+\ket{10}),\ \ \ &&\ket{\Psi_4}=\frac{1}{\sqrt{2}}(\ket{01}-\ket{10}).
\end{eqnarray}
It can be shown that this relates a single-qubit unitary channel $\mathcal{U}$ to a unit vector $r=(r_1, r_2, r_3, r_4)^T\in\rr^4$ as follows,
\begin{equation}
\label{eq:MBrep}
 \frac{1}{2}\left[ \mathcal{J}(\mathcal{U})\right]_{MB}=rr^T,
\end{equation}
where $\left([\cdot]_{MB}\right)_{ij} = \langle \Psi_i | \cdot | \Psi_j \rangle$.

\black{When we explicitly consider the distance between a target unitary $\hat{\mathcal{U}}$ and a mixture of unitaries $\sum_j p_j \mathcal{U}_j$, we can rewrite the expression of the diamond distance as 
\begin{eqnarray}
    d_{\diamond}\left(\hat{\mathcal{U}}, \sum_j p_j \mathcal{U}_j\right) = \frac{1}{2}\left\| \hat{\mathcal{U}} - \sum_j p_j \mathcal{U}_j \right\|_{\diamond} = 
    \frac{1}{2}\cdot \frac{1}{2} \left\|\mathcal{J}(\mathcal{U}) - \sum_j p_j \mathcal{J}(\mathcal{U}_j)\right\|_1  = 
    \frac{1}{2} \left\|rr^T - \sum_j p_j r^{(j)} r^{(j)T}\right\|_1,
\end{eqnarray}
where $r$ and $r^{(j)}$ are the unit vectors obtained from the magic basis representation of $\hat{\mathcal{U}}$ and $\mathcal{U}_j$, respectively.}
\black{
Furthermore, since the trace norm $\|A\|_1$ is simply a sum of diagonal elements when $A$ is diagonal, the diamond distance can be simplified 
when $\hat{\mathcal{U}}$ is identity and $\sum_j p_j \mathcal{U}_j$ is a single-qubit Pauli channel as
\begin{eqnarray}
    d_{\diamond}\left(\hat{\mathcal{U}}, \sum_j p_j \mathcal{U}_j \right) = \frac{1}{2} \left((1 - \sum_{j} p_j (r_1^{(j)})^2) + \sum_{k=2}^4 \sum_j p_j (r_{k}^{(j)})^2 \right) = 1 - \sum_j p_j (r_1^{(j)})^2,
\end{eqnarray}
where we have used $\sum_{k=2}^4 (r_k^{(j)})^2 = 1- (r_1^{(j)})^2$ for the second equation.
As we discuss in Sec.~\ref{sec:numerics_detail}, this simplification is beneficial since the minimization problem of the diamond distance can be formulated by linear programming problem instead of the cumbersome SDP.
}

\section{Numerical details on error crafting using unitaries}\label{sec:numerics_detail}

\subsection{Crafting Pauli channel}\label{subsec:numerics_pauli_constraint}
As discussed in the main text, the minimization problem regarding the error-crafted synthesis of a Pauli channel is formulated via the following constrained minimization problem of a probability distribution $\{p_j\}$ given a target unitary channel $\hat{\mathU}$ and a set of synthesized unitaries $\{\mathU_j\}$ as follows:
\begin{equation}
\label{eq:Paulierror}
{\rm minimize}\ 
\frac{1}{2}\diamondnorm{\hat{\mathU}-\sum_j p_j\mathU_j}\ \ \ {\rm subject\ to}\ \ \sum_j p_j \mathU_j\circ\hat{\mathU}^{-1}(\cdot)=\sum_{a}\chi_{aa} P_a\cdot P_a, \chi_{aa}\geq0,
\end{equation}
where $P_a \in \{{\rm I},{\rm X},{\rm Y}, {\rm Z}\}$ is a single-qubit Pauli operator. 
In a practical sense, an equality constraint tends to yield numerical instability, and therefore we relax the condition for the remnant error $\mathcal{E}_{\rm rem} = (\sum_j p_j \mathU_j) \circ \hat{\mathU}^{-1}$ to allow a small violation as 
\begin{equation}
\label{eq:Paulierror_relax}
{\rm minimize}\ 
\frac{1}{2}\diamondnorm{\hat{\mathU}-\sum_j p_j\mathU_j}\ \ \ {\rm subject\ to}\ \ \sum_j p_j \mathU_j\circ\hat{\mathU}^{-1}(\cdot)=\sum_{ab}\chi_{ab} P_a\cdot P_b, \sum_{a\neq b}|\chi_{ab}| \leq g_{1},
\end{equation}
where $g_1$ is a tolerance factor regarding the Pauli error constraint.

When the set $\{\mathU_j\}$ is not appropriately generated, there does not exist any feasible solution for Eq.~\eqref{eq:Paulierror}.
While numerical solvers such as Gurobi offer an automated way to relax the equation constraint, via the variable {\tt presolve}, we find that this precomputation is done too aggressively so that the accuracy of the minimization is quite poor.
Instead, in our work, we have done a logarithmic search of $g_1$, starting from machine precision of $g_1 = 10^{-16}$.
We have defined the error crafting to be successful if we find feasible solution such that (1)  $\min \frac{1}{2} \|\hat{\mathU} - \sum_j p_j \mathU_j\|_{\diamond} \leq (c+1)^2\epsilon^2,$ and (2) $g_{1}\leq 10^{-12}$ are both satisfied.

As we discussed in Sec.~\ref{sec:formalism_prob_synth}, the minimization problem boils down to a linear programming problem under the magic basis. 
\black{The minimization problem equivalent to Eq.~\eqref{eq:Paulierror} is 
\begin{equation}
\label{eq:magic_basis_LP_pauli}
 {\rm minimize}\ 1-\sum_j p_j\left(r^{(j)}_1\right)^2\ \ \ {\rm subject\ to}\ \ \sum_{j} p_j = 1, p_j \geq 0, \sum_j p_j r^{(j)} r^{(j)T} {\rm \ is\ diagonal,}
\end{equation}
where $r^{(j)}=(r_1^{(j)},r_2^{(j)},r_3^{(j)},r_4^{(j)})^T\in\rr^4$ corresponds to the magic basis representation of $\mathU_j\circ\hat{\mathU}^{-1}.$}
\black{After relaxing the equality constraint, the minimization problem is given as
\begin{equation}
\label{eq:magic_basis_LP_pauli_relaxed}
 {\rm minimize}\ 1-\sum_j p_j\left(r^{(j)}_1\right)^2\ \ \ {\rm subject\ to}\ \ \sum_{j} p_j = 1, p_j \geq 0, \sum_{k\neq k'}\left|\sum_j p_j r^{(j)}_k r^{(j)}_{k'}\right| \leq g'_1,
\end{equation}
 \black{where $g'_1$ is the tolerance factor in this representation}. 
 }
We find that accuracy is much enhanced when the minimization (and relaxation) is done under this representation.
In practice, we solve Eq.~\eqref{eq:magic_basis_LP_pauli_relaxed} using LP implemented in Gurobi or SciPy, and check if the conditions are satisfied.

\subsection{Crafting depolarizing channel}\label{subsec:numerics_depol_constraint}
Here, we provide numerical details regarding the error crafting of depolarizing channels.
In similar to Sec.~\ref{subsec:numerics_pauli_constraint}, the minimization problem is defined as
\begin{eqnarray}
\label{eq:Depolerror}
{\rm minimize}\ 
\frac{1}{2}\diamondnorm{\hat{\mathU}-\sum_j p_j\mathU_j}\ \ \ {\rm subject\ to}\ \ 
\sum_{a} \chi_{aa}=1, \chi_{\rm XX} = \chi_{\rm YY} = \chi_{\rm ZZ}\geq 0, \chi_{a \neq b} = 0,
\end{eqnarray}
where the set of shift unitary $\{\mathcal{V}_{j}^{(c \epsilon)}\}_{j=1}^9$ consists of nine elements, in contrast to seven for the Pauli constraint.
Again we relax the equality constraints into inequality constraints for the sake of numerical stability.
Concretely, we solve the following problem:
\begin{equation}
\label{eq:Depolerror_relaxed}
{\rm minimize}\ 
\frac{1}{2}\diamondnorm{\hat{\mathU}-\sum_j p_j\mathU_j}\ \ \ {\rm subject\ to}\ \ 
\sum_a \chi_{aa}=1, 
\max_{a'\neq b'} \{|\chi_{a'a'} - \chi_{b'b'}|\} \leq g_2,
 \sum_{a \neq b}|\chi_{ab}| \leq g_1,
\end{equation}
where $a \in \{{\rm I}, {\rm X}, {\rm Y}, {\rm Z}\}$ and $a', b' \in \{{\rm X}, {\rm Y}, {\rm Z}\}.$
We regard the error crafting to be successful if  (1)  $\min \frac{1}{2} \|\hat{\mathU} - \sum_j p_j \mathU_j\|_{\diamond} \leq (c+1)^2\epsilon^2$, (2) $g_1 \leq 10^{-12}$, and (3) $g_2 \leq 0.01 (\sum_{a'\in\{X,Y,Z\}} \chi_{a'a'}) \sim 0.01 \epsilon^2$ are all satisfied.
\black{For completeness, we mention that the minimization problem can also be expressed in the magic basis representation as 
\begin{eqnarray}
     && {\rm minimize}\ 1-\sum_j p_j\left(r^{(j)}_1\right)^2\ {\rm subject\ to}\ \ \sum_{j} p_j = 1, p_j \geq 0, \ \ \sum_j p_j r^{(j)}r^{(j)T } = (1-q){\mathbf e}^{(1)}{\mathbf e}^{(1)T} + \frac{q}{4}{\rm Id},
\end{eqnarray}
where ${\mathbf e}^{(i)}$ is a vector whose element is 0 except the $i$th element.
After relaxing the equality into inequality, we now have
\begin{eqnarray}
     && {\rm minimize}\ 1-\sum_j p_j\left(r^{(j)}_1\right)^2\ \ \ \nonumber\\
     && {\rm subject\ to}\ \ \sum_{j} p_j = 1, p_j \geq 0, \ \ 
     \max_{k \neq k'}\left\{\left|\sum_j p_j (r_k^{(j)} r_{k}^{(j)} - r_{k'}^{(j)}r_{k'}^{(j)})\right|\right\}\leq g'_2,\ \ 
     \sum_{k\neq k'}\left|\sum_j p_j r^{(j)}_k r^{(j)}_{k'}\right| \leq g'_1,
\end{eqnarray}
where $g'_2$ is the tolerance factor for the homogeneity of the Pauli error rates.
}

The results of error-crafted synthesis are shown in Fig.~\ref{fig:dnorm_quadratic_depol}.
We observe the validity of the nine-shift unitary from the bounded behavior in Fig.~\ref{fig:dnorm_quadratic_depol}(a). As in the case of the Pauli constraint, we also observe a boost in the accuracy by expanding the size of the basis set by a factor of $R$.
From Figs.~\ref{fig:dnorm_quadratic_depol}(b), (c), and (d), we further find that both increasing $c$ and $R$ contribute to the suppression of the failure probability of the error crafting.

\begin{figure}[h]
    \centering
    \includegraphics[width=0.98\linewidth]{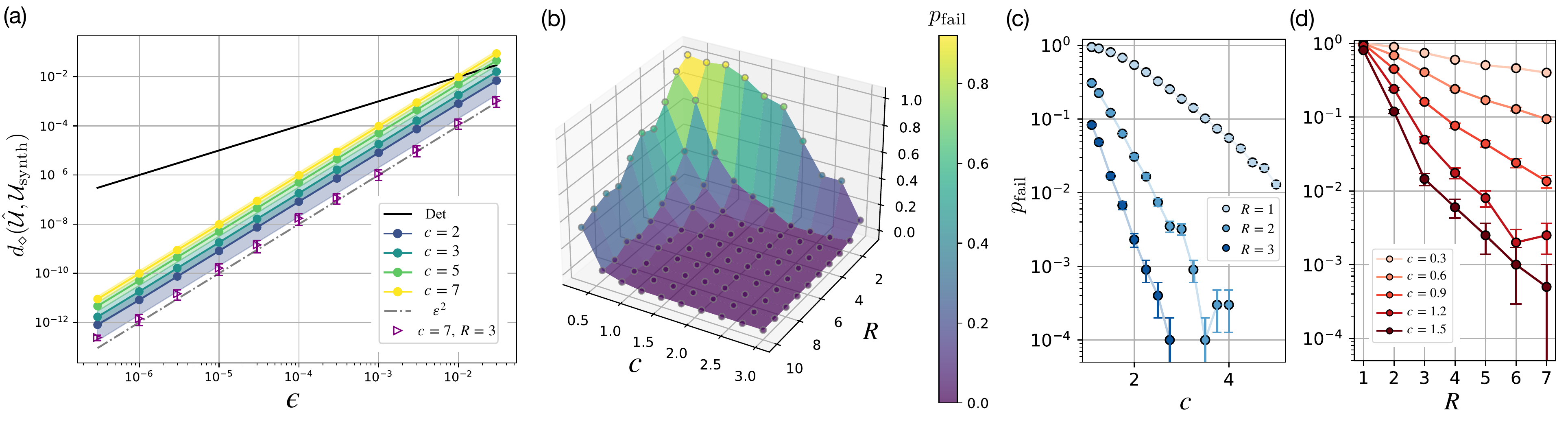}
    \caption{
    (a) Diamond distance $\ddist(\hat{\mathU}, \mathU_{\rm synth})$ between the target single-qubit Haar random unitary $\hat{\mathU}$ and synthesized channel $\mathU_{\rm synth} \coloneqq\sum_j p_j \mathU_j$, with the remnant error constrained to the single-qubit depolarizing channel. The black real line indicates the results from unitary synthesis by the Ross-Selinger algorithm~\cite{ross2016optimal}, while the filled dots are results from error-crafted synthesis with shift factors of $c=2, 3, 5, 7.$ When we expand the basis set size by a factor of $R=3$, shown by unfilled triangles, the effective synthesis error is suppressed below the bound in Eq.~\eqref{eq:proberrorDepo} and approaches the value of $\epsilon^2$.
    The shown data are mean over 200 \black{random instances of target unitaries}.
    (b) Failure rate of error-crafted synthesis, averaged over 50 \black{instances}.
    (c)(d) Suppression of $p_{\rm fail}$ with shift factor $c$ and $R$. The data is averaged over 10000 and 2000 \black{instances}, respectively.
    }
    \label{fig:dnorm_quadratic_depol}
\end{figure}

\subsection{Crafting XY channel} \label{subsec:XY_constraint}
\black{We finally describe the minimization problem to craft XY channels.
Namely, we wish to solve the following:
\begin{equation}
\label{eq:Depolerror}
{\rm minimize}\ 
\frac{1}{2}\diamondnorm{\hat{\mathU}-\sum_j p_j\mathU_j}\ \ \ {\rm subject\ to}\ \ 
\chi_{XX}+\chi_{YY}=1, \chi_{\rm ZZ} = 0, \chi_{a \neq b} = 0,
\end{equation}
where the set of shift unitary $\{\mathcal{V}_j^{(c \epsilon)}\}$ is generated according to the procedure shown in Sec.~\ref{sec:shift-unitary-gen}.
To enhance the numerical stability, we have relaxed the equality constraint for $\chi_{ZZ}$ into inequality as
\begin{equation}
\label{eq:Depolerror_relaxed}
{\rm minimize}\ 
\frac{1}{2}\diamondnorm{\hat{\mathU}-\sum_j p_j\mathU_j}\ \ \ {\rm subject\ to}\ \ 
\sum_a \chi_{a}=1, 
\max_{a'\neq b'} \{|\chi_{ZZ}|\} \leq g_2,
 \sum_{a \neq b}|\chi_{a \neq b}| \leq g_1,
\end{equation}
where tolerance factors $g_1$ and $g_2$ are chosen in similar to those for depolarizing channels.
Using the magic basis, the minimization problem is expressed as 
\begin{eqnarray}
     && {\rm minimize}\ 1-\sum_j p_j\left(r^{(j)}_1\right)^2\ \ \ \nonumber\\
     && {\rm subject\ to}\ \ \sum_{j} p_j = 1, p_j \geq 0, \ \ 
     \sum_j p_j r_2^{(j)} r_{2}^{(j)}=0,\ \ 
     \sum_j p_j r^{(j)}r^{(j)T} {\rm \ is \ diagonal},
\end{eqnarray}
and after the relaxation of the equality constraint into inequality we have
\begin{eqnarray}
     && {\rm minimize}\ 1-\sum_j p_j\left(r^{(j)}_1\right)^2\ \ \ \nonumber\\
     && {\rm subject\ to}\ \ \sum_{j} p_j = 1, p_j \geq 0, \ \ 
     \left|\sum_j p_j r_2^{(j)} r_{2}^{(j)}\right|\leq g'_2,\ \ 
     \sum_{k\neq k'}\left|\sum_j p_j r^{(j)}_k r^{(j)}_{k'}\right| \leq g'_1,
\end{eqnarray}
where $g'_2$ is the tolerance factor for suppressing the Z component of the Pauli channel and $g'_1$ is for non-unital components.
}




\begin{figure}[h]
    \centering
    \includegraphics[width=0.38\linewidth]{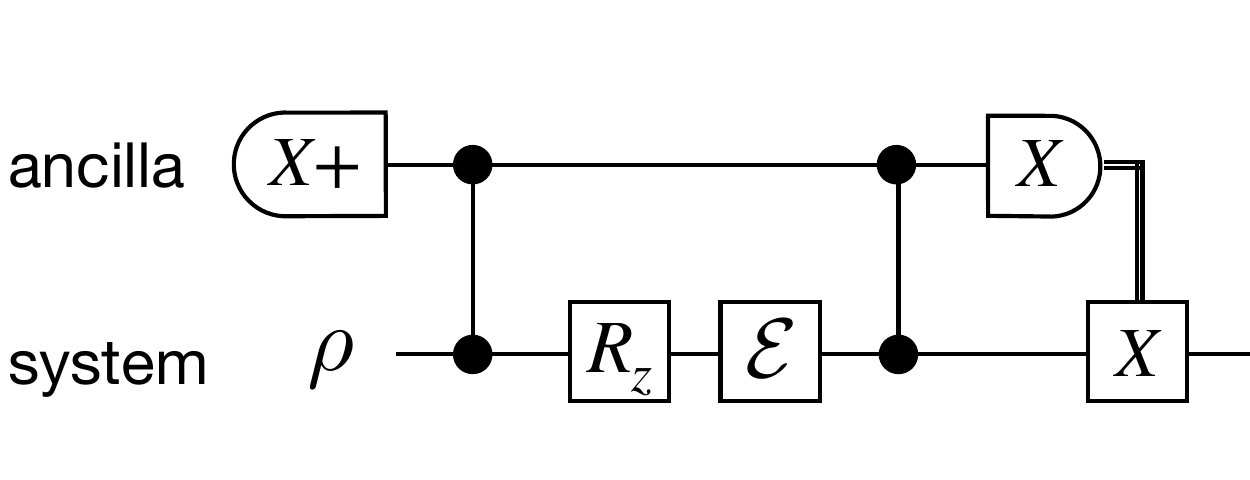}
    \caption{
    Measurement and feedback protocol that corrects X error in implementing a Pauli Z rotation. 
    }
    \label{fig:detection-circuit-SM}
\end{figure}

\section{Numerical details on error crafting using CPTP maps} \label{sec:craft-CPTP}
As discussed in the main text, the error of Pauli rotation can be suppressed significantly by applying the error crafting for CPTP maps that represents the channel of synthesized gates accompanied by error correction circuit as shown in Fig.~\ref{fig:detection-circuit-SM}.
\black{In similar to the cases in Sec.~\ref{sec:numerics_detail}, the minimization problem is defined as
\begin{eqnarray}
\frac{1}{2}\diamondnorm{\hat{\mathU}-\sum_j p_j\Lambda_j}\ \ \ {\rm subject\ to}\ \ \sum_j p_j \Lambda_j\circ\hat{\mathU}^{-1}(\cdot)=\sum_{a}\chi_{aa} P_a\cdot P_a, \chi_{aa}\geq0,    
\end{eqnarray}
where $\{\Lambda_j\}$ is the set of implementable CPTP maps.}
\black{Concretely for the case of Pauli rotation, with} the target unitary and synthesized unitary channels expressed as $\hat{\mathU}$ and $\mathU_j$, respectively, we can express the effective CPTP map $\Lambda_j$ and remnant error  
$\mathcal{E}_{{\rm rem}, j}$ obtained from logical error correction as
\begin{eqnarray}
\Lambda_j &=& \mathscr{T}\left(\mathcal{C} \circ \mathcal{M}_{X}^{(A)} \circ \mathcal{W}_{E}' \circ \mathcal{U}_j  \circ \mathcal{W}_E \circ (\Pi_{X+}^{(A)})\right), \\    
    \mathcal{E}_{{\rm rem},j} &=& \mathscr{T}\left(\mathcal{C} \circ \mathcal{M}_{X}^{(A)} \circ \mathcal{W}_{E}' \circ (\mathcal{U}_j \circ \hat{\mathcal{U}}^{-1}) \circ \mathcal{W}_E \circ (\Pi_{X+}^{(A)})\right),\label{eq:Erem_j}
\end{eqnarray}
where $\mathscr{T}$ trace outs the ancillary qubit, $\mathcal{C}$ is the correction operation conditioned by the result of the X-measurement on ancilla  $\mathcal{M}_{X}^{(A)}$, $\mathcal{W}_E,\mathcal{W}'_E$ is the entangling operation between the target and ancilla qubit, and $\Pi_{X+}^{(A)}$ is the state preparation (or force initialization) of $|+\rangle$ state on the ancilla.
Note that the expression of Eq.~\eqref{eq:Erem_j} follows from the fact that $\mathcal{W}_E$ and $\mathcal{W}'_E$ commutes with the target unitary $\hat{\mathcal{U}}.$

Without loss of generality, we consider the single-qubit Pauli Z rotation in the following.
First, note that, for the case of Pauli rotations, the goal of finding an appropriate set of CPTP maps and their weights $\{(p_j, \Lambda_j)\}_j$ is equivalent to finding an appropriate set of $\{p_j, \mathcal{E}_{{\rm rem}, j}\}_j$.
It is straightforward to see that the Pauli transfer matrix of a remnant error channel $\mathcal{E}_{{\rm rem}, j}$ for a Pauli Z rotation, denoted as $ \Gamma_{\mathcal{E}_{{\rm rem}, j}}$, can be expressed as 
\begin{eqnarray}
\Gamma_{\mathcal{E}_{{\rm rem}, j}} = 
    \begin{pmatrix}
        1 & 0 & 0 & 0 \\
         0& 1-\mu_j & \nu_j & 0 \\
         0& -\nu_j & 1-\mu_j & 0 \\
         0& 0 & 0 & 1 
    \end{pmatrix},
\end{eqnarray}
where $0 \leq \mu_j\leq 1$ and $\Gamma_{\mathcal{E}_{{\rm rem}, j}} \geq 0.$
This motivates us to find a pair of remnant error channels $\mathcal{E}_{{\rm rem}, +}$ and $\mathcal{E}_{{\rm rem},-}$ which satisfies (i) $\nu_+ > 0, \nu_- < 0$, (ii) $\mu_{\pm} = O(\epsilon^3)$\black{, which follows from the geometric interpretation described in the main text.}
Once such a pair is found, we take the mixture of them as
\begin{eqnarray}
    \tilde{\mathcal{E}}_{\rm rem} &=& \frac{\nu_-}{\nu_+ + \nu_-}\mathcal{E}_{{\rm rem}, +} + \frac{\nu_+}{\nu_+ + \nu_-}\mathcal{E}_{{\rm rem}, -}, \\
    \Gamma_{\tilde{\mathcal{E}}_{\rm rem}} &=& \begin{pmatrix}
        1 & 0 & 0 & 0 \\
         0& 1-\tilde{\mu} & 0 & 0 \\
         0& 0 & 1-\tilde{\mu} & 0 \\
         0& 0 & 0 & 1         
    \end{pmatrix},
\end{eqnarray}
which is a purely Z error channel with an error rate of $\tilde{\mu} = \frac{\nu_-}{\nu_+ + \nu_-} \mu_+ + \frac{\nu_+}{\nu_+ + \nu_-}\mu_-=O(\epsilon^3).$ 

We note that the current search algorithm is based on a brute-force search, and hence the computational complexity is expected to scale as  $O(K/\epsilon)$ where $K$ is the number of shift unitaries. We leave an open question of whether there is any bound on complexity to search the pair as in the above discussion. In the current implementation, we generate around 3000 unitaries to reach a remnant error of $10^{-6}$ and around 20000 unitaries to reach $10^{-9},$ with each unitary synthesis done in milliseconds using a direct search algorithm proposed in Ref.~\cite{morisaki2024}.
Note that each unitary synthesis is independent, and hence is embarrassingly parallelizable.

\section{Generation of shift unitaries with fixed diamond distance} \label{sec:shift-unitary-gen}
In this section, we discuss how to generate a set of 
 $S$ shift unitaries $\{\mathcal{V}^{\epsilon}_s\}_{s=1}^S$ which all satisfies $\ddist(\mathcal{V}_s^{\epsilon}, \mathcal{I})= \epsilon.$
 One of such a method is to generate Haar random unitaries and constrain the ``rotation angle" by noting that any single-qubit unitary can be written as $e^{i \theta \sum_{a=X,Y,Z} n_a P_a}$ where $(n_X, n_Y, n_Z)^T\in\mathbb{R}^3$ is a unit vector. 
 However, we have found that such a protocol is unsatisfactory in terms of efficiency in generating independent unitaries after unitary synthesis.
 Therefore, we instead utilize the magic basis representation of a single-qubit unitary. To be precise, given a unit vector $\tilde{v}\in \mathbb{R}^3,$ we can generate a single-qubit unitary that is expressed in the magic basis representation as
\begin{eqnarray}
    e e^T = \frac{1}{2}\left[ \mathcal{J}(\mathcal{U}) \right]_{MB}{\rm~with~}e=\begin{pmatrix}
        \sqrt{1-\epsilon^2} \\
        \epsilon \tilde{v}
    \end{pmatrix}.\label{eq:mb_choivec}
\end{eqnarray}
Therefore, we may generate unit vectors in order to generate a set of shift unitaries with a fixed value of the diamond distance as follows:
\begin{enumerate}
    \item[] Step 1. Generate a set of unit vectors $\{\tilde{v}_s | \tilde{v}_s\in \mathbb{R}^3, \|\tilde{v}_s\|_2=1\}_{s=1}^S$ so that the vectors are distributed on the surface of the sphere as homogeneously as possible.
    \item[] Step 2. Compute the Choi matrix in the magic basis representation, $\{[\mathcal{J}(\mathcal{V}^{(\epsilon)}_s)]_{MB}\},$ using the relationship ~\eqref{eq:mb_choivec}.
    \item[] Step 3. Perform basis transformation to obtain $\{\mathcal{J}(\mathcal{V}_s^{\epsilon})\}$ in the computational basis, and then compute the matrix representation of shift unitaries $\{V_s^{\epsilon}\}$.
\end{enumerate}
In practice, we have employed numerical optimization for Step 1.

\section{White noise approximation under coherent errors}\label{sec:wn_coherent}
The white noise, or equivalently the global depolarizing noise on $d$-dimensional system acting as $\mathcal{E}_{wn}(\cdot) = (1-p)\cdot + p {\rm Id}/d$ under the rate $p$, is one of the most desirable error profile in the context of estimating expectation values of physical observables. 
The main reasons are two-fold: $\mathcal{E}_{wn}$ commutes with any gate operations, and it allows unbiased estimation with optimal sampling overhead.
All the expectation values of traceless observables shrink homogeneously by a factor of $1-p$ under a single application of $\mathcal{E}_{wn}$, and hence we can unbiasedly estimate the expectation value of an observable $O$ from a noisy estimation \black{as $\ev*{O} = (1-p)^{-1} \ev*{O}_{\rm noisy}$}. In the case when we have $L$ applications of $\mathcal{E}_{wn}$, we can simply replace \black{$(1-p)^{-1}$} with \black{$(1-p)^{-L}$ and mitigate error with minimal sampling overhead of $(1-p)^{-2L}$}.

The white noise has been argued to be realized in experiments, in particular when the unitary gates are chosen at random~\cite{arute2019quantum}.
In fact, one can prove that, even if the qubit connectivity is only linear and the circuit constitutes a brick-wall structure with random Haar unitary gates for two qubits, the effective noise converges to the white noise under unital noise~\cite{dalzell2024random, deshpande2022tight}.
\black{However, these arguments mainly consider incoherent error, and how coherent error can affect the white-noise approximation is not fully examined.
Given that the synthesis error is inevitable in the early FTQC regime, it is crucial to investigate the effect of coherent errors on the white-noise approximation.
}

\black{
Let us assume that one aims to simulate a $n$-qubit layered quantum circuit $\hat{\mathcal{U}}:=\mathcal{U}_{L+1} \circ\mathcal{U}_L \circ \cdots \circ \mathcal{U}_1$ which is exposed to noise channels as $\mathcal{U}=\mathcal{U}_{L+1} \circ\mathcal{E} \circ \mathcal{U}_L \circ \cdots \mathcal{E} \circ \mathcal{U}_1.$
We define the effective noise channel $\mathcal{E}'$ by re-expressing the noisy layered circuit as $\mathcal{U} = \mathcal{E}' \circ \hat{\mathcal{U}}$, 
When we expect that $\mathcal{E}'$ is close to the global white noise, we can mitigate errors and restore the ideal expectation value via some constant factor $R$, namely by rescaling the noisy one as $\mathrm{tr}[\rho O] =  R\mathrm{tr}[\mathcal{E}'(\rho) O]$.
Especially when each unitary layer $\mathcal{U}_l(\cdot) = U_l\cdot U_l^\dag$ is chosen randomly from $n$-qubit Haar random unitary, we can bound the bias between the ideal and the rescaled expectation value as
\begin{eqnarray}
    \label{eq_bias_general}
    \mathbb{E}_{U}\left[\big|R\mathrm{tr}[\mathcal{E}'(\rho)O] - \mathrm{tr}[\rho O]\big|\right]
    \leq \sqrt{1 - \qty(\frac{s^2}{u})^L },
\end{eqnarray}
by choosing the rescaling factor as $R = (s/u)^L$ (Theorem~S2 of \cite{tsubouchi2024symmetric}).
Here, $u$ and $s$ are noise-dependent constants called {\it unitarity} and {\it average noise strength}, respectively, which can be represented as
\begin{eqnarray}
    u &=& \frac{\sum_{ij} |\mathrm{tr}[E_iE_j^\dagger]|^2-1}{4^n-1} \\
    s &=& \frac{\sum_i |\mathrm{tr}[E_i]|^2 - 1}{4^n-1},
\end{eqnarray}
where $E_i$ is the Kraus operators of the noise $\mathcal{E}$.
}

\black{
For incoherent error such as the local depolarizing noise with uniform error rate $\mathcal{E} = \bigcirc_{i=1}^n \mathcal{E}^{(i)}_{\mathrm{dep}}$, where $\mathcal{E}^{(i)}_{\mathrm{dep}}(\cdot) = (1-\frac{3}{4}p)\cdot + \frac{p}{4}(X_i\cdot X_i + Y_i \cdot Y_i + Z_i \cdot Z_i)$ is the single-qubit local depolarizing channel on the $i$-th qubit, we obtain
\begin{equation}
    \mathbb{E}_{U}[|R\mathrm{tr}[\mathcal{E}'(\rho)O] - \mathrm{tr}[\rho O]|]
    \leq O(\sqrt{nL}p) \;\;\text{with}\;\; R=(1+O(p))^{nL}
\end{equation}
from Eq. \eqref{eq_bias_general}.
Meanwhile, for coherent error $\mathcal{E} = \bigcirc_{i=1}^n \mathcal{E}^{(i)}_{\mathrm{coh}}$ with $\|\mathcal{E}_{\rm coh}^{(i)} - {\rm Id}\|_{\diamond} = \epsilon_{\rm coh}$, we obtain
\begin{equation}
    \mathbb{E}_{U}[|R\mathrm{tr}[\mathcal{E}'(\rho)O] - \mathrm{tr}[\rho O]|]
    \leq O(\sqrt{nL}\epsilon_{\rm coh}) \;\;\text{with}\;\; R=(1+O(\epsilon_{\rm coh}^2))^{nL}.
\end{equation}
These results mean that, in terms of the sampling overhead $R^2$, coherent errors have quadratically suppressed sampling overhead compared to incoherent errors (see also Table~\ref{tbl_whitenoise}).
However, its effect on the bias is linear as in the case of incoherent errors.
Therefore, it is preferable to suppress its error rate through randomized compiling or error crafting, and improve the performance of the white-noise approximation.
}

\begin{table}[t]
  \caption{\black{Performance of the QEM through rescaling the noisy expectation value assuming the white-noise approximation for incoherent error with error rate $p$ and coherent error with diamond norm $\epsilon_{\rm coh}^2$. While coherent errors have quadratically suppressed sampling overhead, their effect on the bias is comparable with incoherent errors.}}
  \label{tbl_whitenoise}
  \centering
  \begin{tabular}{c||c|c}
      Error model & Bias & Sampling overhead  \\\hline\hline
      Incoherent error  & $O(\sqrt{nL}p)$ & $(1+O(p))^{2nL}$\\
      Coherent error   & $O(\sqrt{nL}\epsilon_{\rm coh})$ & $(1+O(\epsilon_{\rm coh}^2))^{2nL}$\\
  \end{tabular}
\end{table}

\black{We can observe a similar scaling numerically.}
Before discussing the effect of the coherent noise, let us briefly summarize the numerical argument for incoherent noise, \black{especially the local depolarizing noise $\mathcal{E} = \bigcirc_{i=1}^n \mathcal{E}^{(i)}_{\mathrm{dep}}$}, provided in Ref.~\cite{tsubouchiUniversalCostBound2023}.
We analyze the convergence to white noise by examining the Pauli transfer matrix (PTM) of the effective noise $\mathcal{E}'$, denoted hereafter as $\Gamma$.
We define $\Gamma$ by $\Gamma_{ij} = {\rm Tr}[P_i \mathcal{E}'(P_j)]/2^n$ for $P_i, P_j \in \{I, X, Y, Z\}^{\otimes n}$ for $n$-qubit system. Note that PTM is diagonal if $\mathcal{E}'$ is a Pauli channel, and furthermore there exists $ 0\leq w \leq 1$ such that $\Gamma_{ii}=w~(i\neq 0)$ if $\mathcal{E}'$ is the white noise.
In this regard, it is useful to compute the singular values of the PTM in order to investigate the closeness to the white noise.
We in particular compute  $\Gamma = V \Lambda W^\dagger$ with $\Lambda = {\rm diag}\{\lambda_i\}_{i=1}^{4^n}$, and define the damping factor $k_i$ as follows,
\begin{eqnarray}
    \lambda_i = (1-p)^{-k_i L}.
\end{eqnarray}
It can be proven that, when each unitary layer consists of unitary 2-design, the damping factor converges to $k_{\rm mean}= \frac{3n}{4} \frac{4^n}{4^n - 1}$, and numerical analysis supports that this also holds for weaker randomness, e.g., 2-local random circuits with linear connectivity, when the circuit depth is sufficiently deep.
In other words, the mean deviation $\mathbb{E}_i[|k_i - k_{\rm mean}|]$, as well as the deviation of extremals $|k_{\rm max(min)} - k_{\rm mean}|$, is suppressed as $O(1/\sqrt{L})$~\cite{tsubouchiUniversalCostBound2023}, which is consistent with the suppression of total variation distance in measurement probability distribution in the computational basis~\cite{dalzell2024random}.

Now we ask how the above behavior is affected by the presence of coherent errors.
Here, for the sake of simplicity of the argument, we sample $U_i$ uniformly from the entire unitary group, i.e., take an $n$-qubit  Haar random unitary, and consider a local noise as $\mathcal{E}_l = \bigcirc_{i=1}^n \mathcal{E}_{\rm coh}^{(i)} \circ \mathcal{E}_{\rm dep}^{(i)},$ where $\mathcal{E}_{\rm coh}^{(i)}$ is a random 1-qubit unitary such that $\|\mathcal{E}_{\rm coh}^{(i)} - {\rm Id}\|_{\diamond} = \epsilon_{\rm coh}.$
In practice, this can be realized by constraining the rotation angles under the axial decomposition of the unitary.

Interestingly, we find that the effect of coherent error on the magnitude of the white noise is quadratically suppressed, while that on the convergence rate is linear.
In Fig.~\ref{fig:whitenoise_convergence}(a)(b) we can see that, although the fluctuation in the damping factor is expanded at the shallow-depth regime, it readily converges at the rate of $O(1/\sqrt{L})$, which is in similar to the depolarizing-only case~\cite{tsubouchiUniversalCostBound2023}.
We observe that the deviation $\max_k\{|k - k_{\rm mean, dep}|\}$ is lifted approximately by a factor of $(\epsilon_{\rm coh}+\epsilon_{\rm dep})/\epsilon_{\rm dep}$ where $\epsilon_{\rm dep}:= \|\mathcal{E}_{\rm dep}^{(i)} - {\rm Id}\|_{\diamond}$, implying the slow down of the white noise convergence.
Meanwhile, as shown in Fig.~\ref{fig:whitenoise_convergence}(c), we notice that the converged effective channel is less affected, in a sense that $|k_{\rm mean, tot} - k_{\rm mean, dep}|\propto \epsilon^2.$
\black{Phenomenologically, we can understand that the effect of coherent noise is self-twirled by the circuit such that the diamond norm is suppressed quadratically as in the random compilation scheme~\cite{odake2024robust}.}
These results indicate that, although the presence of coherent error does not severely increase the sampling overhead to perform error countermeasures such as the rescaling technique, the adequacy of the approximation is degraded \black{and hence desirable to avoid them by error-crafted synthesis.}

\begin{figure}[h]
    \centering
    \includegraphics[width=0.98\linewidth]{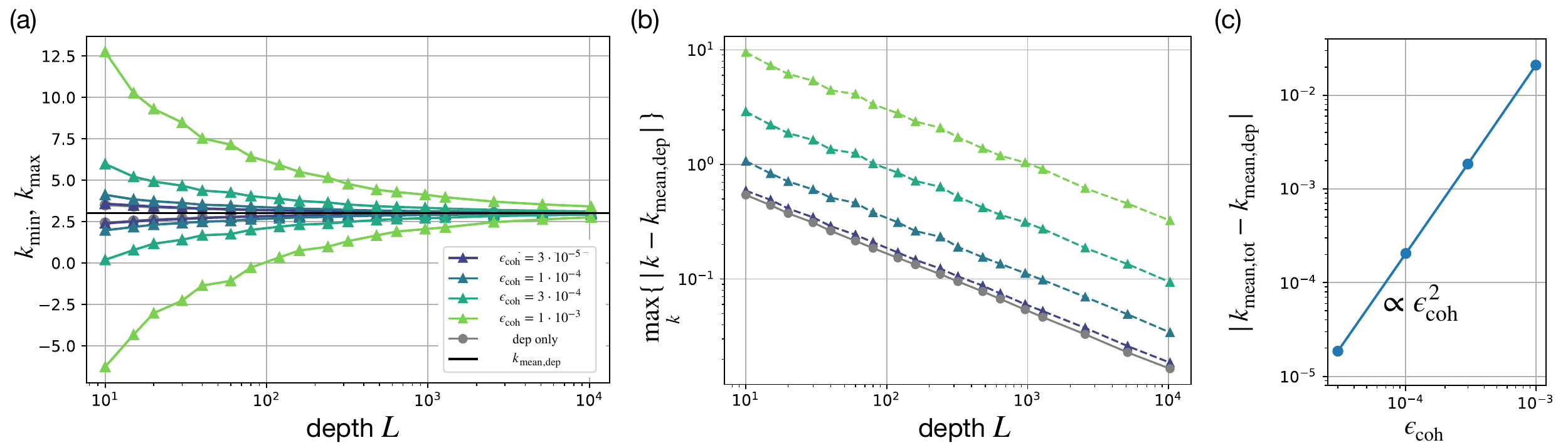}
    \caption{Convergence to white noise  in $n=4$ qubit quantum circuit. (a) Extremal values of damping factors $k_{\rm min}$ and $k_{\rm max}$ at various $L$. The unitary layer is drawn uniform-randomly from the $n$-qubit  unitary group. Each layer is subject to error $\bigcirc_{i=1}^N \mathcal{E}_{\rm coh} \circ \mathcal{E}_{\rm dep}$ with depolarizing error rate of $p=10^{-4}.$ (b) The largest deviation from $k_{\rm mean,dep},$ the value realized when only the depolarizing noise is present ($\epsilon_{\rm coh}=0$).
    (c) The quadratic suppression of damping factor $k$ with respect to the contribution from coherent noise. \black{Here, we compute $|k_{\rm mean, tot} - k_{\rm mean, dep}|$ at $L=10^4$.}
    }
    
    \label{fig:whitenoise_convergence}
\end{figure}

\section{Theoretical guarantees for error crafted synthesis} \label{sec:guarantee_crafting}
\subsection{Crafting Pauli channel (Proof of Theorem 1)}\label{subsec:guarantee_pauli_constraint}
In this section, we provide the formal statement and proof for Theorem~\ref{thm1_informal} in the main text.
First, we rewrite the Pauli-constrained minimization problem using the magic basis introduced in Sec.~\ref{sec:formalism_prob_synth} to obtain the following,
\begin{equation}
\label{eq:Paulierrorsimple}
 {\rm minimize}\ 1-\sum_j p_j\left(r^{(j)}_1\right)^2\ \ \ {\rm subject\ to}\ \ \sum_{j} p_j = 1, p_j \geq 0,\  \sum_{j} p_j r^{(j)}r^{(j)T}\ {\rm is\ diagonal},
\end{equation}
where $r^{(j)}=(r_1^{(j)},r_2^{(j)},r_3^{(j)},r_4^{(j)})\in\rr^4$ corresponds to the magic basis representation of the error channel for the $j$-th basis $\mathU_j\circ\hat{\mathU}^{-1}$. 
As we described in Sec.~\ref{sec:formalism_prob_synth}, this optimization can be solved by linear programming if the solution exists. Thus, the only nontrivial part for tailoring Pauli error is how to guarantee the existence of a solution.
As we already mentioned in the main text, we rigorously guarantee the existence of such a basis set:

\begin{thmS} (Detailed version of Theorem~\ref{thm1_informal}.)\label{thm2_new}
There exist a positive numbers $c$, $\epsilon_0$ and unit vectors $\{\hat{\vec{v}}^{(j)}\in\rr^3\}_{j=1}^7$ such that
for any given single-qubit unitary $\hat{\mathU}$ and for any $\epsilon\in(0,\epsilon_0]$, if we construct target unitaries $\{\hat{\mathU}_j\}_{j=1}^7$ satisfying
\begin{equation}
\frac{1}{2} \left[\mathJ(\hat{\mathU}_j\circ\hat{\mathU}^{-1})\right]_{MB}=\hat{e}^{(j)}\hat{e}^{(j)T}\ \  {\rm with} \ \  \hat{e}^{(j)}=
\begin{pmatrix}
 \sqrt{1-(c\epsilon)^2}\\
 c\epsilon\hat{\vec{v}}^{(j)}
\end{pmatrix}
\end{equation}
and obtain $\{\mathU_j\}_j$ via unitary synthesis such that $\frac{1}{2}\diamondnorm{\hat{\mathU}_j-\mathU_j}\leq\epsilon$, then a solution in Eq.~\eqref{eq:Paulierror} exists. Moreover, it holds that
\begin{equation}
\label{eq:proberrorPauli}
 ((c-1)\epsilon)^2 \leq \frac{1}{2}\diamondnorm{\hat{\mathU}-\sum_{j=1}^7 p_j \mathU_j}\leq((c+1)\epsilon)^2
\end{equation}
with the optimal probability distribution $\{p_j\}$ that induces a Pauli synthesis error.
\end{thmS}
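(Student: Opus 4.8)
The plan is to recast everything in the magic-basis parametrisation of Supplementary Note~\ref{sec:formalism_prob_synth} and then combine an explicit ``nominal'' vector configuration with a robustness argument. Recall that a single-qubit unitary channel is encoded by a unit vector $r\in\rr^4$ through $\tfrac12[\mathJ(\cdot)]_{MB}=rr^T$, that $\ddist$ between single-qubit mixed-unitary channels equals $\tfrac12\|M_1-M_2\|_1$ with $M_i=\tfrac12[\mathJ(\Phi_i)]_{MB}$, and that composing with a fixed unitary leaves the diamond norm invariant. Let $r^{(j)}$ denote the vector of the \emph{actual} error channel $\mathU_j\circ\hat{\mathU}^{-1}$ and $\hat e^{(j)}=(\sqrt{1-(c\epsilon)^2},\,c\epsilon\,\hat{\vec{v}}^{(j)})$ that of $\hat{\mathU}_j\circ\hat{\mathU}^{-1}$. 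Then the synthesis guarantee $\tfrac12\diamondnorm{\hat{\mathU}_j-\mathU_j}\le\epsilon$ becomes $\sqrt{1-(r^{(j)}\cdot\hat e^{(j)})^2}\le\epsilon$, i.e.\ $\|r^{(j)}-\hat e^{(j)}\|\le\sqrt2\,\epsilon$, and the Pauli constraint in \eqref{eq:Paulierrorsimple} becomes the requirement that $M:=\sum_j p_j r^{(j)}r^{(j)T}$ be diagonal in the magic basis (where diagonal $M$ is precisely a Pauli channel, and $M\succeq0$ is automatic as a mixture of rank-one projectors, so the $\chi_{aa}\ge0$ condition comes for free). This requirement is independent of $\hat{\mathU}$ and linear in $p$; combined with $\sum_j p_j=1$ it is a $7\times7$ linear system $A(\{r^{(j)}\})\,p=b$, $b=(0,\dots,0,1)^T$. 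Thus the only real task is to produce a fixed $\{\hat{\vec{v}}^{(j)}\}_{j=1}^7$, a fixed $c$, and a small $\epsilon_0$ so that this system has a solution in the interior of the simplex for \emph{every} admissible $\{r^{(j)}\}$; everything else follows.

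Write $r^{(j)}=(\sqrt{1-\rho_j^2},\,\rho_j\hat{n}^{(j)})$. The triangle inequality with $\ddist(\hat{\mathU}_j,\hat{\mathU})=c\epsilon$ and $\ddist(\mathU_j,\hat{\mathU}_j)\le\epsilon$ gives $\rho_j\in[(c-1)\epsilon,(c+1)\epsilon]$, and $\|r^{(j)}-\hat e^{(j)}\|\le\sqrt2\epsilon$ forces $\hat{n}^{(j)}$ within angle $O(1/c)$ of $\hat{\vec{v}}^{(j)}$. The rows of $A$ coming from the $(1,k)$ entries of $M$ are of order $\epsilon$ and those from the $(k,l)$ entries ($2\le k<l\le4$) of order $\epsilon^2$; rescaling these rows by $\epsilon^{-1}$ and $\epsilon^{-2}$ and pulling out powers of $c$ yields $A=\mathrm{diag}(c\epsilon,c\epsilon,c\epsilon,c^2\epsilon^2,c^2\epsilon^2,c^2\epsilon^2,1)\,B$, where the $j$-th column of $B$ is $\big(\phi_j\hat{n}^{(j)};\ \psi_j(\hat{n}^{(j)}_k\hat{n}^{(j)}_l)_{k<l};\ 1\big)$ with $\phi_j=\tfrac{\rho_j}{c\epsilon}\sqrt{1-\rho_j^2}$, $\psi_j=\tfrac{\rho_j^2}{c^2\epsilon^2}$ both within $O(1/c)+O((c\epsilon)^2)$ of $1$, and $A^{-1}b=B^{-1}b$. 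At the ``ideal point'' $\hat{n}^{(j)}=\hat{\vec{v}}^{(j)}$, $\phi_j=\psi_j=1$, the matrix $B$ becomes $B_0$ with columns $\big(\hat{\vec{v}}^{(j)};\ (\hat v^{(j)}_k\hat v^{(j)}_l)_{k<l};\ 1\big)\in\rr^7$. Since $1,v_1,v_2,v_3,v_1v_2,v_1v_3,v_2v_3$ are linearly independent functions on $S^2$ (one constant, three degree-one, three degree-two harmonics), $B_0$ is invertible for generic $\{\hat{\vec{v}}^{(j)}\}$; I further ask that the unique nominal solution $\hat p:=B_0^{-1}b$ — equivalently $\sum_j\hat p_j\hat{\vec{v}}^{(j)}=0$, $\sum_j\hat p_j\hat{\vec{v}}^{(j)}\hat{\vec{v}}^{(j)T}$ diagonal, $\sum_j\hat p_j=1$ — be strictly positive. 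Such a configuration exists: start from the regular tetrahedron together with an equatorial $120^\circ$ triangle (these satisfy the nominal relations with $\hat p_j=1/7$, though that symmetric choice makes $B_0$ singular with one-dimensional kernel) and apply a generic small rotation of the $\hat{\vec{v}}^{(j)}$ chosen to lift the degeneracy while keeping the induced weights positive — or simply exhibit a suitable configuration numerically.

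Fixing $\{\hat{\vec{v}}^{(j)}\}$ as above, $(\hat{n}^{(j)},\phi_j,\psi_j)\mapsto(\det B,B^{-1}b)$ is continuous near the ideal point, where $\det B_0\ne0$ and $B_0^{-1}b=\hat p>0$, so there is a neighbourhood $N$ of the ideal point on which $B$ is invertible and $B^{-1}b$ stays strictly positive. By the above estimates, $(\hat{n}^{(j)},\phi_j,\psi_j)$ deviates from its ideal value by at most $O(1/c)+O((c\epsilon_0)^2)$, \emph{uniformly} in $\hat{\mathU}$, in $\epsilon\in(0,\epsilon_0]$ and in the synthesis output. Choosing $c$ large enough to absorb the $O(1/c)$ part into $N$, and then $\epsilon_0$ small enough to absorb the $O((c\epsilon_0)^2)$ part, the whole admissible region lies in $N$, so $p:=A^{-1}b=B^{-1}b$ is an interior probability distribution making $M$ diagonal: the required feasible point for \eqref{eq:Paulierrorsimple}. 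Finally, for \emph{any} feasible $p$ the diagonal $M$ is block-diagonal, $M=\mathrm{diag}(1-\sigma^2,M')$ with $M'\succeq0$ and $\mathrm{tr}\,M'=\sigma^2:=\sum_j p_j\rho_j^2$, hence $\ddist(\hat{\mathU},\sum_j p_j\mathU_j)=\tfrac12\|r_0r_0^T-M\|_1=\sigma^2=\sum_j p_j\rho_j^2$; since $\rho_j=\ddist(\mathU_j,\hat{\mathU})\in[(c-1)\epsilon,(c+1)\epsilon]$, the bounds \eqref{eq:proberrorPauli} follow for every feasible distribution, in particular for the optimiser.

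The main obstacle is the robustness step: the intended transverse shift $c\epsilon\,\hat{\vec{v}}^{(j)}$ and the synthesis error $\le\sqrt2\,\epsilon$ sit at the \emph{same} order in $\epsilon$, so the error is not an infinitesimal perturbation of the $\epsilon\to0$ problem — the entire transverse part of $r^{(j)}$ is ``noisy'' and the raw constraint matrix degenerates as $\epsilon\to0$. The two-scale row rescaling is exactly what resolves this: it sends the problem, in the limit $c\to\infty$, onto the clean rank-seven configuration $B_0$, after which continuity of matrix inversion suffices — at the price that only \emph{some} finite $c$ is obtained, which is why $c$ appears existentially and feasibility for concrete small $c$ is checked numerically in the main text. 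A secondary subtlety, dealt with in the configuration step, is that the most symmetric natural candidates for $\{\hat{\vec{v}}^{(j)}\}$ are degenerate, so one must perturb them to make $B_0$ invertible without spoiling positivity of the nominal weights.
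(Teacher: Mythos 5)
Your proof follows essentially the same skeleton as the paper's: recast the Pauli constraint in the magic basis, observe that it is a linear system in $p$ whose coefficient matrix degenerates as $\epsilon\to0$, cure the degeneracy by rescaling rows by $(c\epsilon)^{-1}$ and $(c\epsilon)^{-2}$, identify the clean limit matrix (your $B_0$, the paper's $D\hat{A}_6$), and invoke continuity of matrix inversion to transport invertibility and strict sign conditions from the limit to finite $c$ and small $\epsilon$. The closing step — $\ddist(\hat{\mathU},\sum_j p_j\mathU_j)=\sum_j p_j\rho_j^2$ with $\rho_j\in[(c-1)\epsilon,(c+1)\epsilon]$ by the triangle inequality — is the same as in the paper and correct.

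There is, however, one genuine gap and one place where you are substantially less careful than the argument requires. The gap is in the configuration step: the theorem asserts the \emph{existence} of seven unit vectors $\{\hat{\vec{v}}^{(j)}\}$ for which the limit system is nondegenerate and yields strictly positive nominal weights, but you never actually exhibit one. The tetrahedron-plus-equatorial-triangle configuration you propose satisfies the nominal moment conditions with $\hat p_j=1/7$, but, as you yourself note, it makes $B_0$ singular, so $B_0^{-1}b$ is not even defined there; the claim that ``a generic small rotation lifts the degeneracy while keeping the induced weights positive'' is plausible but unproven, and positivity of the perturbed weights is a delicate point (the perturbed solution is unique and could a priori drift off the simplex). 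The paper avoids this entirely by writing down seven explicit vectors, computing $D\hat A_6$, its inverse, and the vector $\hat A_6^{-1}\hat m^{(7)}$, and checking that all six of its entries are strictly negative (equal to $-\sqrt3/3$ or $-2/3$), which is exactly the quantitative margin the robustness argument needs. Without such an explicit (or otherwise verified) configuration, your proof is incomplete.

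The second issue is the robustness step. You assert that the column deviation from $B_0$ is $O(1/c)+O((c\epsilon_0)^2)$ \emph{uniformly} in $\hat{\mathU}$, $\epsilon\le\epsilon_0$, and the synthesis output, and then choose $c$ then $\epsilon_0$. That is the right intuition, but it is not established; in particular your ``ideal point'' identification $\phi_j=\psi_j=1$ is only correct after taking $\epsilon\to0$ (at $\rho_j=c\epsilon$ one has $\phi_j=\sqrt{1-(c\epsilon)^2}\neq1$), which is precisely the kind of $\epsilon$-dependence that has to be controlled. The paper handles this by bounding $\lpnorm{2}{Dm^{(j)}-D\hat m^{(j)}}$ by an explicit function $g_c(\epsilon,t)$, proving $g_c$ is uniformly continuous on $(0,\epsilon_0]\times[-1,1]$, and using that uniform continuity to justify interchanging $\lim_{\epsilon\to0}$ with $\max_{|t|\le1}$ — this is the content of Supplementary Note~\ref{app:technical-lemma} and is not automatic. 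Your argument would need a comparable quantitative estimate to be complete; as written it is a sketch of the correct mechanism rather than a proof of it. On the positive side, your observation that $\{1,v_1,v_2,v_3,v_1v_2,v_1v_3,v_2v_3\}$ are linearly independent on $S^2$ nicely explains \emph{why} seven basis unitaries suffice and why generic configurations make $B_0$ invertible — that is a cleaner conceptual motivation than the paper offers, even though it does not by itself pin down a configuration with positive weights.
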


\begin{proof}
 
By letting $r^{(j)}$ be the magic basis representation of $\mathU_j\circ\hat{\mathU}^{-1}$ and defining
\begin{equation}
\label{eq:defofAL}
 A_L=
\begin{pmatrix}
 r^{(1)}_1r^{(1)}_2&r^{(2)}_1r^{(2)}_2&\cdots&r^{(L)}_1r^{(L)}_2\\
 r^{(1)}_1r^{(1)}_3&r^{(2)}_1r^{(2)}_3&\cdots&r^{(L)}_1r^{(L)}_3\\
  r^{(1)}_1r^{(1)}_4&r^{(2)}_1r^{(2)}_4&\cdots&r^{(L)}_1r^{(L)}_4\\
  r^{(1)}_2r^{(1)}_3&r^{(2)}_2r^{(2)}_3&\cdots&r^{(L)}_2r^{(L)}_3\\
  r^{(1)}_2r^{(1)}_4&r^{(2)}_2r^{(2)}_4&\cdots&r^{(L)}_2r^{(L)}_4\\
  r^{(1)}_3 r^{(1)}_4&r^{(2)}_3r^{(2)}_4&\cdots&r^{(L)}_3r^{(L)}_4
\end{pmatrix}
=
\begin{pmatrix}
 m^{(1)}&m^{(2)}&\cdots&m^{(L)}
\end{pmatrix},
\end{equation}
we can describe the non-diagonal elements in $\sum_{j=1}^Lp_j r^{(j)}r^{(j)T}$ obtained by mixing $\{\mathU_j\}_{j=1}^L$ as a vector $A_Lp$.

Suppose that $A_6$ is invertible. Then, it is sufficient to show $(A_6^{-1}m^{(7)})_i\leq 0$ for any $i\in\{1,\cdots,6\}$. This is because $\mathbf{e}^{(i)T}A_6^{-1}m^{(k)}=\delta_{ik}$, where $\delta_{ik}$ is the Kronecker delta and $\mathbf{e}^{(i)}$ is the unit vector whose element is $0$ except the $i$-th element, 
\begin{eqnarray}
 \forall i\in\{1,\cdots,6\},\mathbf{e}^{(i)T}A_6^{-1}m^{(7)}\leq 0&\Leftrightarrow&\exists \tilde{p}:\{1,\cdots,6\}\rightarrow[0,\infty),m^{(7)}=-\sum_{j=1}^6\tilde{p}_j m^{(j)},
\end{eqnarray}
and we can verify $A_7p=0$ by setting $p_j=\frac{\tilde{p}_j}{1+\sum_{j'=1}^6\tilde{p}_{j'}}$ for $j\in\{1,\cdots,6\}$ and $p_7=\frac{1}{1+\sum_{j=1}^6\tilde{p}_j}$.

In the following, we show that $A_6$ is invertible and $\max_i(A_6^{-1}m^{(7)})_i\leq 0$ by using the continuity of the matrix inverse.
Define $D=diag(\frac{1}{c\epsilon\sqrt{1-(c\epsilon)^2}},\frac{1}{c\epsilon\sqrt{1-(c\epsilon)^2}},\frac{1}{c\epsilon\sqrt{1-(c\epsilon)^2}},\frac{1}{(c\epsilon)^2},\frac{1}{(c\epsilon)^2},\frac{1}{(c\epsilon)^2})$. By letting 
\begin{eqnarray}
\hat{\vec{v}}^{(1)}=
\begin{pmatrix}
-1\\0\\0 
\end{pmatrix},
\hat{\vec{v}}^{(2)}=
\begin{pmatrix}
0\\-1\\0 
\end{pmatrix},
\hat{\vec{v}}^{(3)}=
\begin{pmatrix}
0\\0\\1
\end{pmatrix},
\hat{\vec{v}}^{(4)}=
\frac{1}{\sqrt{2}}\begin{pmatrix}
1\\-1\\0 
\end{pmatrix},
\hat{\vec{v}}^{(5)}=
\frac{1}{\sqrt{2}}\begin{pmatrix}
-1\\0\\-1 
\end{pmatrix},
\hat{\vec{v}}^{(6)}=
\frac{1}{\sqrt{2}}\begin{pmatrix}
0\\1\\1 
\end{pmatrix},
\hat{\vec{v}}^{(7)}=
\frac{1}{\sqrt{3}}\begin{pmatrix}
1\\1\\-1 
\end{pmatrix},\nonumber\\
\end{eqnarray}
we can verify that
\begin{eqnarray}
 D\hat{A}_6=
 \begin{pmatrix}
 -1&0&0&\frac{1}{\sqrt{2}}&-\frac{1}{\sqrt{2}}&0\\
 0&-1&0&-\frac{1}{\sqrt{2}}&0&\frac{1}{\sqrt{2}}\\
 0&0&1&0&-\frac{1}{\sqrt{2}}&\frac{1}{\sqrt{2}}\\
 0&0&0&-\frac{1}{2}&0&0\\
  0&0&0&0&\frac{1}{2}&0\\
 0&0&0&0&0&\frac{1}{2}\\
\end{pmatrix},
 (D\hat{A}_6)^{-1}=
 \begin{pmatrix}
 -1&0&0&-\sqrt{2}&-\sqrt{2}&0\\
 0&-1&0&\sqrt{2}&0&\sqrt{2}\\
 0&0&1&0&\sqrt{2}&-\sqrt{2}\\
 0&0&0&-2&0&0\\
  0&0&0&0&2&0\\
 0&0&0&0&0&2\\
\end{pmatrix},
 \hat{A}_6^{-1}\hat{m}^{(7)}=
 -\frac{1}{3}\begin{pmatrix}
 \sqrt{3}\\
 \sqrt{3}\\
 \sqrt{3}\\
 2\\2\\2
\end{pmatrix},
\end{eqnarray}
where $\hat{A}_6$ is the matrix defined in the same way as Eq.~\eqref{eq:defofAL} for $\hat{\mathU}_j\circ\hat{\mathU}^{-1}$, and we use $ \hat{A}_6^{-1}\hat{m}^{(7)}= (D\hat{A}_6)^{-1}D\hat{m}^{(7)}$ in the last equality.
\black{Note that $\hat{A}_6$ also allows us to determine $\hat{m}^{(i)}~(i=1,...,6)$.}
If we can show that there exists an upper bound $u(c,\epsilon)$ of $\lpnorm{2}{Dm^{(j)}-D\hat{m}^{(j)}}$ satisfying
\begin{equation}
\label{eq:continuity_m}
\forall j,\lpnorm{2}{Dm^{(j)}-D\hat{m}^{(j)}}\leq u(c,\epsilon)\ \ \wedge\ \ 
\lim_{c\rightarrow\infty}\lim_{\epsilon\rightarrow0}u(c,\epsilon)=0,
\end{equation}
it implies that $DA_6$ is invertible ($\Leftrightarrow$ $A_6$ is invertible) and $\max_i(A_6^{-1}m^{(7)})_i=\max_i((DA_6)^{-1}Dm^{(7)})_i\leq 0$ for large enough $c$ and small enough $\epsilon$ since  $D\hat{A}_6$ is a constant invertible matrix (independent to $\epsilon$), $\max_i((D\hat{A}_6)^{-1}D\hat{m}^{(7)})_i=-\frac{1}{\sqrt{3}}$, the set of invertible matrices is open, and $f:A\rightarrow A^{-1}$ is continuous.

In the following, we assume that $c>1$ and $\epsilon_0<\frac{1}{2c}$ and consider the case when $\epsilon\in(0,\epsilon_0]$.
 Let $\sin\hat{\theta}=c\epsilon$, $\sin\theta_0=\epsilon$ and $r^{(j)}=(\cos \theta_j,\sin \theta_j\vec{v}^{(j)})^T$ with $\hat{\theta},\theta_0\in(0,\pi/6),\theta_j\in[0,\pi/2]$. 
Since the following argument holds for any $j$, we abbreviate a subscript or superscript indicating a label $j$.
 From the assumption of compilation accuracy, we obtain
 \begin{equation}
\cos\theta_0=\sqrt{1-\epsilon^2}\leq |e\cdot \hat{e}|=|\cos\theta\cos\hat{\theta}+\sin\theta\sin\hat{\theta}\vec{v}\cdot\hat{\vec{v}}|
=\cos\theta\cos\hat{\theta}+\sin\theta\sin\hat{\theta}\vec{v}\cdot\hat{\vec{v}},
\end{equation}
where we use $\theta<\frac{\pi}{3}$ (due to $|\theta-\hat{\theta}|<\theta_0$) in the last equality.
By using $\theta>0$ (due to $c>1$), this is equivalent to
\begin{equation}
  \label{eq:synthvec}
|\theta-\hat{\theta}|\leq\theta_0 \wedge\ \vec{v}\cdot\hat{\vec{v}}\geq\frac{\cos\theta_0-\cos\theta\cos\hat{\theta}}{\sin\theta\sin\hat{\theta}} (>0).
\end{equation}
On the other hand,
\begin{eqnarray}
 \lpnorm{2}{D(m-\hat{m})}&=&
 \sqrt{\lpnorm{2}{\hat{\vec{v}}-\frac{\cos\theta\sin\theta}{\cos\hat{\theta}\sin\hat{\theta}}\vec{v}}^2+\lpnorm{2}{
 \begin{pmatrix}
 \hat{v}_1\hat{v}_2\\\hat{v}_1\hat{v}_3\\\hat{v}_2\hat{v}_3
\end{pmatrix}
-
\left(\frac{\sin\theta}{\sin\hat{\theta}}\right)^2
 \begin{pmatrix}
 v_1v_2\\v_1v_3\\v_2v_3
\end{pmatrix}
 }^2}\\
&=&  \sqrt{\lpnorm{2}{\hat{\vec{v}}-\frac{\sin(2\theta)}{\sin(2\hat{\theta})}\vec{v}}^2+\frac{1}{2}\left(\lpnorm{2}{\hat{\vec{v}}\hat{\vec{v}}^T-\left(\frac{\sin\theta}{\sin\hat{\theta}}\right)^2\vec{v}\vec{v}^T}^2
-\sum_{i=1}^3\left(\hat{v}_i^2-\left(\frac{\sin\theta}{\sin\hat{\theta}}\right)^2v_i^2\right)^2\right)}\\
\label{eq:dm1}
&\leq&\sqrt{\lpnorm{2}{\hat{\vec{v}}-\frac{\sin(2\theta)}{\sin(2\hat{\theta})}\vec{v}}^2+\frac{1}{2}\left(\lpnorm{2}{\hat{\vec{v}}\hat{\vec{v}}^T-\left(\frac{\sin\theta}{\sin\hat{\theta}}\right)^2\vec{v}\vec{v}^T}^2
-\frac{1}{3}\left(1-\left(\frac{\sin\theta}{\sin\hat{\theta}}\right)^2\right)^2\right)},
\end{eqnarray}
where we use the Cauchy-Schwartz inequality $\left(\sum_{i=1}^3\left(\hat{v}_i^2-\frac{\sin^2\theta}{\sin^2\hat{\theta}}v_i^2\right)^2\right)\left(\sum_{i=1}^3\left(\frac{1}{\sqrt{3}}\right)^2\right)\geq\left(\sum_{i=1}^3\frac{1}{\sqrt{3}}\left(\hat{v}_i^2-\frac{\sin^2\theta}{\sin^2\hat{\theta}}v_i^2\right)\right)^2$ in the last inequality. 
We proceed with the calculation as follows:
\begin{eqnarray}
\label{eq:dm2}
 Eq.~\eqref{eq:dm1}&=&\sqrt{1+\left(\frac{\sin(2\theta)}{\sin(2\hat{\theta})}\right)^2-2\frac{\sin(2\theta)}{\sin(2\hat{\theta})}\vec{v}\cdot\hat{\vec{v}}
 +\frac{1}{2}\left(1+\left(\frac{\sin\theta}{\sin\hat{\theta}}\right)^4-2\left(\frac{\sin\theta}{\sin\hat{\theta}}\vec{v}\cdot\hat{\vec{v}}\right)^2\right)
 -\frac{1}{6}\left(1-\left(\frac{\sin\theta}{\sin\hat{\theta}}\right)^2\right)^2}.\nonumber\\
\end{eqnarray}
By using Eq.~\eqref{eq:synthvec}, we obtain
\begin{equation}
  \lpnorm{2}{D(m-\hat{m})}\leq\max_{|\theta-\hat{\theta}|\leq\theta_0}\sqrt{f(\hat{\theta},\theta,\theta_0)}=\sqrt{\max_{|t|\leq1}f(\hat{\theta},\theta_0t+\hat{\theta},\theta_0)},
\end{equation}
where $f(\hat{\theta},\theta,\theta_0)=1+\left(\frac{\sin(2\theta)}{\sin(2\hat{\theta})}\right)^2-2\frac{\sin(2\theta)}{\sin(2\hat{\theta})}\frac{\cos\theta_0-\cos\theta\cos\hat{\theta}}{\sin\theta\sin\hat{\theta}}
 +\frac{1}{2}\left(1+\left(\frac{\sin\theta}{\sin\hat{\theta}}\right)^4-2\left(\frac{\cos\theta_0-\cos\theta\cos\hat{\theta}}{\sin^2\hat{\theta}}\right)^2\right)
 -\frac{1}{6}\left(1-\left(\frac{\sin\theta}{\sin\hat{\theta}}\right)^2\right)^2$.
Define $g_c(\epsilon,t):=f(\sin^{-1}(c\epsilon),\sin^{-1}(\epsilon)t+\sin^{-1}(c\epsilon),\sin^{-1}(\epsilon))$ on $(\epsilon,t)\in(0,\epsilon_0]\times[-1,1]$.
To complete the proof of Eq.~\eqref{eq:continuity_m}, it is sufficient to show $\lim_{c\rightarrow\infty}\lim_{\epsilon\rightarrow0}\max_{|t|\leq1}g_c(\epsilon,t)=0$.
As shown in Section \ref{app:continuity}, $g_c(\epsilon,t)$ is uniformly continuous in $(\epsilon,t)\in(0,\epsilon_0]\times[-1,1]$. Thus, we can interchange $\lim_{\epsilon\rightarrow0}$ and $\max_{|t|\leq1}$ (see detail in Section \ref{app:commutativity_limmax}) and obtain
\begin{eqnarray}
\lim_{\epsilon\rightarrow0}\max_{|t|\leq1}g_c(\epsilon,t)
&=&\max_{|t|\leq1}\lim_{\epsilon\rightarrow0}g_c(\epsilon,t)\\
&=&\max_{|t|\leq1}\frac{1}{c^4}\left(\frac{1}{12}t^4+\frac{c}{3}t^3+\left(\frac{c^2}{3}+\frac{1}{2}\right)t^2+ct+2c^2-\frac{1}{4}\right)\\
&=&\frac{7}{3c^2}+\frac{4}{3c^3}+\frac{1}{3c^4},
\end{eqnarray}
where the last equality is obtained by observing that $t=1$ maximizes the polynomial via a straightforward calculation. 
Finally, we obtain
\begin{equation}
\lpnorm{2}{Dm-D\hat{m}}\leq u(c,\epsilon)\ \ \wedge\ \ 
 \lim_{\epsilon\rightarrow0}u(c,\epsilon)=\sqrt{\frac{7}{3c^2}+\frac{4}{3c^3}+\frac{1}{3c^4}}.
\end{equation}
This completes the proof.
\end{proof}

\subsection{Crafting depolarizing channel} \label{subsec:guarantee_depol_constraint}
In similar to the Pauli constraint, the minimization problem to obtain the synthesis probability distribution under the depolarizing constraint can be written as
\begin{equation}
\label{eq:Depolerror}
{\rm minimize}\ 
\frac{1}{2}\diamondnorm{\hat{\mathU}-\sum_j p_j\mathU_j}\ \ \ {\rm subject\ to}\ \ 
\sum_{a} \chi_{aa}=1, \chi_{\rm XX} = \chi_{\rm YY} = \chi_{\rm ZZ} = q\geq 0, \chi_{a \neq b} = 0,
\end{equation}
which is rewritten under the magic basis representation as
\begin{equation}
 \label{eq:Depolerrorsimple}
 {\rm minimize}\ 1-\sum_j p_j\left(r^{(j)}_1\right)^2\ \ \ {\rm subject\ to}\ \ \sum_{j} p_j = 1, p_j \geq 0,\  \sum_{j}p_j r^{(j)}r^{(j)T}=(1-q)\mathbf{e}^{(1)}\mathbf{e}^{(1)T}+\frac{q}{4} {\rm Id},
\end{equation}
where $r^{(j)}\in\rr^4$ corresponds to the magic basis representation of $\mathU_j\circ\hat{\mathU}^{-1}$ given in Eq.~\eqref{eq:MBrep}, and $\mathbf{e}^{(i)}$ is the unit vector whose element is $0$ except the $i$-th element.

We can prove the feasibility guarantee as in the following theorem:
\begin{thmS}\label{thm1_new}
There exist positive numbers $c$, $\epsilon_0$ and unit vectors $\{\hat{\vec{v}}^{(j)}\in\rr^3\}_{j=1}^9$ such that
for any given single-qubit unitary $\hat{\mathU}$ and for any $\epsilon\in(0,\epsilon_0]$, if we prepare target unitaries $\{\hat{\mathU}_j\}_{j=1}^9$ satisfying
\begin{equation}
\frac{1}{2} \left[\mathJ(\hat{\mathU}_j\circ\hat{\mathU}^{-1})\right]_{MB}=\hat{e}^{(j)}\hat{e}^{(j)T}\ \  {\rm with} \ \  \hat{e}^{(j)}=
\begin{pmatrix}
 \sqrt{1-(c\epsilon)^2}\\
 c\epsilon\hat{\vec{v}}^{(j)}
\end{pmatrix}
\end{equation}
and obtain $\{\mathU_j\}_j$ via unitary synthesis such that $\frac{1}{2}\diamondnorm{\hat{\mathU}_j-\mathU_j}\leq\epsilon$, then a solution in Eq.~\eqref{eq:Depolerror} exists. Moreover, it holds that
\begin{equation}
\label{eq:proberrorDepo}
 ((c-1)\epsilon)^2 \leq  \frac{1}{2}\diamondnorm{\hat{\mathU}-\sum_{j=1}^9 p_j\mathU_j}\leq((c+1)\epsilon)^2
\end{equation}
with the optimal probability distribution $\{p_j\}$ that induces a depolarizing channel as the effective synthesis error.
\end{thmS}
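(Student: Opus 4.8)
The plan is to mirror the proof of Theorem~\ref{thm2_new}, carrying along the two extra scalar constraints that distinguish a depolarizing channel from a general Pauli channel. First I would pass to the magic-basis picture of Eq.~\eqref{eq:MBrep}: writing $r^{(j)}\in\rr^4$ for the magic-basis vector of $\mathU_j\circ\hat{\mathU}^{-1}$, the condition $\sum_j p_j r^{(j)}r^{(j)T}=(1-q)\mathbf{e}^{(1)}\mathbf{e}^{(1)T}+\tfrac{q}{4}\idop$ for some $q\ge 0$ is — since $\sum_j p_j=1$ fixes the trace and $q\ge 0$ follows from positivity — exactly the statement that the eight linear functionals
\[
m^{(j)}=\Big(r^{(j)}_1r^{(j)}_2,\ r^{(j)}_1r^{(j)}_3,\ r^{(j)}_1r^{(j)}_4,\ r^{(j)}_2r^{(j)}_3,\ r^{(j)}_2r^{(j)}_4,\ r^{(j)}_3r^{(j)}_4,\ (r^{(j)}_2)^2-(r^{(j)}_3)^2,\ (r^{(j)}_3)^2-(r^{(j)}_4)^2\Big)^T
\]
satisfy $\sum_j p_j m^{(j)}=0$. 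As in Theorem~\ref{thm2_new}, the whole claim then reduces to feasibility of a linear program, and it suffices to exhibit nine unit vectors $\{\hat{\vec v}^{(j)}\}_{j=1}^9\subset\rr^3$ such that in the ideal (un-synthesized) limit $r^{(j)}=\hat e^{(j)}=(\sqrt{1-(c\epsilon)^2},c\epsilon\hat{\vec v}^{(j)})^T$ the $8\times 8$ matrix $\hat A_8=(\hat m^{(1)}\ \cdots\ \hat m^{(8)})$ is invertible and every component of $\hat A_8^{-1}\hat m^{(9)}$ is strictly negative; then $m^{(9)}=-\sum_{j\le 8}\tilde p_j m^{(j)}$ with $\tilde p_j\ge 0$, and normalizing ($p_j\propto\tilde p_j$ for $j\le 8$, $p_9\propto 1$) gives the desired probabilities.

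For the ideal configuration I would take nine explicit unit vectors (for instance the coordinate axes $\pm\hat x,\pm\hat y,\pm\hat z$ together with a few face- and body-diagonals) and rescale by $D=\mathrm{diag}\big(\tfrac{1}{c\epsilon\sqrt{1-(c\epsilon)^2}}\ (\text{three times}),\ \tfrac{1}{(c\epsilon)^2}\ (\text{five times})\big)$, so that $D\hat m^{(j)}=\big(\hat v^{(j)}_1,\hat v^{(j)}_2,\hat v^{(j)}_3,\hat v^{(j)}_1\hat v^{(j)}_2,\hat v^{(j)}_1\hat v^{(j)}_3,\hat v^{(j)}_2\hat v^{(j)}_3,(\hat v^{(j)}_1)^2-(\hat v^{(j)}_2)^2,(\hat v^{(j)}_2)^2-(\hat v^{(j)}_3)^2\big)^T$ is independent of $c$ and $\epsilon$. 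The vectors are to be chosen precisely so that the constant matrix $D\hat A_8$ is invertible and $\max_i\big((D\hat A_8)^{-1}D\hat m^{(9)}\big)_i=-\delta$ for some $\delta>0$; verifying this is a finite explicit linear-algebra computation.

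Next I would redo the continuity estimate. The accuracy hypothesis $\tfrac12\diamondnorm{\hat{\mathU}_j-\mathU_j}\le\epsilon$ yields, with $\sin\hat\theta=c\epsilon$, $\sin\theta_0=\epsilon$ and $r^{(j)}=(\cos\theta_j,\sin\theta_j\vec v^{(j)})^T$, the same two consequences as in Theorem~\ref{thm2_new}: $|\theta_j-\hat\theta|\le\theta_0$ and $\vec v^{(j)}\cdot\hat{\vec v}^{(j)}\ge\frac{\cos\theta_0-\cos\theta_j\cos\hat\theta}{\sin\theta_j\sin\hat\theta}$. Substituting these into $\lpnorm{2}{Dm^{(j)}-D\hat m^{(j)}}$ — which now also carries, besides the $\hat v_i\hat v_k-(\sin\theta/\sin\hat\theta)^2 v_iv_k$ terms, two diagonal-difference terms built from $a_i:=\hat v_i^2-(\sin\theta/\sin\hat\theta)^2 v_i^2$, controlled by $O(\sum_i a_i^2)=O\big((1-(\sin\theta/\sin\hat\theta)^2)^2+\lpnorm{2}{\vec v-\hat{\vec v}}^2\big)$ via the same Cauchy–Schwarz estimate — gives an upper bound $u(c,\epsilon)$ with $\lim_{c\to\infty}\lim_{\epsilon\to 0}u(c,\epsilon)=0$; the interchange of $\lim_{\epsilon\to 0}$ and $\max_{|t|\le 1}$ is justified by the uniform-continuity arguments of Sections~\ref{app:continuity} and~\ref{app:commutativity_limmax} verbatim. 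Since inversion is continuous and the invertible matrices are open, for $c$ large and $\epsilon_0$ small we get that $A_8$ is invertible and $\max_i(A_8^{-1}m^{(9)})_i=\max_i\big((DA_8)^{-1}Dm^{(9)}\big)_i\le 0$, i.e., feasibility holds. The bounds~\eqref{eq:proberrorDepo} then follow at once: the objective of Eq.~\eqref{eq:Depolerrorsimple} equals $\sum_j p_j\sin^2\theta_j=\ddist(\hat{\mathU},\sum_j p_j\mathU_j)$ at a feasible $p$, and the triangle inequality together with $\ddist(\hat{\mathU}_j,\hat{\mathU})=c\epsilon$ (by construction) and $\ddist(\hat{\mathU}_j,\mathU_j)\le\epsilon$ forces $\sin\theta_j\in[(c-1)\epsilon,(c+1)\epsilon]$.

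The hard part will be the feasibility step: (i) pinning down an explicit nine-vector configuration, which is more delicate than the seven-vector case of Theorem~\ref{thm2_new} because the two extra rows $\big((\hat v_1^2-\hat v_2^2),(\hat v_2^2-\hat v_3^2)\big)$ must be spanned while the ninth rescaled column is kept strictly inside the negative cone generated by the other eight; and (ii) checking that the enlarged perturbation sum still has its limiting constant vanish as $c\to\infty$. The linear-program reformulation, the triangle-inequality diamond bounds, and the uniform-continuity lemmas all transfer unchanged from Theorem~\ref{thm2_new}.
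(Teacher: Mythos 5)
Your outline tracks the paper's proof of Theorem~S2 step for step: reduce to LP feasibility over the eight scalar constraints (six off-diagonal products plus two diagonal differences), rescale by a diagonal matrix $D$ so that $D\hat m^{(j)}$ is $\epsilon$-independent, exhibit nine unit vectors making $D\hat A_8$ invertible with $\hat A_8^{-1}\hat m^{(9)}$ entrywise negative, bound $\lVert D(m^{(j)}-\hat m^{(j)})\rVert_2$ by a $u(c,\epsilon)$ vanishing as $c\to\infty$ after $\epsilon\to 0$, interchange $\lim$ and $\max$ by uniform continuity, and obtain the diamond-distance bounds by the triangle inequality. Your minor variants (using $r_2^2-r_3^2$ and $r_3^2-r_4^2$ rather than the paper's $r_2^2-r_3^2$ and $r_2^2-r_4^2$, and dropping the $1/\sqrt{8}$ normalization in the last two entries of $D$) are immaterial — the two choices span the same constraint space and the scaling only shifts a constant. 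Your observation that the perturbation from the two extra rows is controlled by $\sum_i a_i^2$ with $a_i=\hat v_i^2-(\sin\theta/\sin\hat\theta)^2 v_i^2$ is the same idea the paper pursues, though the paper does a sharper Cauchy–Schwarz rearrangement to recover exactly the Pauli-case constant $\sqrt{7/3c^2+4/3c^3+1/3c^4}$.

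The one substantive gap is the feasibility step you flag yourself: you assert that a suitable nine-vector configuration exists but do not exhibit one nor verify the two required facts ($D\hat A_8$ invertible, $(D\hat A_8)^{-1}D\hat m^{(9)}$ entrywise negative). Since the theorem is precisely an existence statement, this is the content, not a routine check. The paper supplies the explicit choice
\[
\hat{\vec v}^{(1)}=\hat x,\ \hat{\vec v}^{(2)}=-\hat x,\ \hat{\vec v}^{(3)}=-\hat y,\ \hat{\vec v}^{(4)}=\hat z,\ \hat{\vec v}^{(5)}=-\hat z,\
\hat{\vec v}^{(6)}=\tfrac{-\hat x-\hat z}{\sqrt 2},\ \hat{\vec v}^{(7)}=\tfrac{\hat y+\hat z}{\sqrt 2},\ \hat{\vec v}^{(8)}=\tfrac{-\hat x+\hat y}{\sqrt 2},\ \hat{\vec v}^{(9)}=\tfrac{\hat x+\hat y-\hat z}{\sqrt 3},
\]
which is in the family you guessed (axes plus face- and body-diagonals), and then writes down $D\hat A_8$, its inverse, and $\hat A_8^{-1}\hat m^{(9)}=-\tfrac{1}{3}(2\sqrt 2,\sqrt 3,2\sqrt 2,\sqrt 3,\sqrt 2+\sqrt 3,\sqrt 2,2,2,2)^T$ to check strict negativity. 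Without that calculation the argument is a correct strategy, not a proof. Everything else in your sketch transfers as you say.
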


\begin{proof}
     By letting $r^{(j)}$ be the magic basis representation of $\mathU_j\circ\hat{\mathU}^{-1}$ and defining
\begin{equation}
 A_L=
\begin{pmatrix}
 r^{(1)}_1r^{(1)}_2&r^{(2)}_1r^{(2)}_2&\cdots&r^{(L)}_1r^{(L)}_2\\
 r^{(1)}_1r^{(1)}_3&r^{(2)}_1r^{(2)}_3&\cdots&r^{(L)}_1r^{(L)}_3\\
  r^{(1)}_1r^{(1)}_4&r^{(2)}_1r^{(2)}_4&\cdots&r^{(L)}_1r^{(L)}_4\\
  r^{(1)}_2r^{(1)}_3&r^{(2)}_2r^{(2)}_3&\cdots&r^{(L)}_2r^{(L)}_3\\
  r^{(1)}_2r^{(1)}_4&r^{(2)}_2r^{(2)}_4&\cdots&r^{(L)}_2r^{(L)}_4\\
  r^{(1)}_3r^{(1)}_4&r^{(2)}_3r^{(2)}_4&\cdots&r^{(L)}_3r^{(L)}_4\\
  (r^{(1)}_2)^2-(r^{(1)}_3)^2&(r^{(2)}_2)^2-(r^{(2)}_3)^2&\cdots&(r^{(L)}_2)^2-(r^{(L)}_3)^2\\
  (r^{(1)}_2)^2-(r^{(1)}_4)^2&(r^{(2)}_2)^2-(r^{(2)}_4)^2&\cdots&(r^{(L)}_2)^2-(r^{(L)}_4)^2
\end{pmatrix}
=
\begin{pmatrix}
 m^{(1)}&m^{(2)}&\cdots&m^{(L)}
\end{pmatrix},
\end{equation}
we can describe the restrictions about the depolarizing error obtained by mixing $\{\mathU_j\}_{j=1}^L$ as $A_Lp=0$.

Define $D=diag(\frac{1}{c\epsilon\sqrt{1-(c\epsilon)^2}},\frac{1}{c\epsilon\sqrt{1-(c\epsilon)^2}},\frac{1}{c\epsilon\sqrt{1-(c\epsilon)^2}},\frac{1}{(c\epsilon)^2},\frac{1}{(c\epsilon)^2},\frac{1}{(c\epsilon)^2},\frac{1}{\sqrt{8}(c\epsilon)^2},\frac{1}{\sqrt{8}(c\epsilon)^2})$.
Suppose that $DA_8$ is invertible. Then, it is sufficient to show $((DA_8)^{-1}(Dm^{(9)}))_i\leq 0$ for any $i\in\{1,\cdots,8\}$ and $\{r^{(j)}\}_{j=1}^9$. 
We can find $\{\hat{v}^{(x)}\}_{x=1}^9$ satisfying the conditions such that $D\hat{A}_8$ is a constant invertible matrix and $\max_i((D\hat{A}_8)^{-1}(D\hat{m}^{(9)}))_i$ is a constant negative real. For example, by setting
\begin{eqnarray}
    \hat{\vec{v}}^{(1)}=
\begin{pmatrix}
1\\0\\0 
\end{pmatrix},
\hat{\vec{v}}^{(2)}=
\begin{pmatrix}
-1\\0\\0 
\end{pmatrix},
\hat{\vec{v}}^{(3)}=
\begin{pmatrix}
0\\-1\\0
\end{pmatrix},
\hat{\vec{v}}^{(4)}=
\begin{pmatrix}
0\\0\\1
\end{pmatrix},
\hat{\vec{v}}^{(5)}=
\begin{pmatrix}
0\\0\\-1
\end{pmatrix},\nonumber\\
\hat{\vec{v}}^{(6)}=
\frac{1}{\sqrt{2}}\begin{pmatrix}
-1\\0\\-1 
\end{pmatrix},
\hat{\vec{v}}^{(7)}=
\frac{1}{\sqrt{2}}\begin{pmatrix}
0\\1\\1 
\end{pmatrix},
\hat{\vec{v}}^{(8)}=
\frac{1}{\sqrt{2}}\begin{pmatrix}
-1\\1\\0 
\end{pmatrix},
\hat{\vec{v}}^{(7)}=
\frac{1}{\sqrt{3}}\begin{pmatrix}
1\\1\\-1 
\end{pmatrix},\nonumber\\
\end{eqnarray}
we can verify the conditions are satisfied. This is because
\begin{eqnarray}
 D\hat{A}_8=
 \begin{pmatrix}
 1&-1&0&0&0&-\frac{1}{\sqrt{2}}&0&-\frac{1}{\sqrt{2}}\\
 0&0&-1&0&0&0&\frac{1}{\sqrt{2}}&\frac{1}{\sqrt{2}}\\
 0&0&0&1&-1&-\frac{1}{\sqrt{2}}&\frac{1}{\sqrt{2}}&0\\
 0&0&0&0&0&0&0&-\frac{1}{2}\\
  0&0&0&0&0&\frac{1}{2}&0&0\\
 0&0&0&0&0&0&\frac{1}{2}&0\\
 \frac{1}{2\sqrt{2}}&\frac{1}{2\sqrt{2}}&-\frac{1}{2\sqrt{2}}&0&0&\frac{1}{4\sqrt{2}}&-\frac{1}{4\sqrt{2}}&0\\
 \frac{1}{2\sqrt{2}}&\frac{1}{2\sqrt{2}}&0&-\frac{1}{2\sqrt{2}}&-\frac{1}{2\sqrt{2}}&0&-\frac{1}{4\sqrt{2}}&\frac{1}{4\sqrt{2}}
\end{pmatrix},\nonumber\\
 (D\hat{A}_8)^{-1}=
 \begin{pmatrix}
 \frac{1}{2}&-\frac{1}{2}&0&-\sqrt{2}&-\frac{1}{2}+\frac{1}{\sqrt{2}}&\frac{1}{2}+\frac{1}{\sqrt{2}}&\sqrt{2}&0\\
 -\frac{1}{2}&-\frac{1}{2}&0&0&-\frac{1}{2}-\frac{1}{\sqrt{2}}&\frac{1}{2}+\frac{1}{\sqrt{2}}&\sqrt{2}&0\\
 0&-1&0&-\sqrt{2}&0&\sqrt{2}&0&0\\
 0&-\frac{1}{2}&\frac{1}{2}&-\frac{1}{2}-\frac{1}{\sqrt{2}}&-\frac{1}{2}+\frac{1}{\sqrt{2}}&0&\sqrt{2}&-\sqrt{2}\\
 0&-\frac{1}{2}&-\frac{1}{2}&-\frac{1}{2}-\frac{1}{\sqrt{2}}&-\frac{1}{2}-\frac{1}{\sqrt{2}}&\sqrt{2}&\sqrt{2}&-\sqrt{2}\\
 0&0&0&0&2&0&0&0\\
 0&0&0&0&0&2&0&0\\ 
 0&0&0&-2&0&0&0&0
\end{pmatrix},
 \hat{A}_8^{-1}\hat{m}^{(9)}=
 -\frac{1}{3}\begin{pmatrix}
 2\sqrt{2}\\
 \sqrt{3}\\
 2\sqrt{2}\\
 \sqrt{3}\\
 \sqrt{2}+\sqrt{3}\\
  \sqrt{2}\\
 2\\2\\2
\end{pmatrix},
\end{eqnarray}
where we use $ \hat{A}_8^{-1}\hat{m}^{(9)}= (D\hat{A}_8)^{-1}D\hat{m}^{(9)}$ in the last equality.

Thus, it is sufficient to show there exists an upper bound $u(c,\epsilon)$ of $\lpnorm{2}{Dm^{(j)}-D\hat{m}^{(j)}}$ satisfying
\begin{equation}
\label{eq:continuity_dm}
\lpnorm{2}{Dm^{(j)}-D\hat{m}^{(j)}}\leq u(c,\epsilon)\ \ \wedge\ \ 
\lim_{c\rightarrow\infty}\lim_{\epsilon\rightarrow0}u(c,\epsilon)=0,
\end{equation}
In the following, we assume that $c>1$ and $\epsilon_0<\frac{1}{2c}$ and consider the case when $\epsilon\in(0,\epsilon_0]$.
 Let $\sin\hat{\theta}=c\epsilon$, $\sin\theta_0=\epsilon$ and $r^{(j)}=(\cos \theta_j,\sin \theta_j\vec{v}^{(j)})^T$ with $\hat{\theta},\theta_0\in(0,\pi/6),\theta_x\in[0,\pi/2]$. 
Since the following argument holds for any $j$, we abbreviate a subscript or superscript indicating a label $j$.
 From the assumption of compilation accuracy, we obtain
\begin{equation}
  \label{eq:synthvec2}
|\theta-\hat{\theta}|\leq\theta_0 \wedge\ \vec{v}\cdot\hat{\vec{v}}\geq\frac{\cos\theta_0-\cos\theta\cos\hat{\theta}}{\sin\theta\sin\hat{\theta}} (>0).
\end{equation}

On the other hand,
\begin{eqnarray}
 \lpnorm{2}{D(m-\hat{m})}^2&=&
 \lpnorm{2}{\hat{\vec{v}}-\frac{\cos\theta\sin\theta}{\cos\hat{\theta}\sin\hat{\theta}}\vec{v}}^2
 +\lpnorm{2}{
 \begin{pmatrix}
 \hat{v}_1\hat{v}_2\\\hat{v}_1\hat{v}_3\\\hat{v}_2\hat{v}_3
\end{pmatrix}
-
\left(\frac{\sin\theta}{\sin\hat{\theta}}\right)^2
 \begin{pmatrix}
 v_1v_2\\v_1v_3\\v_2v_3
\end{pmatrix}
 }^2
 +\frac{1}{8}\lpnorm{2}{
  \begin{pmatrix}
 \hat{v}_1^2-\hat{v}_2^2\\\hat{v}_1^2-\hat{v}_3^2
\end{pmatrix}
-\left(\frac{\sin\theta}{\sin\hat{\theta}}\right)^2
  \begin{pmatrix}
 v_1^2-v_2^2\\v_1^2-v_3^2
\end{pmatrix}
 }^2\nonumber\\\\
 &=& \lpnorm{2}{\hat{\vec{v}}-\frac{\sin(2\theta)}{\sin(2\hat{\theta})}\vec{v}}^2+\frac{1}{2}\left(\lpnorm{2}{\hat{\vec{v}}\hat{\vec{v}}^T-\left(\frac{\sin\theta}{\sin\hat{\theta}}\right)^2\vec{v}\vec{v}^T}^2
-\sum_{i=1}^3r_i^2\right)
+\frac{1}{8}\left((r_1-r_2)^2+(r_1-r_3)^2\right)\\
&\leq& \lpnorm{2}{\hat{\vec{v}}-\frac{\sin(2\theta)}{\sin(2\hat{\theta})}\vec{v}}^2+\frac{1}{2}\lpnorm{2}{\hat{\vec{v}}\hat{\vec{v}}^T-\left(\frac{\sin\theta}{\sin\hat{\theta}}\right)^2\vec{v}\vec{v}^T}^2
+\frac{(r_1-r_2)^2+(r_1-r_3)^2+(r_2-r_3)^2-4\sum_{i=1}^3r_i^2}{8}\\
&=& \lpnorm{2}{\hat{\vec{v}}-\frac{\sin(2\theta)}{\sin(2\hat{\theta})}\vec{v}}^2+\frac{1}{2}\lpnorm{2}{\hat{\vec{v}}\hat{\vec{v}}^T-\left(\frac{\sin\theta}{\sin\hat{\theta}}\right)^2\vec{v}\vec{v}^T}^2
-\frac{(r_1+r_2)^2+(r_1+r_3)^2+(r_2+r_3)^2}{8}\\
&\leq& \lpnorm{2}{\hat{\vec{v}}-\frac{\sin(2\theta)}{\sin(2\hat{\theta})}\vec{v}}^2+\frac{1}{2}\lpnorm{2}{\hat{\vec{v}}\hat{\vec{v}}^T-\left(\frac{\sin\theta}{\sin\hat{\theta}}\right)^2\vec{v}\vec{v}^T}^2
-\frac{1}{6}\left(1-\left(\frac{\sin\theta}{\sin\hat{\theta}}\right)^2\right)^2,
\end{eqnarray}
where we let $r_i=\hat{v}_i^2-\left(\frac{\sin\theta}{\sin\hat{\theta}}\right)^2v_i^2$ and 
we use the Cauchy-Schwartz inequality 
\begin{eqnarray}
\left((r_1+r_2)^2+(r_1+r_3)^2+(r_2+r_3)^2\right)\left(\sum_{i=1}^3\left(\frac{1}{\sqrt{3}}\right)^2\right)\geq
\left(\frac{r_1+r_2}{\sqrt{3}}+\frac{r_1+r_3}{\sqrt{3}}+\frac{r_2+r_3}{\sqrt{3}}\right)^2=\frac{4}{3}\left(1-\left(\frac{\sin\theta}{\sin\hat{\theta}}\right)^2\right)^2.    
\end{eqnarray} 
\if0
We proceed with the calculation as follows:
\begin{eqnarray}
\label{eq:dm2}
 Eq.~\eqref{eq:dm1}&=&\sqrt{1+\left(\frac{\sin(2\theta)}{\sin(2\hat{\theta})}\right)^2-2\frac{\sin(2\theta)}{\sin(2\hat{\theta})}\vec{v}\cdot\hat{\vec{v}}
 +\frac{1}{2}\left(1+\left(\frac{\sin\theta}{\sin\hat{\theta}}\right)^4-2\left(\frac{\sin\theta}{\sin\hat{\theta}}\vec{v}\cdot\hat{\vec{v}}\right)^2\right)
 -\frac{1}{6}\left(1-\left(\frac{\sin\theta}{\sin\hat{\theta}}\right)^2\right)^2}.\nonumber\\
\end{eqnarray}
By using Eq.~\eqref{eq:synthvec2}, we obtain
\begin{equation}
  \lpnorm{2}{D(m-\hat{m})}\leq\max_{|\theta-\hat{\theta}|\leq\theta_0}\sqrt{f(\hat{\theta},\theta,\theta_0)}=\sqrt{\max_{|t|\leq1}f(\hat{\theta},\theta_0t+\hat{\theta},\theta_0)},
\end{equation}
where $f(\hat{\theta},\theta,\theta_0)=1+\left(\frac{\sin(2\theta)}{\sin(2\hat{\theta})}\right)^2-2\frac{\sin(2\theta)}{\sin(2\hat{\theta})}\frac{\cos\theta_0-\cos\theta\cos\hat{\theta}}{\sin\theta\sin\hat{\theta}}
 +\frac{1}{2}\left(1+\left(\frac{\sin\theta}{\sin\hat{\theta}}\right)^4-2\left(\frac{\cos\theta_0-\cos\theta\cos\hat{\theta}}{\sin^2\hat{\theta}}\right)^2\right)
 -\frac{1}{6}\left(1-\left(\frac{\sin\theta}{\sin\hat{\theta}}\right)^2\right)^2$.
Define $g_c(\epsilon,t):=f(\sin^{-1}(c\epsilon),\sin^{-1}(\epsilon)t+\sin^{-1}(c\epsilon),\sin^{-1}(\epsilon))$ on $(\epsilon,t)\in(0,\epsilon_0]\times[-1,1]$.
To complete the proof of Eq.~\eqref{eq:continuity_dm}, it is sufficient to show $\lim_{c\rightarrow\infty}\lim_{\epsilon\rightarrow0}\max_{|t|\leq1}g_c(\epsilon,t)=0$.
As shown in Appendix~\ref{app:continuity}, $g_c(\epsilon,t)$ is uniformly continuous in $(\epsilon,t)\in(0,\epsilon_0]\times[-1,1]$. Thus, we can interchange $\lim_{\epsilon\rightarrow0}$ and $\max_{|t|\leq1}$ (see detail in Appendix~\ref{app:commutativity_limmax}) and obtain
\begin{eqnarray}
\lim_{\epsilon\rightarrow0}\max_{|t|\leq1}g_c(\epsilon,t)
&=&\max_{|t|\leq1}\lim_{\epsilon\rightarrow0}g_c(\epsilon,t)\\
&=&\max_{|t|\leq1}\frac{1}{c^4}\left(\frac{1}{12}t^4+\frac{c}{3}t^3+\left(\frac{c^2}{3}+\frac{1}{2}\right)t^2+ct+2c^2-\frac{1}{4}\right)\\
&=&\frac{7}{3c^2}+\frac{4}{3c^3}+\frac{1}{3c^4},
\end{eqnarray}
where the last equality is obtained by observing that $t=1$ maximizes the polynomial via a straightforward calculation. 
\fi
By using the same argument as in Sec.~\ref{subsec:guarantee_pauli_constraint}, we obtain
\begin{equation}
\lpnorm{2}{Dm-D\hat{m}}\leq u(c,\epsilon)\ \ \wedge\ \ 
 \lim_{\epsilon\rightarrow0}u(c,\epsilon)=\sqrt{\frac{7}{3c^2}+\frac{4}{3c^3}+\frac{1}{3c^4}}.
\end{equation}
This completes the proof of Eq.~\eqref{eq:continuity_dm}.
\end{proof}

\section{Technical lemmas}\label{app:technical-lemma}

\subsection{Uniform continuity of $g_c(\epsilon,t)$} \label{app:continuity}
Since the summation $f+g$, the product $fg$ and the composition $f\circ g$ of two bounded uniformly continuous functions $f$ and $g$ are bounded and uniformly continuous, we can show the uniform continuity of $g_c(\epsilon,t)$ by observing that its each term is bounded and uniformly continuous on $(\epsilon,t)\in(0,\epsilon_0]\times[-1,1]$ with $\epsilon_0<\frac{1}{2c}$ and $c>1$ as follows:
\begin{enumerate}
\item $\theta=\sin^{-1}(\epsilon)t+\sin^{-1}(c\epsilon)$, $\hat{\theta}=\sin^{-1}(c\epsilon)$ and $\theta_0=\sin^{-1}(\epsilon)$ are bounded and uniformly continuous, whose range lies in $(0,\frac{\pi}{3})$.

\item $\frac{\theta_0}{\hat{\theta}}=\frac{1}{c}\frac{\sin^{-1}(\epsilon)}{\epsilon}\frac{c\epsilon}{\sin^{-1}(c\epsilon)}$ and 
$\frac{\theta}{\hat{\theta}}=\frac{\theta_0}{\hat{\theta}}t+1$ are bounded and uniformly continuous since $\frac{\sin^{-1}(r)}{r}$ and $\frac{r}{\sin^{-1}(r)}$ are bounded and uniformly continuous on  $r\in(0,1]$.
Since both ranges of $\frac{\theta_0}{\hat{\theta}}$ and $\frac{\theta}{\hat{\theta}}$ are included in a closed interval $[a,b]\subset(0,\infty)$ and $f(r)=\frac{1}{r}$ is bounded and uniformly continuous on $r\in[a,b]$, $\frac{\hat{\theta}}{\theta_0}$ and $\frac{\hat{\theta}}{\theta}$ are also bounded and uniformly continuous.
By using the similar argument, we can verify that $\frac{\theta}{\theta_0}=t+\frac{\hat{\theta}}{\theta_0}$ and $\frac{\theta_0}{\theta}$ are bounded and uniformly continuous.

 \item $\frac{\sin\theta}{\sin\hat{\theta}}=\frac{\sin\theta}{\theta}\frac{\hat{\theta}}{\sin\hat{\theta}}\frac{\theta}{\hat{\theta}}$ and $\frac{\sin(2\theta)}{\sin(2\hat{\theta})}=\frac{\sin(2\theta)}{2\theta}\frac{2\hat{\theta}}{\sin(2\hat{\theta})}\frac{\theta}{\hat{\theta}}$ are bounded and uniformly continuous since $\frac{\sin \phi}{\phi}$ and $\frac{\phi}{\sin \phi}$ are bounded and uniformly continuos on $\phi\in(0,\frac{2\pi}{3})$.
 
 \item $\frac{\cos\theta_0-\cos\theta\cos\hat{\theta}}{\sin\theta\sin\hat{\theta}}=\frac{\theta}{\sin\theta}\frac{\hat{\theta}}{\sin\hat{\theta}}\left(\frac{\theta_0}{\theta}\frac{\theta_0}{\hat{\theta}}\frac{\cos\theta_0-1}{\theta_0^2}
 -\frac{\hat{\theta}}{\theta}\frac{\cos\hat{\theta}-1}{\hat{\theta}^2}
 +\frac{\theta}{\hat{\theta}}\frac{(1-\cos\theta)}{\theta^2}\cos\hat{\theta}\right)$ is bounded and uniformly continuous since $\frac{\cos\phi-1}{\phi^2}$ is bounded and uniformly continuous on $\phi\in(0,\frac{\pi}{3})$.

\end{enumerate}

\subsection{Commutativity of $\lim$ and $\max$ for uniformly continuous functions} \label{app:commutativity_limmax}
Let $E$ and $T$ be an arbitrary subset and a compact subset of $\rr$, respectively. Let $\epsilon_0$ be a limit point of $E$ and $g:E\times T\rightarrow\rr$ be uniformly continuous such that $\lim_{\epsilon\rightarrow\epsilon_0}g(\epsilon,t)$ is finite for all $t\in T$. Then, we can show that
\begin{equation}
\label{eq:limmax}
 \lim_{\epsilon\rightarrow\epsilon_0}\max_{t\in T}g(\epsilon,t)=\max_{t\in T}\lim_{\epsilon\rightarrow\epsilon_0}g(\epsilon,t).
\end{equation}
Note that the 'uniform' continuity is requisite, e.g., $\lim_{\epsilon\rightarrow0}\max_{t\in T}g(\epsilon,t)=1$ and $\max_{t\in T}\lim_{\epsilon\rightarrow0}g(\epsilon,t)=0$ for $T=[0,1]$, $E=(0,1)$, and $g$ is a continuous function such that $\supp{g}$ is included in $\{(\epsilon,t):t<\epsilon<4t\}$ and $g(\epsilon,t)=1$ if $2t\leq\epsilon\leq3t$.
\begin{proof}
 First, we can verify that $f(t):=\lim_{\epsilon\rightarrow\epsilon_0}g(\epsilon,t)$ and $h(\epsilon):=\max_{t\in T}g(\epsilon,t)$ are uniformly continuous as follows.
 Since $g$ is uniformly continuous,
 for any $\Delta>0$, there exists $\delta>0$ such that for any $t,t'\in T$ and $\epsilon,\epsilon'>0$, if $
 \lpnorm{2}{
 \begin{pmatrix}
 \epsilon\\ t
\end{pmatrix}
-
\begin{pmatrix}
 \epsilon'\\ t'
\end{pmatrix}
}<\delta
$, $|g(\epsilon,t)-g(\epsilon',t')|<\Delta$.
This implies that if $|t-t'|<\delta$, $|f(t)-f(t')|=\left|\lim_{\epsilon\rightarrow\epsilon_0}\left(g(\epsilon,t)-g(\epsilon,t')\right)\right|
 \leq\Delta$. It also implies that if $|\epsilon-\epsilon'|<\delta$, $|h(\epsilon)-h(\epsilon')|\leq\max_{t\in T}|g(\epsilon,t)-g(\epsilon',t)|<\Delta$.
 Since $f$ is continuous and $h$ is uniformly continuous, both sides in Eq.~\eqref{eq:limmax} are well-defined.
 
Since $(L.H.S.)\geq(R.H.S.)$ is trivial, we show the converse. For any $\Delta>0$, there exists $\delta>0$ such that $|f(t)-g(\epsilon,t)|<\Delta$ if $|\epsilon-\epsilon_0|\leq\delta$ for any $t\in T$ and $\epsilon$. Thus, we obtain
\begin{equation}
  \lim_{\epsilon\rightarrow\epsilon_0}\max_{t\in T}g(\epsilon,t)\leq  \lim_{\epsilon\rightarrow\epsilon_0}\max_{t\in T}(f(t)+\Delta)=\max_{t\in T}\lim_{\epsilon\rightarrow\epsilon_0}g(\epsilon,t)+\Delta.
\end{equation}
Since this holds for any $\Delta>0$, this completes the proof.
\end{proof}

\if0
\section{Numerical guarantee for constraints in error channels} \label{app:c_value}

\subsection{General single-qubit gates}
As we have denoted in the main text, mixed synthesis of $\Upsilon$ using compiled shifted unitaries $\{\Upsilon_j\} = \{\mathcal{A}(\mathcal{V}_{c\epsilon}^{(j)} \circ \Upsilon)\}_{j}$ with shift unitaries $\{\mathcal{V}_{c\epsilon}^{(j)}\}_j$ and synthesis algorithm $\mathcal{A}$ allows us to implement the compiled channel $\widetilde{\Upsilon} = \sum_j p_j \Upsilon_j$ such that the diamond norm between the target unitary is bounded as
\begin{equation}
    ((c-1)\epsilon)^2 \leq \frac{1}{2} \|\Upsilon - 
    \widetilde{\Upsilon}\|_{\diamond} \leq ((c+1)\epsilon)^2.
\end{equation}
We note that the effective error channel $\widetilde{\Upsilon}\circ \Upsilon^{-1}$ is guaranteed to be Pauli/depolarizing error if $\mathcal{A}(\hat{\Upsilon}) = \hat{\Upsilon};$ however, the synthesis alters the bases $\{\mathcal{V}_{c\epsilon}^{(j)}\circ \Upsilon\}_j$ so that there is no mathematical guarantee for the constraint to be satisfied.

Nonetheless, we here argue that it practically suffices to take relatively small $c$.
Concretely, we quantify the violation of the constraints using the following quantity:
\begin{equation}
    \xi(\mathcal{C}) = \sum_{i\neq j} \left|\left(\left[\mathcal{C}\right]_{\rm PTM}\right)_{i,j}\right|,
\end{equation}
where the element of the Pauli Transfer Matrix (PTM) of an $n$-qubit channel is defined as $\left(\left[ \mathcal{C}\right]_{\rm PTM}\right)_{i, j} = {\rm Tr}[P_i \mathcal{C}(P_j)]/2^n$ for $P_i, P_j\in \mathcal{P}$.
It is clear that $\xi(\widetilde{\Upsilon}\circ \Upsilon)$
denotes the L1 norm of the nondiagonal part of the Pauli Transfer Matrix (PTM) of the effective error channel, and therefore we refer to it as the violation factor.
As is shown in Fig.~\ref{fig:compile_error}(a) and (b), the median of the violation factor is zero up to machine precision when $c\gtrsim 2.5$ and $c\gtrsim 2$ for the depolarizing and Pauli errors, respectively.

Even when we take $c$ small so that the constraint is violated in a single attempt, we note that the synthesis algorithm $\mathcal{A}$ is in many cases stochastic (e.g. Ref.~\cite{ross2016optimal}), such that the set of compiled shifted unitaries $\{\mathcal{A}(\mathcal{V}_{c\epsilon}^{(j)}\circ \Upsilon)\}_j$ varies for individual run.
As a natural strategy, we may run the compilation multiple times until the violation factor $\xi$ is sufficiently small.
In Fig.~\ref{fig:compile_error}(d) and (e), we show the  repetition count $R$ required to realize a compilation that satisfies $\xi < 10^{-14}.$ We can see that, for $c\gtrsim 3$ a single-shot compilation is sufficient, whereas smaller $c$ require multiple (yet finite number of ) runs.
We remarkably find that the number of additional run $R-1$ decreases exponentially with $c$.
Considering that the shift factor $c$ affects the cost of the QEM with quartic dependence as $\gamma^2-1 = O(c^4\epsilon^4)$, it is highly desirable to take smaller $c$ as possible.
This implies that the total runtime varies by an order of magnitude depending on whether to allow multiple attempts or not. One may alternatively make use of the stochasticity of the synthesis algorithm $\mathcal{A}$ by extending the set of compiled shifted unitaries as well.


\subsection{Single Pauli rotation gates}
As we have mentioned in the main text, we avoid using the universal set of shift unitaries since it requires three Pauli rotations to represent the ideal shifted unitary $\hat{\Upsilon}_j = \mathcal{V}_{c\epsilon}^{(j)}\circ \Upsilon$.
Instead, we consider $J$ different values of rotation angles, and investigate how $J$ affects the soundness of the compilation.
As can be seen from Fig.~\ref{fig:compile_error}(c), the median of the violation factor $\xi$ is zero up to machine precision for $J\gtrsim 5$.
Unlike the case for general single-qubit gate, the success rate explicitly depends on the compilation accuracy $\epsilon.$ This results in a ten-fold difference in the repetition count $R$ between $\epsilon=10^{-3}$ and $10^{-6}$.
The scaling of $R-1$ seems to depend on $J$ as $O(J^{-2})$ for the current setup, while we leave it to future work whether we may mathematically derive such a bound.

\begin{figure}[tbp]
    \centering
    \includegraphics[width=0.95\linewidth]{figures/compile_error_analysis.pdf}
    \caption{Violation of constraints during mixed synthesis. We exhibit the violation factor $\xi$ for (a) general single-qubit gate intended to be compiled with depolarizing error, (b) general gate with Pauli error, and (c) $R_z$ gate with Pauli error.
    }\label{fig:compile_error}
\end{figure}

\section{Conjugation technique for Pauli rotation}\label{app:conjugation}
Here we describe the conjugation technique that significantly enhances the practical performance of mixed synthesis for unitaries that obey some symmetry.
For simplicity, let us describe the case with $R_y$ gate that ideally only involves real values as
\begin{equation}
    \hat{U}_{\rm RY}(\theta) = \begin{pmatrix}
        \cos \frac{\theta}{2} & -\sin \frac{\theta}{2}\\
        \sin \frac{\theta}{2} & \cos \frac{\theta}{2}
    \end{pmatrix}.
\end{equation}
After performing the synthesis algorithm proposed by Ross and Selinger, we obtain a sequence of single-qubit unitaries that is constituted from $H$, $X$, $S$, and $T$ gates.
Although the compiled unitary $U_{\rm RY} = \mathcal{A}(\hat{U}_{\rm RY})$ in general violates the mirror symmetry with respect to the Y axis of the Bloch sphere, i.e.,  $U_{\rm RY} \neq U_{\rm RY}^*$, we may generate another unitary $U'_{\rm RY}$ by replacing $S \rightarrow S^\dag$ and $T \rightarrow T^\dag$ such that $U'_{\rm RY} = U^*_{\rm RY}$.
Consequently, by adding $U'_{\rm RY}$ to the set of compiled shifted unitaries and enforcing so that $U_{\rm RY}$ and $U^*_{\rm RY}$ are compiled with equal weight, we can recover the mirror symmetry.

As we show in Fig.~\ref{fig:conjugation}, the conjugation technique results in a significant suppression of the effective error channel.
It is remarkable that we can successfully achieve quadratic suppression for any $J\geq 2$ with the conjugation technique, where as it requires $J\gtrsim 15$ without such a useful prescription.
We can see that the technique is crucial, since the compilation without it only shows constant-factor improvement over the usual coherent compilation (denoted as ``Det" in Fig.~\ref{fig:conjugation}).

\begin{figure}[tbp]
    \centering
    \includegraphics[width=0.95\linewidth]{figures/conjugation.pdf}
    \caption{Effect of conjugation technique for $R_z$ gate with effective channel intended to be Pauli error. The filled and unfilled data points denote results with and without the conjugation technique. (a) Suppression of half diamond norm under coherent compilation of accuracy $\epsilon$. (b) The ratio of the half diamond norm with and without conjugation technique.
    }\label{fig:conjugation}
\end{figure}

\section{Axial rotation gates}
\begin{thm}
Suppose $\epsilon\in(0,1/\sqrt{2}]$.
For any target $R_Z(\theta)=diag(e^{-i\frac{\theta}{2}},e^{i\frac{\theta}{2}})$, if we compile two shift unitaries $R_Z(\theta\pm\Delta)$ with $\Delta=2\arcsin(\epsilon)$ and obtain two compiled gates $U_+$ and $U_-$ such that $\frac{1}{2}\diamondnorm{\mathcal{R}_Z(\theta\pm\Delta)-\Upsilon_\pm}\leq\epsilon$,
there exists a set $\{U_j\}_{j=1}^8$ of $8$ compiled gates and a probability distribution $\{\hat{p}_j\}_j$ such that the $T$-count of $U_j$ is upper bounded by that of $U_\pm$,
\begin{equation}
\label{eq:probRz}
 \frac{1}{2}\diamondnorm{\mathcal{R}_Z(\theta)-\sum_{j=1}^8\hat{p}_j\Upsilon_j}\leq4\epsilon^2,
\end{equation}
and $\sum_j\hat{p}_j\Upsilon_j\circ\mathcal{R}_Z(\theta)^{-1}$ is a Pauli channel, i.e., the mixed synthesis induces a Pauli error.
\end{thm}
\begin{proof}
Set $\{U_j\}_{j=1}^4$ and $\{U_j\}_{j=5}^8$ as $\{U_+,ZU_+ Z,SU_+S^\dag,ZSU_+S^\dag Z\}$ and
$\{U_-,ZU_- Z,SU_-S^\dag,ZSU_-S^\dag Z\}$, respectively.
Since both error unitary transformations $\Upsilon_j\circ\mathcal{R}_Z(\theta)^{-1}$ are $(2\epsilon)$-close to the identity channel in terms of the diamond distance, $\sqrt{1-\left(r_1^{(j)}\right)^2}\leq2\epsilon$ for all $j$. By using Eq.~\eqref{eq:Paulierrorsimple}, this implies Eq.~\eqref{eq:probRz} if $\sum_j\hat{p}_j\Upsilon_j\circ\mathcal{R}_Z(\theta)^{-1}$ is a Pauli channel. For
\begin{equation}
\frac{1}{2}\diamondnorm{\mathcal{R}_Z(\theta)-\sum_j\hat{p}_j\Upsilon_j}
=\frac{1}{2}\diamondnorm{id-\sum_j\hat{p}_j\Upsilon_j\circ\mathcal{R}_Z(\theta)^{-1}}
=\sum_j\hat{p}_j\left(1-\left(r^{(j)}_1\right)^2\right)
\leq4\epsilon^2.
\end{equation}

We will show that $\sum_j\hat{p}_j\Upsilon_j\circ\mathcal{R}_Z(\theta)^{-1}$ is a Pauli channel in the following.
Since $\frac{1}{2}\diamondnorm{\mathcal{R}_Z(\theta\pm\Delta)-\Upsilon_\pm}\leq\epsilon$ implies that $|s^{(\pm)}\cdot r^{(\pm)}|\geq\sqrt{1-\epsilon^2}$ with $s^{(\pm)}=(\sqrt{1-\epsilon^2},\mp\epsilon,0,0)^T$, we can verify that $r^{(1)}_1r^{(1)}_2=r^{(+)}_1r^{(+)}_2\leq0$ and $r^{(5)}_1r^{(5)}_2=r^{(-)}_1r^{(-)}_2\geq0$.
Set $\hat{p}_j=\frac{q}{4}$ for $j\in\{1,2,3,4\}$ and $\hat{p}_j=\frac{1-q}{4}$ for $j\in\{5,6,7,8\}$.
By observing that
\begin{equation}
r^{(1)}=
\begin{pmatrix}
r^{(1)}_1\\
r^{(1)}_2\\
r^{(1)}_3\\
r^{(1)}_4
\end{pmatrix},\ 
r^{(2)}=
\begin{pmatrix}
r^{(1)}_1\\
r^{(1)}_2\\
-r^{(1)}_3\\
-r^{(1)}_4
\end{pmatrix},\ 
r^{(3)}=
\begin{pmatrix}
r^{(1)}_1\\
r^{(1)}_2\\
-r^{(1)}_4\\
r^{(1)}_3
\end{pmatrix},\ 
r^{(4)}=
\begin{pmatrix}
r^{(1)}_1\\
r^{(1)}_2\\
r^{(1)}_4\\
-r^{(1)}_3
\end{pmatrix},
\end{equation}
we can verify
\begin{equation}
\label{eq:Rz+}
\left[\mathJ\left(\sum_{j=1}^4\hat{p}_j\Upsilon_j\circ\mathcal{R}_Z(\theta)^{-1}\right)\right]_{MB}=q
\begin{pmatrix}
\left(r^{(1)}_1\right)^2&r^{(1)}_1r^{(1)}_2&0&0\\
r^{(1)}_1r^{(1)}_2&\left(r^{(1)}_2\right)^2&0&0\\
0&0&\frac{\left(r^{(1)}_3\right)^2+\left(r^{(1)}_4\right)^2}{2}&0\\
0&0&0&\frac{\left(r^{(1)}_3\right)^2+\left(r^{(1)}_4\right)^2}{2}
\end{pmatrix}.
\end{equation}
By using a similar argument, we can verify
\begin{equation}
\label{eq:Rz-}
\left[\mathJ\left(\sum_{j=5}^8\hat{p}_j\Upsilon_j\circ\mathcal{R}_Z(\theta)^{-1}\right)\right]_{MB}=(1-q)
\begin{pmatrix}
\left(r^{(5)}_1\right)^2&r^{(5)}_1r^{(5)}_2&0&0\\
r^{(5)}_1r^{(5)}_2&\left(r^{(5)}_2\right)^2&0&0\\
0&0&\frac{\left(r^{(5)}_3\right)^2+\left(r^{(5)}_4\right)^2}{2}&0\\
0&0&0&\frac{\left(r^{(5)}_3\right)^2+\left(r^{(5)}_4\right)^2}{2}
\end{pmatrix}.
\end{equation}

Since $r^{(1)}_1r^{(1)}_2\leq0$ and $r^{(5)}_1r^{(5)}_2\geq0$, there exists $q\in[0,1]$ such that the summation of Eq.~\eqref{eq:Rz+} and Eq.~\eqref{eq:Rz-} is diagonal.
This completes the proof.
\end{proof}
\fi

\end{document}